\newcommand{\lyxmathsym}[1]{\ifmmode\begingroup\def\b@ld{bold}
  \text{\ifx\math@version\b@ld\bfseries\fi#1}\endgroup\else#1\fi}
\providecommand{\tabularnewline}{\\}
\numberwithin{equation}{section}
\numberwithin{figure}{section}
\theoremstyle{plain}
\newtheorem{thm}{\protect\theoremname}
\theoremstyle{definition}
\newtheorem{example}[thm]{\protect\examplename}
\theoremstyle{definition}
\newtheorem{defn}[thm]{\protect\definitionname}
\theoremstyle{plain}
\newtheorem{prop}[thm]{\protect\propositionname}
\theoremstyle{plain}
\newtheorem{lem}[thm]{\protect\lemmaname}
\theoremstyle{remark}
\newtheorem{rem}[thm]{\protect\remarkname}
\theoremstyle{plain}
\newtheorem{cor}[thm]{\protect\corollaryname}
\providecommand{\corollaryname}{Corollary}
\providecommand{\definitionname}{Definition}
\providecommand{\examplename}{Example}
\providecommand{\lemmaname}{Lemma}
\providecommand{\propositionname}{Proposition}
\providecommand{\remarkname}{Remark}
\providecommand{\theoremname}{Theorem}
\begin{document}
\title{\textsf{gonosomic algebras : an extension of gonosomal algebras}}
\maketitle
\begin{center}
\textsc{R. Varro} \smallskip{}
\par\end{center}

\textbf{Abstract}.{\small{} Gonosomal algebras provide an algebraic
model for all sex-determination systems and a wide variety of
sex-related genetic phenomena, but they do not allow for the
modelling of partial or total genetic sterility. In this paper
we introduce gonosomic algebras which extend the notion of gonosomal
algebras and allow us to take into account genetic sterility.
Conditions under which gonosomic algebras are not gonosomal and
several algebraic constructions of these algebras are given and
illustrated with genetic examples. To each gonosomic algebra,
is associated a quadratic operator, noted $W$, that gives the
state of the offspring population at the birth stage. Next from
$W$ we define the operator $V$ which gives the relative frequency
distribution within a population. We show how the equilibrium
points and the various stability notions of equilibrium points
are preserved from $W$ to $V$.}{\small\par}

\smallskip{}

Mathematics Subject Classification (2010) : 17D92, 17D99\smallskip{}

{\small\textbf{Key words.}}{\small{} Gonosomal algebra, Gonosomic
algebra, duplication of algebra, Bisexual population, Genetic
sterility, Quadratic operator, Gonosomic operator, equilibrium
point, limit point.}{\small\par}

\section{Introduction}

In the animal and plant kingdoms, reproduction is a fundamental
process for the persistence and evolution of species. There is
an incredible diversity of strategies used for the reproduction.
We can distinguish two forms of reproduction: sexual and asexual.
Prevalence of sexual reproduction in eukaryotic multicellular
organisms (metazoa) is estimated at more than 99\%. Sexual reproduction
is characterised by an exchange of genetic material between two
organisms. In this exchange, the organism that gives is said
to be male and the one that receives is said to be female. This
results in a partition of the population into two classes called
sexes: males and females. In most bisexual species sex determination
systems are based on sex chromosomes also called gonosomes. According
to the sex chromosomes observed in a species we have five sex
determination systems: $XY$, $WZ$, $X0$, $Z0$ or $WXY$,
to which multiple variants must be added. These systems and the
genotypes that determine sex are summarised in the table below.\medskip{}

\begin{center}
\begin{tabular}{c|c|c|c|c|c|c|}
\multicolumn{2}{c|}{sex- determination systems} & $XY$ & $WZ$ & $X0$ & $Z0$ & $WXY$\tabularnewline
\hline 
\hline 
\multirow{2}{*}{Genotypes} & $\mbox{Female}$ & $XX$ & $WZ$ & $XX$ & $Z0$ & $XX,WX,WY$\tabularnewline
\cline{2-7}
 & $\mbox{Male}$ & $XY$ & $ZZ$ & $X0$ & $ZZ$ & $XY,YY$\tabularnewline
\end{tabular}
\par\end{center}

\medskip{}

The study of all these sex determination systems can be unified
by using the non-associative structures introduced by I.M.H.
Etherington \cite{Ether-40}. The first algebraic model was proposed
by Etherington \cite{Ether-41} for a gonosomal diallelic gene
in the $XY$-system, it was extended to diallelic case with mutation
in \cite{Gonsh-60}, to multiallelic case in \cite{Gonsh-65,WB-74,WB-75},
to a single multiallelic locus, completely or partially linked
to a sex determining locus \cite{Holg-70}. The second model
is due to Gonshor \cite{Gonsh-73} by introducing the concept
of sex-linked duplication. In \cite{LR-10} the authors introduced
a more general definition: the evolution algebras of a bisexual
population (\emph{EABP}). But several genetic situations are
not representable by \emph{EABP} which led to the introduction
of gonosomal algebras \cite{RV-15} which generalise \textit{EABP}. 

\label{def:Gonosomal} Given a commutative field $\mathbb{K}$
with characteristic not 2, a $\mathbb{K}$-algebra $A$ is \textit{gonosomal}
of type $\left(n,m\right)$ if it admits a basis $\left(e_{i}\right)_{1\text{\ensuremath{\le}}i\text{\ensuremath{\le}}n}\cup\left(\widetilde{e}_{j}\right)_{1\text{\ensuremath{\le}}j\text{\ensuremath{\le}}m}$
such that for all $1\leq i,j\leq n$ and $1\leq p,q\leq m$ we
have 
\begin{alignat*}{2}
i) &  & \quad e_{i}e_{j}=\widetilde{e}_{p}\widetilde{e}_{q} & =0,\\
ii) &  & e_{i}\widetilde{e}_{p}\;=\;\widetilde{e}_{p}e_{i} & =\sum_{k=1}^{n}\gamma_{ipk}e_{k}+\sum_{r=1}^{m}\widetilde{\gamma}_{ipr}\widetilde{e}_{r},\\
iii) &  & \quad\sum_{k=1}^{n}\gamma_{ipk}+\sum_{r=1}^{m}\widetilde{\gamma}_{ipr} & =1.
\end{alignat*}
The basis $\left(e_{i}\right)_{1\text{\ensuremath{\le}}i\text{\ensuremath{\le}}n}\cup\left(\widetilde{e}_{j}\right)_{1\text{\ensuremath{\le}}j\text{\ensuremath{\le}}m}$
is called a \textit{gonosomal basis} of $A$.

\medskip{}

As established in \cite{RV-15}, gonosomal algebras can represent
algebraically all sex determination systems ($XY$, $WZ$, $X0$,
$Z0$ and $WXY$) but also provide an algebraic representation
of a wide variety of sex-related genetic phenomena. This variety
has been illustrated in \cite{RV-15} by giving several constructions
of gonosomal algebras and by motivating their relevance through
nearly twenty genetic examples of sex-linked inheritance mechanisms
observed in bisexual populations. However, certain sex-related
situations, such as sterility, cannot be accurately modelled
using gonosomal algebras. To remedy this limitation, we extend
in this article, the definition of gonosomal algebras by introducing
gonosomic algebras.

\medskip{}

The text is organised into six sections. After this introduction,
in section 2 we give an example that does not verify the definition
of a gonosomal algebra, this leads to the definition of the gonosomic
algebra, next we give two criteria to determine under what conditions
a gonosomic algebra does not admit a gonosomal basis. In section
3, a characterisation by bilinear maps and some constructions
of this type of algebras are given and applied to genetic examples.
In section 4 we define the notion of duplication of a gonosomic
algebra which makes it possible to infer the genotypic structure
of a sex-linked gene from the haplotype structure of this gene.
Section 5 is devoted to the study of a quadratic operator, called
a gonosomic operator. Each of these operators is associated with
a discrete-time dynamic system that models the change in the
population sizes of the sex-linked gene types between successive
generations. In section 6, starting from a gonosomic operator
$W$, we define an operator $V$ on the set of relative frequency
distributions, called normalised gonosomic operator. We show
that there is a one to one correspondence between a set of fixed
points of $W$ satisfying certain conditions and the set of all
fixed points of $V$, it is also shown that the various stability
notions of equilibrium points are preserved from $W$ to $V$.

\section{Definition and basic properties of gonosomic algebras}

\subsection{Definition and first properties of gonosomic algebras.}

\textcompwordmark{}

We extend the definition of gonosomal algebra, this extension
finds its source in the following example. 
\begin{example}
\emph{\label{exa:male_infertility-1}Genetic male infertility. }

Some cases of male infertility in the XY-system are due to genetic
abnormalities on the Y chromosome \cite{Bla)25}. If we denote
by $Y^{*}$ the $Y$ chromosome carrying an abnormality causing
infertility and by $\mu$ the incidence rate of this abnormality.
With this, the breeding $XX\times XY^{*}$ is sterile and results
of the crosses are \medskip{}

\hfill{}%
\begin{tabular}{c|c|c|}
 & $XY$ & $XY^{*}$\tabularnewline
\hline 
\hline 
$XX$ & $\frac{1}{2}XX,\frac{1-\mu}{2}XY,\frac{\mu}{2}XY^{*}$ & $0$\tabularnewline
\end{tabular}\hfill{}

\medskip{}

Algebraically, we consider the $\mathbb{\mathbb{K}}$-algebra
$A$ with basis $\left(e_{1},\widetilde{e}_{1},\widetilde{e}_{2}\right)$
defined by $e_{1}\tilde{e}_{1}=\tilde{e}_{1}e_{1}=\frac{1}{2}e_{1}+\frac{1-\mu}{2}\tilde{e}_{1}+\frac{\mu}{2}\tilde{e}_{2}$
and all other products are zero. It is clear that with these
products and the correspondences $e_{1}\leftrightarrow XX,\widetilde{e}_{1}\leftrightarrow XY,\widetilde{e}_{2}\leftrightarrow XY^{*}$,
we obtain the results of the crosses.

In particular, we can notice that the products $e_{1}\widetilde{e}_{2}=0$
do not satisfy condition \textit{iii}) of the gonosomal algebra
definition. We will see in remark \ref{rem:male-infertility}
that this algebra is not gonosomal because it does not admit
a gonosomal basis.

\medskip{}
\end{example}

In the following we extend the definition of gonosomal algebra
to take into account the situation described in the example given
above. To do this, condition \textit{iii}) is dropped from the
definition of a gonosomal algebra.

\medskip{}

\begin{defn}
\label{def:Gonosomic-Alg} Given a commutative field $\mathbb{K}$
with characteristic $\neq2$, a $\mathbb{K}$-algebra $A$ is
a \textit{gonosomic algebra} if it admits a basis $\left(e_{i}\right)_{i\in I}\cup\left(\widetilde{e}_{j}\right)_{j\in J}$
called \textit{gonosomic basis}, such that for all $i,j\in I$
and $p,q\in J$ we have: 
\begin{eqnarray*}
e_{i}e_{j} & = & 0,\\
\widetilde{e}_{p}\widetilde{e}_{q} & = & 0,\\
e_{i}\widetilde{e}_{p}\;=\;\widetilde{e}_{p}e_{i} & = & \sum_{k\in I}\gamma_{ipk}e_{k}+\sum_{r\in J}\widetilde{\gamma}_{ipr}\widetilde{e}_{r}.
\end{eqnarray*}
\end{defn}

When the index sets $I$ and $J$ are finite, $I=\left\{ 1,\ldots,n\right\} $,
$J=\left\{ 1,\ldots,m\right\} $ and the structure constants
verify $\sum_{k=1}^{n}\gamma_{ipk}+\sum_{r=1}^{m}\widetilde{\gamma}_{ipr}=1$
for all $i\in I$ and $p\in J$, the definition of a gonosomic
algebra corresponds to that of a gonosomal algebra of type $\left(n,m\right)$.
Therefore, we can say that gonosomic algebras extend gonosomal
algebras.

We will say that a gonosomic algebra it is not gonosomal if it
does not admit a gonosomal basis.\medskip{}

From a genetic point of view, in this definition the vectors
$\left(e_{i}\right)_{i\in I}$ (resp. $\left(\widetilde{e}_{j}\right)_{j\in J}$)
are interpreted as genetic types observed in females (resp. males),
the structure constant $\gamma_{ipk}$ (resp. $\widetilde{\gamma}_{ipr}$
) represents the female (resp.male) absolute or relative frequency
of type $e_{k}$ (resp. $\widetilde{e}_{r}$) in the progeny
of a female type $e_{i}$ with a male type $\widetilde{e}_{p}$.\medskip{}

We can give an equivalent definition of a gonosomic algebra.
\begin{prop}
Let $A$ be A commutative $\mathbb{K}$-algebra, $A$ is gonosomic
if and only if there are two subspaces $U$ and $V$ of $A$
such that $A=U\oplus V$ and $U^{2}=V^{2}=\left\{ 0\right\} $. 
\end{prop}

\begin{proof}
Let $\left(u_{i}\right)_{i\in I}$ and $\left(v_{j}\right)_{j\in J}$
respectively the basis of $U$ and $V$. According to the assumptions
we have $u_{i}u_{j}=0$ for $i,j\in I$ and $v_{i}v_{j}=0$ for
any $i,j\in J$. For all $i\in I$ and $j\in J$ we have $u_{i}v_{j}\in U+V$
thus $u_{i}v_{j}=\sum_{k\in I}\alpha_{ijk}u_{k}+\sum_{k\in J}\beta_{ijk}v_{k}$,
it follows that $\left(u_{i}\right)_{i\in I}\cup\left(v_{j}\right)_{j\in J}$
is a gonosomic basis of $U\oplus V$. 
\end{proof}
A gonosomic basis is not unique. 
\begin{prop}
Given $A$ a $\mathbb{K}$-gonosomic algebra and $\left(e_{i}\right)_{i\in I}\cup\left(\widetilde{e}_{j}\right)_{j\in J}$
a gonosomic basis of $A$. For any automorphisms $\varphi$ and
$\widetilde{\varphi}$ respectively of the vector spaces $\text{span}\left(\left(e_{i}\right)_{i\in I}\right)$
and $\text{span}\bigl(\left(\widetilde{e}_{j}\right)_{j\in J}\bigr)$,
the basis $\left(\varphi\left(e_{i}\right)\right)_{i\in I}\cup\left(\widetilde{\varphi}\left(\widetilde{e}_{j}\right)\right)_{j\in J}$
is gonosomic. 
\end{prop}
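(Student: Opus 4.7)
The plan is to read the three defining relations of a gonosomic basis not as relations between specific basis vectors, but as the following intrinsic properties of the decomposition $A = E \oplus \widetilde{E}$, where $E = \text{span}\bigl((e_i)_{i\in I}\bigr)$ and $\widetilde{E} = \text{span}\bigl((\widetilde{e}_j)_{j\in J}\bigr)$: namely, (i) $E\cdot E = 0$, (ii) $\widetilde{E}\cdot\widetilde{E} = 0$, and (iii) $E\cdot\widetilde{E} \subseteq E+\widetilde{E}$ (which is of course trivial here since $E+\widetilde{E} = A$, but states exactly what the third line of Definition~\ref{def:Gonosomic-Alg} says once one feeds the image back through a basis). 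Since a vector space automorphism $\varphi$ of $E$ sends a basis of $E$ to a basis of $E$ (and likewise $\widetilde{\varphi}$ on $\widetilde{E}$), it will suffice to check that the three properties above, now expressed in the new basis $\bigl(\varphi(e_i)\bigr)_{i\in I}\cup\bigl(\widetilde{\varphi}(\widetilde{e}_j)\bigr)_{j\in J}$, fit the pattern of Definition~\ref{def:Gonosomic-Alg}.

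Concretely, I would proceed as follows. First, note that $\bigl(\varphi(e_i)\bigr)_{i\in I}\cup\bigl(\widetilde{\varphi}(\widetilde{e}_j)\bigr)_{j\in J}$ is a basis of $A$ because $\varphi$ and $\widetilde{\varphi}$ are bijective on $E$ and $\widetilde{E}$ respectively, and $A = E\oplus\widetilde{E}$. Next, writing $\varphi(e_i) = \sum_{k\in I} a_{ik}e_k$ and $\varphi(e_j) = \sum_{\ell\in I} a_{j\ell}e_\ell$, bilinearity and $e_k e_\ell = 0$ give $\varphi(e_i)\varphi(e_j) = 0$. The same argument with $\widetilde{\varphi}$ yields $\widetilde{\varphi}(\widetilde{e}_p)\widetilde{\varphi}(\widetilde{e}_q) = 0$.

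For the mixed product, write $\widetilde{\varphi}(\widetilde{e}_p) = \sum_{r\in J} b_{pr}\widetilde{e}_r$. By bilinearity,
\[
\varphi(e_i)\,\widetilde{\varphi}(\widetilde{e}_p) \;=\; \sum_{k\in I}\sum_{r\in J} a_{ik}b_{pr}\, e_k\widetilde{e}_r \;=\; \sum_{\ell\in I}\alpha_{ip\ell}\,e_\ell \;+\; \sum_{s\in J}\widetilde{\alpha}_{ips}\,\widetilde{e}_s
\]
for suitable scalars, by using the gonosomic relation for the original basis. The element $\sum_\ell \alpha_{ip\ell}e_\ell$ lies in $E$, and because $\varphi$ is an automorphism of $E$ the family $(\varphi(e_k))_{k\in I}$ is a basis of $E$, so this element can be re-expressed as $\sum_{k\in I}\gamma'_{ipk}\,\varphi(e_k)$; similarly $\sum_s \widetilde{\alpha}_{ips}\widetilde{e}_s = \sum_{r\in J}\widetilde{\gamma}'_{ipr}\widetilde{\varphi}(\widetilde{e}_r)$. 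Commutativity of the new mixed product, $\varphi(e_i)\widetilde{\varphi}(\widetilde{e}_p) = \widetilde{\varphi}(\widetilde{e}_p)\varphi(e_i)$, descends directly from bilinearity and the commutativity of the products $e_k\widetilde{e}_r = \widetilde{e}_r e_k$ in the original gonosomic basis.

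There is essentially no hard step here: the whole content lies in noticing that the gonosomic axioms depend only on the splitting $A = E\oplus\widetilde{E}$, not on the specific bases chosen inside $E$ and $\widetilde{E}$. The only point requiring a word of care is the last expansion in the new basis, where one uses that $\varphi$ and $\widetilde{\varphi}$ are bijective on their respective summands so that $(\varphi(e_i))$ and $(\widetilde{\varphi}(\widetilde{e}_j))$ really are bases of $E$ and $\widetilde{E}$; this is what lets one extract the structure constants $\gamma'_{ipk}$ and $\widetilde{\gamma}'_{ipr}$ with respect to the new basis.
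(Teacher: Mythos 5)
Your proof is correct and follows essentially the same route as the paper's: expand $\varphi(e_i)$ and $\widetilde{\varphi}(\widetilde{e}_p)$ in the old basis, use bilinearity to get the vanishing of the same\lyxmathsym{-}type products, and compute the mixed product. You are in fact slightly more careful than the paper at the final step, where you re\lyxmathsym{-}express the $E$\lyxmathsym{-}part and $\widetilde{E}$\lyxmathsym{-}part of the mixed product in the \emph{new} basis (using that $\varphi$ and $\widetilde{\varphi}$ are automorphisms of the two summands), a point the paper leaves implicit by stopping at coordinates in the old basis.
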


\begin{proof}
For every $i\in I$ and $p\in J$, let $\varphi\left(e_{i}\right)=\sum_{j\in I}\alpha_{ji}e_{j}$
and $\widetilde{\varphi}\left(\widetilde{e}_{p}\right)=\sum_{q\in I}\widetilde{\alpha}_{qp}\widetilde{e}_{q}$.
It is immediate that for all $i,j\in I$ and $p,q\in J$ we have
$\varphi\left(e_{i}\right)\varphi\left(e_{j}\right)=\widetilde{\varphi}\left(\widetilde{e}_{p}\right)\widetilde{\varphi}\left(\widetilde{e}_{q}\right)=0$.
Next we get $\varphi\left(e_{i}\right)\widetilde{\varphi}\left(\widetilde{e}_{p}\right)=\sum_{k\in I}\left(\sum_{j\in I,q\in J}\alpha_{ji}\gamma_{jqk}\widetilde{\alpha}_{qp}\right)e_{k}+\sum_{r\in J}\left(\sum_{j\in I,q\in J}\alpha_{ji}\widetilde{\gamma}_{jqk}\widetilde{\alpha}_{qp}\right)\widetilde{e}_{r}$. 
\end{proof}
\medskip{}

Throughout this paper for any integer $n\geq1$ we denote $N_{n}=\left\{ 1,\ldots,n\right\} $.

\medskip{}

A gonosomic algebra $A$ with finite gonosomic basis $\left(e_{i}\right)_{1\leq i\leq n}\cup\left(\widetilde{e}_{p}\right)_{1\leq p\leq m}$
is canonically equipped with the following linear form: 
\begin{equation}
\varpi:A\rightarrow\mathbb{\mathbb{K}},\quad\varpi\left(e_{i}\right)=\varpi\left(\widetilde{e}_{j}\right)=1.\label{eq:form_lin_pi_def-1}
\end{equation}

With this, for every $i\in N_{n}$ and $p\in N_{m}$ we get 
\begin{equation}
\varpi\left(e_{i}\widetilde{e}_{p}\right)=\sum_{k=1}^{n}\gamma_{ipk}+\sum_{r=1}^{m}\widetilde{\gamma}_{ipr}.\label{eq:omega(ei,ep)-1}
\end{equation}

\medskip{}

Let us introduce some notations that will be used in the proofs
of the following two theorems.
\begin{lem}
\label{lem:Notations}Let $A$ be a gonosomic $\mathbb{K}$-algebra
and $\left(e_{i}\right)_{1\leq i\leq n}\cup\left(\widetilde{e}_{p}\right)_{1\leq p\leq m}$
a gonosomic basis of $A$. If $A$ admits a gonosomal basis $\left(a_{i}\right)_{1\leq i\leq n}\cup\left(\widetilde{a}_{j}\right)_{1\leq j\leq m}$,
then for any $i\in N_{n}$ and $p\in N_{m}$ we have $e_{i}=\sum_{k=1}^{n}\alpha_{ik}a_{k}+\sum_{r=1}^{m}\beta_{ir}\widetilde{a}_{r}$
and $\widetilde{e}_{p}=\sum_{k=1}^{n}\widetilde{\alpha}_{pk}a_{k}+\sum_{r=1}^{m}\widetilde{\beta}_{pr}\widetilde{a}_{r}$.

For any $i\in N_{n}$ and $p\in N_{m}$ we have
\begin{equation}
\varpi\left(e_{i}\right)=\alpha_{i}+\beta_{i},\qquad\varpi\left(\widetilde{e}_{p}\right)=\widetilde{\alpha}_{p}+\widetilde{\beta}_{p}.\label{eq:omega(e_i,etild_p)}
\end{equation}
\begin{equation}
\varpi\left(e_{i}\widetilde{e}_{p}\right)=\alpha_{i}\widetilde{\beta}_{p}+\beta_{i}\widetilde{\alpha}_{p}.\label{eq:omega(e_i_X_etild_p)}
\end{equation}
\begin{equation}
\alpha_{i}\beta_{j}+\beta_{i}\alpha_{j}=0,\qquad\widetilde{\alpha}_{i}\widetilde{\beta}_{j}+\widetilde{\beta}_{i}\widetilde{\alpha}_{j}=0.\label{eq:omega_0}
\end{equation}

Where $\alpha_{i}=\sum_{k=1}^{n}\alpha_{ik}$, $\beta_{i}=\sum_{r=1}^{m}\beta_{ir}$,
as well as $\widetilde{\alpha}_{p}=\sum_{k=1}^{n}\widetilde{\alpha}_{pk}$
and $\widetilde{\beta}_{p}=\sum_{r=1}^{m}\widetilde{\beta}_{pr}$. 
\end{lem}

\begin{proof}
For any $i\in N_{n}$ and $p\in N_{m}$ we have $\varpi\left(a_{i}\right)=\varpi\left(\widetilde{a}_{p}\right)=1$,
so by applying the form $\varpi$ to the expressions giving $e_{i}$
and $\widetilde{e}_{p}$ according to the basis $\left(a_{i}\right)_{1\leq i\leq n}\cup\left(\widetilde{a}_{j}\right)_{1\leq j\leq m}$
we get (\ref{eq:omega(e_i,etild_p)}). 

Next, we have
\[
e_{i}\widetilde{e}_{p}=\sum_{k,r=1}^{n,m}\alpha_{ik}\widetilde{\beta}_{pr}a_{k}\widetilde{a}_{r}+\sum_{k,r=1}^{n,m}\beta_{ir}\widetilde{\alpha}_{pk}a_{r}\widetilde{a}_{k},
\]
after applying the form $\varpi$ to this relation and using
the fact that $\varpi\left(a_{k}\widetilde{a}_{r}\right)=1$
for all $k\in N_{n}$, $r\in N_{m}$, we get (\ref{eq:omega(e_i_X_etild_p)}).
In a similar way, from $e_{i}e_{j}=0$ and $\widetilde{e}_{i}\widetilde{e}_{j}=0$
we get (\ref{eq:omega_0}).
\end{proof}
\medskip{}

In what follows, we give necessary conditions under which a gonosomic
algebra does not admit a gonosomal basis. It follows that there
are gonosomic algebras that are not gonosomal, in other words
the class of gonosomal algebras is strictly included in that
of gonosomic algebras.
\begin{thm}
\label{thm:not_gonosomal_1}Let $A$ be a gonosomic $\mathbb{K}$-algebra
and $\left(e_{i}\right)_{1\leq i\leq n}\cup\left(\widetilde{e}_{p}\right)_{1\leq p\leq m}$
a gonosomic basis of $A$ with $n,m\geq2$. If it exists $i,j\in N_{n}$
and $p\in N_{m}$ such that $\varpi\left(e_{i}\widetilde{e}_{p}\right)=0$
and $\varpi\left(e_{j}\widetilde{e}_{p}\right)\neq0$, then the
algebra $A$ does not admit a gonosomal basis. 
\end{thm}

\begin{proof}
Suppose that $A$ has a gonosomal basis $\left(a_{i}\right)_{1\leq i\leq n}\cup\left(\widetilde{a}_{j}\right)_{1\leq j\leq m}$.
Using the notations of the lemma \ref{lem:Notations}, from $e_{i}^{2}=0$
and (\ref{eq:omega_0}) we get $\alpha_{i}\beta_{i}=0$, we have
not $\alpha_{i}=\beta_{i}=0$ otherwise according to (\ref{eq:omega(e_i,etild_p)})
we would have $\varpi\left(e_{i}\right)=0$, thus $\alpha_{i}\neq0$
and $\beta_{i}=0$ or $\alpha_{i}=0$ and $\beta_{i}\neq0$.
Suppose that $\alpha_{i}\neq0$ and $\beta_{i}=0$, then for
any $k\in N_{n}$ according to (\ref{eq:omega_0}) we have $\alpha_{i}\beta_{k}=0$
therefore $\beta_{k}=0$ for any $k\in N_{n}$. According to
(\ref{eq:omega(e_i_X_etild_p)}) we have $\varpi\left(e_{j}\widetilde{e}_{p}\right)=\alpha_{j}\widetilde{\beta}_{p}+\beta_{j}\widetilde{\alpha}_{p}$
with $\beta_{j}=0$ therefore $\alpha_{j}\widetilde{\beta}_{p}\neq0$
which implies $\widetilde{\beta}_{p}\neq0$. But we have $\varpi\left(e_{i}\widetilde{e}_{p}\right)=0$
with $\beta_{i}=0$ and according to (\ref{eq:omega(e_i_X_etild_p)})
we get $\alpha_{i}\widetilde{\beta}_{p}=0$ thus $\alpha_{i}=0$,
contradiction.

Assuming that $\alpha_{i}=0$ and $\beta_{i}\neq0$, the same
contradiction is obtained by exchanging letters $\alpha$ and
$\beta$.
\end{proof}
\medskip{}

It follows from theorem \ref{thm:not_gonosomal_1} that a necessary
condition for a gonosomic algebra not to be gonosomal is that
some crosses are totally sterile. 
\begin{rem}
\label{rem:male-infertility}In example \ref{exa:male_infertility-1},
because $\varpi\left(e_{1}\widetilde{e}_{1}\right)=1$ and $\varpi\left(e_{1}\widetilde{e}_{2}\right)=0$,
we deduce from this theorem that the gonosomic algebra given
in this example is not gonosomal.
\end{rem}

\begin{thm}
\label{thm:not_gonosomal_2}Let $A$ be a gonosomic $\mathbb{K}$-algebra
and $\left(e_{i}\right)_{1\leq i\leq n}\cup\left(\widetilde{e}_{p}\right)_{1\leq p\leq m}$
a gonosomic basis of $A$ such that $n,m\geq2$ and $\varpi\left(e_{k}\widetilde{e}_{r}\right)\neq0$
for any $k\in N_{n}$, $r\in N_{m}$. If it exists $i,j\in N_{n}$
and $p,q\in N_{m}$ such that $\det\left(\begin{array}{cc}
\varpi\left(e_{i}\widetilde{e}_{p}\right) & \varpi\left(e_{i}\widetilde{e}_{q}\right)\\
\varpi\left(e_{j}\widetilde{e}_{p}\right) & \varpi\left(e_{j}\widetilde{e}_{q}\right)
\end{array}\right)\neq0$, then the algebra $A$ does not admit a gonosomal basis. 
\end{thm}

\begin{proof}
To simplify the calculations, we set $\varpi\left(e_{i}\widetilde{e}_{p}\right)=\alpha$,
$\varpi\left(e_{j}\widetilde{e}_{q}\right)=\beta$, $\varpi\left(e_{i}\widetilde{e}_{q}\right)=\lambda$
and $\varpi\left(e_{j}\widetilde{e}_{p}\right)=\mu$.

If we assume that $A$ admits a gonosomal basis $\left(a_{i}\right)_{1\leq i\leq n}\cup\left(\widetilde{a}_{j}\right)_{1\leq j\leq m}$.
Using the lemma \ref{lem:Notations}, with (\ref{eq:omega_0})
we get $\varpi\left(e_{i}^{2}\right)=2\alpha_{i}\beta_{i}=0$
and with (\ref{eq:omega(ei,ep)-1}) and (\ref{eq:omega(e_i,etild_p)})
we deduce that $\left(\alpha_{i},\beta_{i}\right)\neq\left(0,0\right)$.
Suppose $\alpha_{i}\neq0$ and $\beta_{i}=0$, with (\ref{eq:omega(e_i_X_etild_p)})
we get $\varpi\left(e_{i}\widetilde{e}_{p}\right)=\alpha_{i}\widetilde{\beta}_{p}+\beta_{i}\widetilde{\alpha}_{p}=\alpha$
hence $\alpha_{i}\widetilde{\beta}_{p}=\alpha$ from this we
deduce that $\widetilde{\beta}_{p}\neq0$. In a similar way we
have $\varpi\left(e_{i}\widetilde{e}_{q}\right)=\alpha_{i}\widetilde{\beta}_{q}=\lambda$.
It follows that $\frac{\widetilde{\beta}_{q}}{\widetilde{\beta}_{p}}=\frac{\lambda}{\alpha}$
or $\widetilde{\beta}_{q}=\frac{\lambda}{\alpha}\widetilde{\beta}_{p}\;\left(*\right)$. 

From $\varpi\left(e_{i}e_{j}\right)=0$ according to (\ref{eq:omega_0})
and $\beta_{i}=0$ we get $\alpha_{i}\beta_{j}=0$, because $\alpha_{i}\neq0$
we have $\beta_{j}=0$, with this and (\ref{eq:omega(e_i_X_etild_p)})
we get $\varpi\left(e_{j}\widetilde{e}_{p}\right)=\alpha_{j}\widetilde{\beta}_{p}=\mu$
and $\varpi\left(e_{j}\widetilde{e}_{q}\right)=\alpha_{j}\widetilde{\beta}_{q}=\beta$,
because $\beta\neq0$ we have $\alpha_{j}\neq0$ and $\widetilde{\beta}_{q}\neq0$,
therefore $\frac{\widetilde{\beta}_{p}}{\widetilde{\beta}_{q}}=\frac{\mu}{\beta}$
or $\widetilde{\beta}_{p}=\frac{\mu}{\beta}\widetilde{\beta}_{q}\;\left(**\right)$. 

From $\left(*\right)$ and $\left(**\right)$ we deduce that
$\widetilde{\beta}_{q}=\frac{\lambda}{\alpha}\frac{\mu}{\beta}\widetilde{\beta}_{q}$
with $\widetilde{\beta}_{q}\neq0$, thus we have $\alpha\beta-\lambda\mu=0$,
in other words $\det\left(\begin{array}{cc}
\varpi\left(e_{i}\widetilde{e}_{p}\right) & \varpi\left(e_{i}\widetilde{e}_{q}\right)\\
\varpi\left(e_{j}\widetilde{e}_{p}\right) & \varpi\left(e_{j}\widetilde{e}_{q}\right)
\end{array}\right)=0$.

By a similar reasoning, the same result is obtained if we assume
$\alpha_{i}=0$ and $\beta_{i}\neq0$. 
\end{proof}
The following examples provide genetic applications of these
two theorems.

\smallskip{}

\begin{example}
\textit{\label{exa:bidirectional}Bidirectional cytoplasmic incompatibility.}

{\small\textit{\emph{Cytoplasmic incompatibility is}}}{\small{}
used in biological pest control. The bidirectional cytoplasmic
incompatibility is a mating incompatibility caused by parasites
that reside in the cytoplasm of germ cells (sperm and/or eggs)
\cite{Tur-22}. Bidirectional incompatibility is observed when
there are two types of parasites, the crossing between two organisms
infected with different types of parasites is sterile otherwise
it is fertile.}\vspace{2mm}

Consider the case of two types of parasites denoted 1 and 2 and
assuming that the cross between two organisms infected by different
parasites is sterile. Algebraically, let $f_{1},f_{2}$ (resp.
$m_{1},m_{2}$) be the types of infected females (resp. males),
$A$ the algebra defined on the basis $\left(f_{1},f_{2}.m_{1},m_{2}\right)$
by
\[
f_{1}m_{1}=m_{1}f_{1}=\frac{1}{2}f_{1}+\frac{1}{2}m_{1},\;f_{2}m_{2}=m_{2}f_{2}=\frac{1}{2}f_{2}+\frac{1}{2}m_{2},
\]
the other products being zero.

As we have $\varpi\left(f_{1}m_{2}\right)=0$, $\varpi\left(f_{1}m_{2}\right)=0$
and  $\varpi\left(f_{1}m_{1}\right)=0$, according to the theorem
\ref{thm:not_gonosomal_1} the algebra $A$ is not gonosomal. 
\end{example}

\smallskip{}

\begin{example}
\emph{\label{exa:Hybrid-dysgenesis}Hybrid dysgenesis in Drosophila
melanogaster.}

In the species \emph{D. melanogaster} there are two strains:
$M$ and $P$. When these strains are crossed at a temperature
of 28-29°C, the following results are observed:\medskip{}

\hspace{3cm}%
\begin{tabular}{c|c|c|}
 & $M\male$ & $P\male$\tabularnewline
\hline 
\begin{tabular}{c}
\tabularnewline
\tabularnewline
\end{tabular}$M\female$ & $\frac{1}{2}M\female,\frac{1}{2}M\male$ & $0$\tabularnewline
\hline 
\begin{tabular}{c}
\tabularnewline
\tabularnewline
\end{tabular}$P\female$ & $\frac{1}{2}P\female,\frac{1}{2}P\male$ & $\frac{1}{2}P\female,\frac{1}{2}P\male$\tabularnewline
\hline 
\end{tabular}\medskip{}

Algebraically, we consider the algebra with basis $\left(e_{i},\widetilde{e}_{i}\right)_{1\leq i\leq2}$
defined by: $e_{1}\widetilde{e}_{1}=\frac{1}{2}e_{1}+\frac{1}{2}\widetilde{e}_{1}$;
$e_{2}\widetilde{e}_{2}=\frac{1}{2}e_{2}+\frac{1}{2}\widetilde{e}_{2}$;
$e_{1}\widetilde{e}_{2}=0$; $e_{2}\widetilde{e}_{1}=\frac{1}{2}e_{2}+\frac{1}{2}\widetilde{e}_{2}$,
the other products being zeros. By using in these products the
following correspondences $e_{1}\leftrightarrow M\female$; $e_{2}\leftrightarrow P\female$;
$\widetilde{e}_{1}\leftrightarrow M\male$; $\widetilde{e}_{2}\leftrightarrow P\male$
we find all the results of the crosses given in the table above.
Furthermore, as $\varpi\left(e_{1}\widetilde{e}_{2}\right)=0$
and $\varpi\left(e_{1}\widetilde{e}_{1}\right)=1$, according
to the theorem \ref{thm:not_gonosomal_1} this algebra is not
gonosomal. 
\end{example}

\medskip{}

\begin{example}
\textit{\label{exa:partial infertility}Bisexual population with
partial sterility}.

In the XY-system, chromosome aberrations reduce the number of
viable gametes and hence reduce fertility which results in partial
sterility \cite{Luk-20}. 

If we denote by $X^{*}$ and $Y^{*}$ respectively a female and
a male having reduced fertility. If $0\leq\mu,\nu,\tau\leq1$
denote sterility rates, we have the crossbreeding table:
\begin{alignat*}{2}
X\times Y & \rightarrowtail\tfrac{1}{2}X,\tfrac{1}{2}Y; & \hspace{1cm}X\times Y^{*} & \rightarrowtail\tfrac{1-\mu}{2}X,\tfrac{1-\mu}{2}Y^{*};\\
X^{*}\times Y & \rightarrowtail\tfrac{1-\nu}{2}X^{*},\tfrac{1-\nu}{2}Y; & X^{*}\times Y^{*} & \rightarrowtail\tfrac{1-\tau}{2}X^{*},\tfrac{1-\tau}{2}Y^{*};
\end{alignat*}
In particular, if $\mu=0$, the cross $X\times Y^{*}$ is fertile,
and if $\mu=1$, the cross $X\times Y^{*}$ is completely sterile. 

Given the correspondences $e_{1}\leftrightarrow X,e_{2}\leftrightarrow X^{*},\widetilde{e}_{1}\leftrightarrow Y,\widetilde{e}_{2}\leftrightarrow Y^{*}$,
we introduce the commutative $\mathbb{\mathbb{K}}$-algebra $A$
with basis $\left(e_{1},e_{2},\widetilde{e}_{1},\widetilde{e}_{2}\right)$
defined by 
\begin{alignat*}{2}
e_{1}\tilde{e}_{1} & =\tfrac{1}{2}e_{1}+\tfrac{1}{2}\tilde{e}_{1}; & \hspace{1cm}e_{1}\tilde{e}_{2} & =\tfrac{1-\mu}{2}e_{1}+\tfrac{1-\mu}{2}\tilde{e}_{2};\\
e_{2}\tilde{e}_{1} & =\tfrac{1-\nu}{2}e_{2}+\tfrac{1-\nu}{2}\tilde{e}_{1}; & e_{2}\tilde{e}_{2} & =\tfrac{1-\tau}{2}e_{2}+\tfrac{1-\tau}{2}\tilde{e}_{2}.
\end{alignat*}
This algebra $A$ is gonosomic and if the rates $\mu,\nu,\tau$
verify $\left(1-\mu\right)\left(1-\nu\right)\neq\left(1-\tau\right)$,
then according to theorem \ref{thm:not_gonosomal_2} the algebra
$A$ is not gonosomal.
\end{example}

\medskip{}

\section{Characterisation by bilinear maps and some construction of gonosomic
algebras.}

The following result gives a characterisation of gonosomic algebras
using bilinear maps. 
\begin{thm}
\label{thm:Caract}A $\mathbb{K}$-algebra $A$ is gonosomic
if and only if $A$ is isomorphic to the $\mathbb{K}$-algebra
$B\times\widetilde{B}$ defined by 
\[
\left(x,y\right)\left(x',y'\right)=\left(b\left(x,y'\right)+b\left(x',y\right),\widetilde{b}\left(x,y'\right)+\widetilde{b}\left(x',y\right)\right),
\]
where $B$, $\widetilde{B}$ are two $\mathbb{K}$-vector spaces
; $b:B\times\widetilde{B}\rightarrow B$ and $\widetilde{b}:B\times\widetilde{B}\rightarrow\widetilde{B}$
are two bilinear maps. 
\end{thm}

\begin{proof}
Let $\left(a_{i}\right)_{i\in I}$ and $\left(\widetilde{a}_{p}\right)_{p\in J}$
be respectively a basis of $B$ and $\widetilde{B}$, given $b\left(a_{i},\widetilde{a}_{p}\right)=\sum_{k\in I}\gamma_{ipk}a_{k}$
and $\widetilde{b}\left(a_{i},\widetilde{a}_{p}\right)=\sum_{q\in J}\widetilde{\gamma}_{ipq}\widetilde{a}_{q}$.
For all $i,j\in I$ and $p,q\in J$, from the definition we have
$\left(a_{i},0\right)\left(a_{j},0\right)=0$, $\left(0,\widetilde{a}_{p}\right)\left(0,\widetilde{a}_{q}\right)=0$
and $\left(a_{i},0\right)\left(0,\widetilde{a}_{p}\right)=\left(b\left(a_{i},\widetilde{a}_{p}\right),\widetilde{b}\left(a_{i},\widetilde{a}_{p}\right)\right)=\sum_{k\in I}\gamma_{ipk}\left(a_{k},0\right)+\sum_{q\in J}\widetilde{\gamma}_{ipq}\left(0,\widetilde{a}_{q}\right)$.
It follows from this that if $A$ is a gonosomic algebra with
a gonosomic basis $\left(e_{i}\right)_{i\in I}\cup\left(\widetilde{e}_{p}\right)_{p\in J}$,
the linear map $\left(a_{i},0\right)\mapsto e_{i}$ , $\left(0,\widetilde{a}_{p}\right)\mapsto\widetilde{e}_{p}$
is an algebra isomorphism between $B\times\widetilde{B}$ and
$A$. 
\end{proof}
The following two examples show that this result is very useful
for easily formulating algebraic models of sex-linked genetic
situations.
\begin{example}
\textit{Haplodiploidy}

{\small Haplodiploidy is a sex-determination system of type XX-X0
which controls the sex in bees, ants and wasps. In this system
females develop from fertilised eggs and are diploid and male
come from unfertilised eggs and are haploid. These insects live
in colonies that have only one fully fertile female called the
queen, which is fertilised only once by several males. After
mating, fertilised females store sperm in a spermatheca. The
fertilised female controls the release of sperm: when an egg
descends into the oviduct, she releases sperm to fertilise it.
Through this mechanism, a queen can modify the sex ratio in the
colony.}{\small\par}

\vspace{2mm}

Consider a colony of haplodiploid organisms, let $a_{1},\ldots,a_{n}$
be the haplotypes present in this colony and $a,b\in\left\{ a_{1},\ldots,a_{n}\right\} $.
The queen's genotype is denoted by $a\otimes b$, the genotypes
of the females by $a\otimes a_{i}$ and $b\otimes a_{i}$, and
the genotypes of the males by $a$, $b$, $a_{i}$. Let $B,\widetilde{B}$
be vector spaces, with basis $\left(a\otimes b,a\otimes a_{i},b\otimes a_{i}\right)_{1\leq i\leq n}$
a basis of $B$ and $\left(a_{i}\right)_{1\leq i\leq n}$ a basis
of $\widetilde{B}$. We denote by $\theta$ the frequency of
females in the colony and we define on $B\times\widetilde{B}$
the symmetric bilinear maps $b$ and $\widetilde{b}$ :\vspace{2mm}

\hfill{}%
\begin{tabular}{ll}
$b\left(a\otimes b,a_{i}\right)=\frac{\theta}{2}a\otimes a_{i}+\frac{\theta}{2}b\otimes a_{i}$, & $b\left(a\otimes b,a\otimes a_{i}\right)=b\left(a\otimes b,b\otimes a_{i}\right)=0$,\tabularnewline
$\widetilde{b}\left(a\otimes b,a_{i}\right)=\frac{1-\theta}{2}a+\frac{1-\theta}{2}b$, & $\widetilde{b}\left(a\otimes a_{j},a_{i}\right)=\widetilde{b}\left(b\otimes a_{j},a_{i}\right)=0$.\tabularnewline
\end{tabular}\hfill{}\vspace{2mm}

According to theorem \ref{thm:Caract}, the algebra $B\times\widetilde{B}$
is gonosomic and give the frequency distribution of haplotypes
in a clutch.
\end{example}

\medskip{}

\begin{example}
\textit{Sex determination in }\emph{Caenorhabditis elegans.}

{\small The nematode }{\small\emph{Caenorhabditis elegans}}{\small{}
is a small, transparent worm about one millimetre long. }{\small\emph{C.
elegans}}{\small{} has a special reproductive system because it
has two sexes: hermaphrodite and male, therefore it can reproduce
by self-fertilisation or by cross-fertilisation. In }{\small\emph{C.
elegans}}{\small , sex is determined by the X0 system: hermaphrodite
(XX) and male (X0). When a hermaphrodite reproduces by self-fertilisation,
due to the non-disjunction of the X chromosomes, it produces
a proportion of 0.2\% males and when it crosses with a male it
gives birth to 50\% males. However, mutations in the }{\small\emph{him}}{\small{}
(high incidence of males) gene increase the proportion of males
by a factor ranging from 10 to 150.}\vspace{2mm}

Denoting by $\left(XX\right)_{\tau}$ the hermaphroditic individuals
that give birth to a proportion $\tau$ of males during self-fertilisation,
we therefore have
\begin{align*}
\left(XX\right)_{\tau}\times\left(XX\right)_{\sigma} & \rightarrowtail\begin{cases}
\left(1-\tau\right)\left(XX\right)_{\tau}+\tau X0 & \text{if }\tau=\sigma\\
0 & \text{if }\tau\neq\sigma
\end{cases},\\
\left(XX\right)_{\tau}\times X0 & \rightarrowtail\tfrac{1}{2}\left(XX\right)_{\tau}+\tfrac{1}{2}X0.
\end{align*}
Algebraically, we consider the $\mathbb{R}$-vector spaces $B$
and $\widetilde{B}$ with respective bases $\left\{ e_{\tau};\tau\in I\right\} $
and $\left\{ \widetilde{e},\widetilde{e}_{\tau};\tau\in I\right\} $
where $I=\left[0,1\right]$. Let $0\leq\theta\leq1$ be the reproduction
rate by self-fertilisation. We define on $B\times\widetilde{B}$
the symmetric bilinear maps $b$ and $\widetilde{b}$:\vspace{2mm}

\hfill{}%
\begin{tabular}{lll}
$b\left(e_{\tau},\widetilde{e}_{\tau}\right)=\theta\left(1-\tau\right)e_{\tau}$, & $b\left(e_{\tau},\widetilde{e}\right)=\frac{1-\theta}{2}e_{\tau}$, & $b\left(e_{\tau},\widetilde{e}_{\sigma}\right)=0,\;\left(\tau\neq\sigma\right)$\tabularnewline
$\widetilde{b}\left(e_{\tau},\widetilde{e}_{\tau}\right)=\theta\tau\widetilde{e}_{\tau}$, & $\widetilde{b}\left(e_{\tau},\widetilde{e}\right)=\frac{1-\theta}{2}\widetilde{e}$, & $\widetilde{b}\left(e_{\tau},\widetilde{e}_{\sigma}\right)=0,\;\left(\tau\neq\sigma\right)$.\tabularnewline
\end{tabular}\hfill{}\vspace{2mm}

According to theorem \ref{thm:Caract}, the algebra $B\times\widetilde{B}$
is gonosomic. Let be $x=\sum_{\tau\in I}\alpha_{\tau}e_{\tau}+\widetilde{\alpha}\widetilde{e}+\sum_{\tau\in I}\widetilde{\alpha}_{\tau}\widetilde{e}_{\tau}$
and $y=\sum_{\tau\in I}\beta_{\tau}e_{\tau}+\widetilde{\beta}\widetilde{e}+\sum_{\tau\in I}\widetilde{\beta}_{\tau}\widetilde{e}_{\tau}$
the frequency distributions (absolute or relative) of the sexes
in two populations, then the product
\begin{align*}
xy & =\sum_{\tau\in I}\left(\theta\left(1-\tau\right)\left(\alpha_{\tau}\widetilde{\beta}_{\tau}+\widetilde{\alpha}_{\tau}\beta_{\tau}\right)+\frac{1-\theta}{2}\left(\alpha_{\tau}\widetilde{\beta}+\beta_{\tau}\widetilde{\alpha}\right)\right)e_{\tau}\\
 & +\frac{1-\theta}{2}\sum_{\tau\in I}\left(\alpha_{\tau}\widetilde{\beta}_{\tau}+\widetilde{\alpha}_{\tau}\beta_{\tau}\right)\widetilde{e}+\theta\sum_{\tau\in I}\tau\left(\alpha_{\tau}\widetilde{\beta}_{\tau}+\widetilde{\alpha}_{\tau}\beta_{\tau}\right)\widetilde{e}_{\tau}
\end{align*}

gives the distribution of \emph{C. elegans} types obtained after
crossing the two populations $x$ and $y$.
\end{example}

\begin{prop}
\label{prop:mixture}Construction by mixture of gonosomic algebras.

Let $A$ a $\mathbb{K}$-vector space provided with two algebra
laws $\mu_{1},\mu_{2}:A\times A\rightarrow A$, if the algebras
$\left(A,\mu_{1}\right)$ and $\left(A,\mu_{2}\right)$ are gonosomic
with gonosomic basis $\mathcal{B}=\left(e_{i}\right)_{i\in I}\cup\left(\widetilde{e}_{j}\right)_{j\in J}$,
then the space $A$ with the product 
\[
xy=\mu_{1}\left(x,y\right)+\mu_{2}\left(x,y\right)
\]
is a gonosomic algebra with $\mathcal{B}$ as gonosomic basis. 
\end{prop}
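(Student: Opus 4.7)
The plan is a direct verification of the three defining relations of Definition \ref{def:Gonosomic-Alg} for the mixed product on the basis $\mathcal{B}$. The key observation is that each of these relations — $e_ie_j=0$, $\widetilde{e}_p\widetilde{e}_q=0$, and the expansion of $e_i\widetilde{e}_p=\widetilde{e}_pe_i$ as a linear combination of the basis — is linear in the product law, so it survives under the formation of the linear combination $xy=\theta_1\mu_1(x,y)+\theta_2\mu_2(x,y)$.

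Concretely, I would proceed as follows. First, note that the mixed operation is bilinear on $A\times A$, hence defines an algebra law on $A$; it is moreover commutative because both $\mu_1$ and $\mu_2$ are commutative on the gonosomic basis $\mathcal{B}$ (in particular, $e_i\widetilde{e}_p=\widetilde{e}_pe_i$ in each $(A,\mu_\ell)$). Next, I would denote the structure constants of $(A,\mu_\ell)$ in $\mathcal{B}$ by $\gamma^{(\ell)}_{ipk}$ and $\widetilde{\gamma}^{(\ell)}_{ipr}$, so that
\[
\mu_\ell(e_i,\widetilde{e}_p)=\sum_{k\in I}\gamma^{(\ell)}_{ipk}e_k+\sum_{r\in J}\widetilde{\gamma}^{(\ell)}_{ipr}\widetilde{e}_r.
\]

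For the products of two female-type (resp.\ two male-type) generators, the identities $\mu_\ell(e_i,e_j)=0$ and $\mu_\ell(\widetilde{e}_p,\widetilde{e}_q)=0$ for $\ell=1,2$ immediately give $e_ie_j=\theta_1\cdot 0+\theta_2\cdot 0=0$ and $\widetilde{e}_p\widetilde{e}_q=0$ in the new algebra. For the mixed product I simply substitute and collect coefficients:
\[
e_i\widetilde{e}_p=\sum_{k\in I}\bigl(\theta_1\gamma^{(1)}_{ipk}+\theta_2\gamma^{(2)}_{ipk}\bigr)e_k+\sum_{r\in J}\bigl(\theta_1\widetilde{\gamma}^{(1)}_{ipr}+\theta_2\widetilde{\gamma}^{(2)}_{ipr}\bigr)\widetilde{e}_r,
\]
which is exactly the required expansion for the mixed product in the basis $\mathcal{B}$, with structure constants $\theta_1\gamma^{(1)}_{ipk}+\theta_2\gamma^{(2)}_{ipk}$ and $\theta_1\widetilde{\gamma}^{(1)}_{ipr}+\theta_2\widetilde{\gamma}^{(2)}_{ipr}$.

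There is no real obstacle: the statement is essentially the observation that the class of gonosomic algebras built on a \emph{fixed} basis $\mathcal{B}$ is stable under $K$-linear combinations of the multiplication, because Definition \ref{def:Gonosomic-Alg} imposes only linear conditions on the structure constants (no normalization $\sum\gamma+\sum\widetilde{\gamma}=1$ is required, in contrast to the gonosomal case). If one wished to be slicker, the same fact can be read off from Theorem \ref{thm:Caract}: writing each $(A,\mu_\ell)$ as an algebra of type $(B,\widetilde{B},b_\ell,\widetilde{b}_\ell)$ via the same splitting induced by $\mathcal{B}$, the mixture corresponds to the bilinear pair $(\theta_1 b_1+\theta_2 b_2,\theta_1\widetilde{b}_1+\theta_2\widetilde{b}_2)$, which again defines an algebra of type $(B,\widetilde{B},\cdot,\cdot)$ and is thus gonosomic.
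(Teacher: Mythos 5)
Your proof is correct and follows essentially the same route as the paper's: a direct check that the defining relations of Definition \ref{def:Gonosomic-Alg} are linear in the multiplication, so the mixed product has structure constants $\theta_{1}\gamma_{ipk}^{(1)}+\theta_{2}\gamma_{ipk}^{(2)}$ and $\theta_{1}\widetilde{\gamma}_{ipr}^{(1)}+\theta_{2}\widetilde{\gamma}_{ipr}^{(2)}$ on the same basis $\mathcal{B}$. The closing observation via Theorem \ref{thm:Caract} is a pleasant extra but not needed.
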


\begin{proof}
If for $r=1,2$ we have $\mu_{r}\left(e_{i},e_{j}\right)=0$,
then we get $e_{i}e_{j}=\theta_{1}\mu_{1}\left(e_{i},e_{j}\right)+\theta_{2}\mu_{2}\left(e_{i},e_{j}\right)=0$
for all $i,j\in I$. Similarly from $\mu_{r}\left(\widetilde{e}_{i},\widetilde{e}_{j}\right)=0$
we deduce $\widetilde{e}_{i}\widetilde{e}_{j}=0$ for all $i,j\in J$.
And if $\mu_{r}\left(e_{i},\widetilde{e}_{j}\right)=\sum_{k\in I}\gamma_{ijk}^{\left(r\right)}e_{k}+\sum_{p\in J}\widetilde{\gamma}_{ijp}^{\left(r\right)}\widetilde{e}_{p}$
for all $i\in I$, $j\in J$, thus we get $e_{i}\widetilde{e}_{j}=\sum_{k\in I}\left(\theta_{1}\gamma_{ijk}^{\left(1\right)}+\theta_{2}\gamma_{ijk}^{\left(2\right)}\right)e_{k}+\sum_{p\in J}\left(\theta_{1}\widetilde{\gamma}_{ijk}^{\left(1\right)}+\theta_{2}\widetilde{\gamma}_{ijk}^{\left(2\right)}\right)\widetilde{e}_{k}$. 
\end{proof}
Here are two genetic examples using this result.

\medskip{}

\begin{example}
\textit{Thermal infertility}

{\small Physiology and behaviour of insects depend on the ambient
temperature. For a given species, the thermal temperature lies
within a range. Outside this temperature range, the physiology
and behaviour of an insect are disrupted, for example, changes
in sex ratio, infertility and a decrease in the number of eggs
can be observed.} \vspace{2mm}

Let us note respectively $f$ and $m$ the female and male genetic
types, $\left[t_{1},t_{2}\right]$ the optimum temperature range.
Let $A$ be the vector space spanned by the basis $\left(f,m\right)$,
equipped with the two laws of algebras $\mu_{1}\left(f,m\right)=\frac{1}{2}f+\frac{1}{2}m$,
$\mu_{1}\left(f,f\right)=\mu_{1}\left(m,m\right)=0$ and $\mu_{2}\left(f,m\right)=\mu_{2}\left(m,f\right)=\sigma\left(\frac{1}{1+\rho}f+\frac{\rho}{1+\rho}m\right)$,
$\mu_{2}\left(f,f\right)=\mu_{2}\left(m,m\right)=0$ with $0\leq\rho,\sigma\leq1$
and where $\sigma$ is the infertility rate observed when the
temperature of the environment is not within the range $\left[t_{1},t_{2}\right]$.
If we denote by $\theta$ the probability that the temperature
of the environment is between $t_{1}$ and $t_{2}$ , so the
product $fm=\theta\mu_{1}\left(f,m\right)+\left(1-\theta\right)\mu_{2}\left(f,m\right)$
gives the frequency distribution of sexes according to the temperature
range. 
\end{example}

\medskip{}

\begin{example}
\textit{Xenoparity in} \textit{Mussor ibericus}

{\small Recently, an astonishing mode of reproduction was discovered
in }{\small\textit{Messor ibericus}}{\small{} ants, for which the
term }{\small\textit{xenoparity}}{\small{} was coined \cite{Juv-25}.
Like all ants, }{\small\textit{M. ibericus}}{\small{} is haplodiploid,
but in this species, queens give birth to male ants of two different
species: males of their own species to produce queens, and males
of another species (}{\small\textit{Messor structor}}{\small )
to produce workers. The queen gives birth to }{\small\textit{M.
ibericus}}{\small{} males from unfertilised eggs or to }{\small\textit{M.
ibericus}}{\small{} queens if the eggs are fertilised by }{\small\textit{M.
ibericus}}{\small{} males. To produce worker bees, the queen uses
the sperm from }{\small\textit{M. structor}}{\small{} males stored
in her spermatheca. When a spermatozoon from }{\small\textit{M.
structor}}{\small{} fertilises an egg, there are two possibilities:
either a diploid female is produced, which is therefore a hybrid
of }{\small\textit{M. ibericus/structor}}{\small , or the queen's
nuclear genome is eliminated, in which case a male of the species
}{\small\textit{M. structor}}{\small{} is produced.}{\small\par}

\vspace{2mm}

Algebraically, consider a colony of {\small\textit{M. ibericus}}
ants, let $a_{1},\ldots,a_{n}$ and $b_{1},\ldots,b_{m}$ be
the haplotypes present in this colony. The {\small\textit{M.
ibericus}} queen's genotype is denoted by $a\otimes a'$ where
$a,a'\in\left\{ a_{1},\ldots,a_{n}\right\} $, the genotypes
of {\small\textit{M. ibericus}} females (queens) by $a\otimes a_{i}$
and $a'\otimes a_{i}$, and the genotypes of the {\small\textit{M.
ibericus}} males by $a_{i}$. The genotypes of \textit{M. structor}
males are denoted by $b_{i}$ and of hybrid {\small\textit{M.
ibericus/structor}} females by $a\otimes b_{i}$ and $a'\otimes b_{i}$.
let $A$ be the vector space and $\left(a\otimes a',a\otimes a_{i},a'\otimes a_{i},a_{i}\right)_{1\leq i\leq n}\cup\left(a\otimes b_{j},a'\otimes b_{j},b_{j}\right)_{1\leq j\leq m}$
a basis of $A$. To apply proposition \ref{prop:mixture}, let
$0\leq\delta,\eta\leq1$ be respectively the frequencies of {\small\textit{M.
ibericus}} and \textit{M. structor} males and $0\leq\theta\leq1$
the frequency of mating between the queen {\small\textit{M. ibericus}}
and males {\small\textit{M. structor}}, we define on $A\times A$
multiplications $\mu_{1}$ and $\mu_{2}$ by
\begin{align*}
\mu_{1}\left(a\otimes a',a_{p}\right) & =\left(1-\theta\right)\left(\frac{1-\delta}{2}\left(a\otimes a_{p}+a'\otimes a_{p}\right)+\delta a_{p}\right),\\
\mu_{2}\left(a\otimes a',b_{j}\right) & =\theta\left(\frac{1-\eta}{2}\left(a\otimes b_{j}+a'\otimes b_{j}\right)+\eta b_{j}\right).
\end{align*}
\end{example}

\medskip{}

\begin{prop}
\label{prop:Prod_tens}Given $A$ a gonosomic $\mathbb{K}$-algebra
and $A_{1},\ldots,A_{n}$ not necessary commutative $\mathbb{K}$-algebras.
Let $G=A\otimes A_{1}\otimes\cdots\otimes A_{n}$ and $\Psi:G\rightarrow G$
a linear map. Then the vector space $G$ equipped with the law
\[
\left(x\otimes x_{1}\otimes\cdots\otimes x_{n}\right)\left(y\otimes x'_{1}\otimes\cdots\otimes x'_{n}\right)=\frac{1}{2^{n}}\Psi\left(xy\otimes\bigotimes_{i=1}^{n}\left(x_{i}x'_{i}+x'_{i}x_{i}\right)\right)
\]
is a gonosomic algebra. 
\end{prop}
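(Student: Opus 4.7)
The plan is to exhibit an explicit gonosomic basis of $G$ built from a gonosomic basis of $A$ tensored with arbitrary bases of the $A_{k}$'s, and then verify the three conditions of Definition \ref{def:Gonosomic-Alg} directly from the defining multiplication formula. The point is that all the gonosomic structure will be carried by the $A$-tensor factor; the remaining factors contribute only a (possibly non-commutative) product that is immediately symmetrized.

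First I would fix a gonosomic basis $\left(e_{i}\right)_{i\in I}\cup\left(\widetilde{e}_{j}\right)_{j\in J}$ of $A$ and, for each $k\in\{1,\ldots,n\}$, an arbitrary basis $\bigl(a_{s}^{(k)}\bigr)_{s\in I_{k}}$ of $A_{k}$. The natural candidate gonosomic basis of $G$ is then
\[
E_{i,s_{1},\ldots,s_{n}}:=e_{i}\otimes a_{s_{1}}^{(1)}\otimes\cdots\otimes a_{s_{n}}^{(n)},\qquad\widetilde{E}_{j,s_{1},\ldots,s_{n}}:=\widetilde{e}_{j}\otimes a_{s_{1}}^{(1)}\otimes\cdots\otimes a_{s_{n}}^{(n)},
\]
indexed respectively by $I\times I_{1}\times\cdots\times I_{n}$ and $J\times I_{1}\times\cdots\times I_{n}$, the partition being driven entirely by the $A$-factor.

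Next I would check the two nilpotency relations. Because the product in $G$ begins with the factor $xy$ evaluated inside the gonosomic algebra $A$, and because $e_{i}e_{i'}=0$ and $\widetilde{e}_{j}\widetilde{e}_{j'}=0$ hold in $A$, the tensor inside $\Psi$ vanishes, and linearity of $\Psi$ then gives $E\cdot E'=0$ and $\widetilde{E}\cdot\widetilde{E}'=0$. For the mixed product, I would expand $e_{i}\widetilde{e}_{p}=\sum_{k\in I}\gamma_{ipk}e_{k}+\sum_{r\in J}\widetilde{\gamma}_{ipr}\widetilde{e}_{r}$ via the gonosomal structure of $A$, then expand each symmetrized factor $a_{s_{k}}^{(k)}a_{s'_{k}}^{(k)}+a_{s'_{k}}^{(k)}a_{s_{k}}^{(k)}$ in the basis of $A_{k}$, and finally distribute the tensor. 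The argument of $\Psi$ becomes a linear combination of basis vectors $E_{k,\ldots}$ and $\widetilde{E}_{r,\ldots}$ of $G$; applying the linear map $\Psi$ and re-expanding yields exactly the required gonosomic expansion.

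The only delicate point, rather than a genuine obstacle, is the commutativity demanded by Definition \ref{def:Gonosomic-Alg}: even though the $A_{k}$ need not be commutative, the symmetrization $x_{k}x'_{k}+x'_{k}x_{k}$ is symmetric in its two arguments, and $e_{i}\widetilde{e}_{p}=\widetilde{e}_{p}e_{i}$ already holds in $A$ by its gonosomality. Consequently the product on $G$ is symmetric in $(x\otimes x_{1}\otimes\cdots\otimes x_{n},\,y\otimes x'_{1}\otimes\cdots\otimes x'_{n})$, so in particular $E\cdot\widetilde{E}=\widetilde{E}\cdot E$. Once this observation is made, the remainder is pure bookkeeping; the normalising factor $1/2^{n}$ plays no role in verifying the gonosomic axioms, and the linear map $\Psi$ serves only to guarantee that the image stays inside $G$, which is automatic.
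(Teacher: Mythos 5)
Your proof is correct and follows essentially the same route as the paper: both exhibit the candidate gonosomic basis by tensoring a gonosomic basis of $A$ with bases of the $A_{k}$'s and verify the three product relations directly from the defining formula, the only organizational difference being that the paper runs an induction on $n$ from the base case $n=1$ (via the isomorphism $A\otimes A_{1}\otimes\cdots\otimes A_{n+1}\approx\left(A\otimes A_{1}\otimes\cdots\otimes A_{n}\right)\otimes A_{n+1}$) while you treat general $n$ in a single pass. A minor point in your favour: you explicitly record that applying the linear map $\Psi$ to the resulting combination of basis vectors and re-expanding in the basis of $G$ still produces an expression of the form required by the definition, a step the paper's displayed computation for $n=1$ silently omits.
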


\begin{proof}
By induction on the integer $n$. For $n=1$, let $\left(e_{i}\right)_{i\in I}\cup\left(\widetilde{e}_{j}\right)_{j\in J}$
be a gonosomic basis of $A$ with $e_{i}\widetilde{e}_{j}=\sum_{k\in I}\gamma_{ijk}e_{k}+\sum_{p\in J}\widetilde{\gamma}_{ijp}\widetilde{e}_{p}$
and $\left(a_{u}\right)_{u\in U}$ a basis of $A_{1}$ with $a_{u}a_{v}=\sum_{w\in U}\lambda_{uvw}a_{w}$.
With this, for any $i,j\in I$ and $u,v\in U$ we get $\left(e_{i}\otimes a_{u}\right)\left(e_{j}\otimes a_{v}\right)=0\otimes a_{u}a_{v}=0$
and for all $i,j\in J$ and $u,v\in U$ we get $\left(\widetilde{e}_{i}\otimes a_{u}\right)\left(\widetilde{e}_{j}\otimes a_{v}\right)=0$.
Next for $i\in I$, $j\in J$ and $u,v\in U$ we get
\begin{align*}
\left(e_{i}\otimes a_{u}\right)\left(\widetilde{e}_{j}\otimes a_{v}\right) & =\frac{1}{2}\sum_{\left(k,w\right)\in I\times U}\gamma_{ijk}\left(\lambda_{uvw}+\lambda_{vuw}\right)e_{k}\otimes a_{w}+\\
 & \frac{1}{2}\sum_{\left(p,w\right)\in J\times U}\widetilde{\gamma}_{ijp}\left(\lambda_{uvw}+\lambda_{vuw}\right)\widetilde{e}_{p}\otimes a_{w}.
\end{align*}
Suppose the property true for an integer $n\geq1$, using the
isomorphism $A\otimes\bigotimes_{i=1}^{n+1}A_{i}\approx\left(A\otimes A_{1}\otimes\cdots\otimes A_{n}\right)\otimes A_{n+1}$
and the case $n=1$ we prove the result for $n+1$. 
\end{proof}
The previous proposition allows to represent algebraically the
inheritance of phenotypes which depend on several autosomal genes
and sex.
\begin{example}
\textit{Phenotypic difference according to sex}.

{\small A meta-analysis \cite{B-J-C-al} has shown that in the
human population, the autosomal genomes of men and women are
not significantly different, but in recent years it has become
clear that men and women are not equal when it comes to diseases.
Studies have shown that the incidence, severity or response to
treatment of cancers, cardiovascular, neurological or autoimmune
diseases are biased in favour of one sex or the other.}{\small\par}

We consider a phenotype in a bisexual population consisting of
diploid organisms. Let $\left\{ g_{1},\ldots,g_{m}\right\} $
be the set of autosomal genes controlling this phenotype and
for any $1\leq i\leq m$ let $g_{i}=\left\{ e_{i,1},\ldots,e_{i,k_{i}}\right\} $
be the set of alleles of the $g_{i}$ gene. The space $\text{span}\left(g_{i}\right)$
is equipped with the gametic algebra law $e_{i,p}e_{i,q}=\frac{1}{2}\left(e_{i,p}+e_{i,q}\right)$
from which the duplication $G_{i}=D\left(\text{span}\left(g_{i}\right)\right)$
define the zygotic algebra generated by $g_{i}$. For all $I,J\in\prod_{i=1}^{m}\left[\!\left[1,k_{i}\right]\!\right]$,
$I=\left(i_{1},\ldots,i_{m}\right)$, $J=\left(j_{1},\ldots,j_{m}\right)$
we note $e_{\left(I,J\right)}=\left(e_{1,i_{i}}\otimes e_{1,j_{i}}\right)\otimes\cdots\otimes\left(e_{m,i_{m}}\otimes e_{m,j_{m}}\right)$,
the family $\left(e_{\left(I,J\right)}\right)_{I,J}$ is therefore
a basis of genotype space $\bigotimes_{i=1}^{m}G_{i}$ .

Let $S$ be the gonosomic algebra defined on the basis $\left(f,m\right)$
by $f^{2}=m^{2}=0$ and $fm=mf=\frac{1}{2}f+\frac{1}{2}m$, then
$f\otimes e_{I,J}$ (resp. $m\otimes e_{i,J}$) represents a
female (resp. male) trait of the phenotype studied. We note $\pi\left(I,J\right)$
(resp. $\widetilde{\pi}\left(I,J\right)$) the prevalence, that
is to say the proportion of women (resp. men) presenting the
phenotype controlled by the genotype $e_{\left(I,I\right)}$.
\end{example}

Applying the proposition \ref{prop:Prod_tens} with $G=S\otimes\bigotimes_{i=1}^{m}G_{i}$
and $\Psi:G\rightarrow G$, $\Psi\left(f\otimes e_{\left(I,J\right)}\right)=\pi\left(I,J\right)f\otimes e_{\left(I,J\right)}$,
$\Psi\left(m\otimes e_{\left(I,J\right)}\right)=\widetilde{\pi}\left(I,J\right)m\otimes e_{\left(I,J\right)}$,
then for all $z,z'\in G$ the product $zz\lyxmathsym{‘}$ gives
the distribution of phenotypes in the offspring of a cross between
two individuals with phenotypes $z$ and $z\lyxmathsym{’}$.
\medskip{}

\section{duplication of gonosomic algebras.}

Let $A$ be a commutative $\mathbb{K}$-algebra, the \emph{non
commutative duplication} of $A$ is the space $A\otimes A$ and
the \emph{commutative duplication} of $A$ is the quotient space
of $A\otimes A$ by the ideal spanned by $\left\{ x\otimes y-y\otimes x;x,y\in A\right\} $.
Multiplication on non commutative and commutative duplication
is $\left(x\otimes y\right)\left(x'\otimes y'\right)=\left(xy\right)\otimes\left(x'y'\right)$.
The surjective morphism $\mu:A\otimes A\rightarrow A^{2}$, $x\otimes y\mapsto xy$
is called the \emph{Etherington morphism}. \medskip{}

The concept of duplication of a non associative algebra was introduced
by Etherington \cite{Ether-41a} to obtain the structure of genotypic
algebras from gametic algebras. Let us illustrate this with a
simple example. Let $e{{}_1}$ and $e{{}_2}$ be two alleles
of an autosomal gene in a population of diploid organisms. We
define the gametic algebra $A$ on the basis $\left(e_{1},e_{2}\right)$
by $e_{1}^{2}=e_{1}$, $e_{2}^{2}=e_{2}$ and $e_{1}e_{2}=\frac{1}{2}e_{1}+\frac{1}{2}e_{2}$.
The vectors $e_{i}\otimes e_{j},1\leq i\leq j\leq2$ of the basis
of the commutative duplication of $A$, represent the genotypes.
By definition of multiplication in the duplication, we have,
for example, $\left(e_{1}\otimes e_{2}\right)\left(e_{1}\otimes e_{2}\right)=\left(e_{1}e_{2}\right)\otimes\left(e_{1}e_{2}\right)=\frac{1}{4}e_{1}\otimes e_{1}+\frac{1}{2}e_{1}\otimes e_{2}+\frac{1}{4}e_{2}\otimes e_{2}$,
which corresponds to the distribution of genotypes given by Mendel's
second law.
\begin{rem}
The notion of duplication of an algebra allows us to interpret
the product of vectors from a genetic point of view. In genetic
algebra, it is customary to say that the product of two vectors
represents the crossing of two genetic types, so the product
$\left(e_{1}\otimes e_{2}\right)\left(e_{1}\otimes e_{2}\right)$
represents the crossbreeding of two organisms with genotype $e_{1}\otimes e_{2}$.
But using this interpretation, in gametic algebra, the product
$e_{i}e_{j}$ would represent the crossing of two gametes of
the same sex type, which biologically makes no sense. However,
with the Etherington morphism, from the relation $e_{i}e_{j}=\mu\left(e_{i}\otimes e_{j}\right)$
we deduce that the product $e_{i}e_{j}$ gives the distribution
of genetic types in the gametes produced by organisms of genotype
$e_{i}\otimes e_{j}$ and therefore the Etherington morphism
is interpreted as the gametogenesis operator.
\end{rem}

The definition of gonosomic algebra is not preserved by the duplication
of an algebra.
\begin{prop}
Let $A$ be a gonosomic $\mathbb{K}$-algebra, if $A^{2}\neq\left\{ 0\right\} $
then the duplication of $A$ is not gonosomic.
\end{prop}

\begin{proof}
Let $\left(e_{i}\right)_{i\in I}\cup\left(\widetilde{e}_{j}\right)_{j\in J}$
be a gonosomic basis of $A$. Suppose that the duplication $A\otimes A$
of $A$ is gonosomic, then for all $i\in I$ and $j\in J$ we
have $\left(e_{i}\otimes\widetilde{e}_{j}\right)^{2}=0$ thus
$\left(e_{i}\widetilde{e}_{j}\right)\otimes\left(e_{i}\widetilde{e}_{j}\right)=0$
which implies $e_{i}\widetilde{e}_{j}=0$ therefore $A^{2}=\left\{ 0\right\} $.
\end{proof}
In the following we denote by $D\left(A\right)$ the commutative
or not commutative duplication of an algebra $A$.
\begin{prop}
\label{prop:Duplicate}Duplication of a gonosomic algebra.

Let $A$ be a gonosomic $\mathbb{K}$-algebra and $\left(e_{i}\right)_{i\in I}\cup\left(\widetilde{e}_{j}\right)_{j\in J}$
a gonosomic basis of $A$. Let be $B=\text{span}\left(\left(e_{i}\right)_{i\in I}\right)$
and $\widetilde{B}=\text{span}\bigl(\left(\widetilde{e}_{j}\right)_{j\in J}\bigr)$.
If the vector space $B$ is equipped with an algebraic structure
by a law denoted $\bullet$, let $\mu,\widetilde{\mu}$ be the
morphisms defined by
\[
\mu:D\left(B\right)\rightarrow B^{2},\mu\left(x\otimes y\right)=x\bullet y\text{ and }\widetilde{\mu}:B\otimes\widetilde{B}\rightarrow A,\widetilde{\mu}\left(x\otimes y\right)=xy,
\]
given the linear maps $\varphi,\widetilde{\varphi}:B\rightarrow B$,
$\varphi':A\rightarrow B$ and $\widetilde{\varphi'}:A\rightarrow\widetilde{B}$.
Then the $\mathbb{K}$-vector space $D\left(B\right)\oplus B\otimes\widetilde{B}$
provided with the multiplication:
\begin{align*}
\left(x\oplus y\right)\left(x'\oplus y'\right)= & \left(\varphi\circ\mu\left(x\right)\otimes\varphi'\circ\widetilde{\mu}\left(y'\right)+\varphi\circ\mu\left(x'\right)\otimes\varphi'\circ\widetilde{\mu}\left(y\right)\right)\oplus\\
 & \quad\left(\widetilde{\varphi}\circ\mu\left(x\right)\otimes\widetilde{\varphi'}\circ\widetilde{\mu}\left(y'\right)+\widetilde{\varphi}\circ\mu\left(x'\right)\otimes\widetilde{\varphi'}\circ\widetilde{\mu}\left(y\right)\right)
\end{align*}
is a gonosomic algebra called the gonosomic duplication of the
gonosomic algebra $A$. 
\end{prop}

\begin{proof}
It is clear that for all $x,x'\in B$ we get $xx'=0$ and for
any $y,y'\in\widetilde{B}$ we get $yy'=0$. For all $i,j,k\in I$
and $p\in J$, we note $e_{\left(i,j\right)}=e_{i}\otimes e_{j}$
and $\widetilde{e}_{\left(i,p\right)}=e_{i}\otimes\widetilde{e}_{p}$,
given $\varphi\circ\mu\left(e_{\left(i,j\right)}\right)=\varphi\left(e_{i}\bullet e_{j}\right)=\sum_{u\in I}\lambda_{iju}e_{u}$,
$\widetilde{\varphi}\circ\mu\left(e_{\left(i,j\right)}\right)=\sum_{u\in I}\widetilde{\lambda}_{iju}e_{u}$
and $\varphi'\circ\widetilde{\mu}\left(\widetilde{e}_{\left(k,p\right)}\right)=\varphi'\left(e_{k}\widetilde{e}_{p}\right)=\sum_{v\in I}\xi_{kpv}e_{v}$,
$\widetilde{\varphi'}\circ\widetilde{\mu}\left(\widetilde{e}_{\left(k,p\right)}\right)=\sum_{q\in J}\widetilde{\xi}_{kpq}\widetilde{e}_{q}$
we have 
\begin{align*}
e_{\left(i,j\right)}\widetilde{e}_{\left(k,p\right)}=\left(e_{i}\otimes e_{j}\right)\left(e_{k}\otimes\widetilde{e}_{p}\right) & =\sum_{u,v\in I}\lambda_{iju}\xi_{kpv}e_{u}\otimes e_{v}+\sum_{\left(u,q\right)\in I\times J}\widetilde{\lambda}_{iju}\widetilde{\xi}_{kpq}e_{u}\otimes\widetilde{e}_{q}
\end{align*}
and noting $\gamma_{\left(i,j\right)\left(k,p\right)\left(u,v\right)}=\lambda_{iju}\xi_{kpv}$
and $\widetilde{\gamma}_{\left(i,j\right)\left(k,p\right)\left(u,q\right)}=\widetilde{\lambda}_{iju}\widetilde{\xi}_{kpq}$,
this can be written as 
\[
e_{\left(i,j\right)}\widetilde{e}_{\left(k,p\right)}=\sum_{u,v\in I}\gamma_{\left(i,j\right)\left(k,p\right)\left(u,v\right)}e_{\left(u,v\right)}+\sum_{\left(u,q\right)\in I\times J}\widetilde{\gamma}_{\left(i,j\right)\left(k,p\right)\left(u,q\right)}\widetilde{e}_{\left(u,q\right)}
\]
which proves that the space $D\left(B\right)\oplus B\otimes\widetilde{B}$
with this law is a gonosomic algebra. 
\end{proof}
The following examples illustrate the role of the maps $\mu,\widetilde{\mu},\varphi,\widetilde{\varphi},\varphi',\widetilde{\varphi}'$
in the duplication of a gonosomic algebra.
\begin{example}
\emph{Transmission of a X-linked multi allelic gene}.

This result is a good algebraic model of the reproduction of
diploid organisms in the XY-system. Consider a X-linked gene
with alleles $a_{1},\ldots,a_{n}$. Algebraically a maternal
genotype for this gene is $a_{i}\otimes a_{j}$ $\left(i\leq j\right)$
and a paternal genotype is $a_{p}\otimes Y$. We apply the proposition
\ref{prop:Duplicate} by taking $B=\text{span}\left(a_{1},\ldots,a_{n}\right)$
and $\widetilde{B}=\text{span}\left(Y\right)$, we define the
laws $a_{i}\bullet a_{j}=\frac{1}{2}a_{i}+\frac{1}{2}a_{j}$
and $a_{p}Y=\frac{1}{2}a_{p}+\frac{1}{2}Y$, which corresponds
to the meiosis results, the linear maps $\varphi=\widetilde{\varphi}=\text{id}_{B}$,
$\varphi',\widetilde{\varphi'}:B\oplus\widetilde{B}\rightarrow B\oplus\widetilde{B}$
such that $\varphi'\left(x\right)=\begin{cases}
x & \text{if }x\in B\\
0 & \text{if }x\in\widetilde{B}
\end{cases}$, $\widetilde{\varphi'}\left(x\right)=\begin{cases}
0 & \text{if }x\in B\\
x & \text{if }x\in\widetilde{B}
\end{cases}$ . Then on $B\otimes B\times B\otimes\widetilde{B}$ we get
\begin{align*}
\left(a_{i}\otimes a_{j},0\right)\left(0,a_{p}\otimes Y\right) & =\left(\left(\tfrac{1}{2}a_{i}+\tfrac{1}{2}a_{j}\right)\otimes\left(\tfrac{1}{2}a_{p}\right),\left(\tfrac{1}{2}a_{i}+\tfrac{1}{2}a_{j}\right)\otimes\left(\tfrac{1}{2}Y\right)\right)\\
 & =\left(\tfrac{1}{4}a_{i}\otimes a_{p}+\tfrac{1}{4}a_{j}\otimes a_{p},\tfrac{1}{4}a_{i}\otimes Y+\tfrac{1}{4}a_{j}\otimes Y\right)
\end{align*}
which gives the distribution of genotypes according to sex in
the offspring after crossing between a female of genotype $a_{i}a_{j}$
and a male $a_{p}Y$.
\end{example}

\medskip{}

\begin{example}
\emph{Transmission Ratio Distortion}

{\small Transmission ratio distortion (TRD) is a genetic phenomenon
induced by segregation distorters which are genes that bias Mendelian
segregation in their favour \cite{Dai-26}. In TRD one of the
two alleles from either parent is transmitted preferentially
to the offspring, this leads to a departure from the Mendelian
expected inheritance ratio 1/2. We distinguish two types of TRD
whether or not it depends on the sex of the parents.}{\small\par}

\medskip{}

Let $a{{}_1},\cdots,a_{n}$ be the alleles of an autosomal distorter
gene. For any $1\leq i\leq n$ we denote by $a_{i}$ (resp. $\widetilde{a}_{i}$)
the allele carried by a female (resp. male) gamete, let $A$
be the gonosomic algebra and $\left(a_{i}\right)_{i\in N_{n}}\cup\left(\widetilde{a}_{i}\right)_{i\in N_{n}}$
a gonosomic basis of $A$, $B=\text{span}\left(a{{}_1},\cdots,a_{n}\right)$
and $\widetilde{B}=\text{span}\left(\widetilde{a}{{}_1},\cdots,\widetilde{a}_{n}\right)$.
We take the commutative duplication $D\left(A\right)$ for the
space of female genotypes and $B\otimes\widetilde{B}$ for male
genotypes. For any $1\leq i\leq j\leq n$ we note $\nu_{i}$
and $\nu_{j}$ where $0\leq\nu_{i},\nu_{j}\leq\frac{1}{2}$,
the transmission rate of alleles $a_{i}$ and $a_{j}$ by a female
of genotype $a_{i}\otimes a_{j}$. And similarly for any $1\leq i,j\leq n$,
we note $\widetilde{\nu}_{i}$ and $\widetilde{\nu}_{j}$ the
transmission rate of alleles $a_{i}$ and $\widetilde{a}_{j}$
by a male of genotype $a_{i}\otimes\widetilde{a}_{j}$. To apply
the proposition \ref{prop:Duplicate}, we equip the space $B\oplus\widetilde{B}$
with the following gonosomic algebra structure: $a_{i}a_{j}=\widetilde{a}_{i}\widetilde{a}_{j}=0$
and $a_{i}\widetilde{a}_{j}=\widetilde{\nu}_{i}a_{i}+\widetilde{\nu}_{j}\widetilde{a}_{j}$.
Then we define on $B$ the multiplication $a_{i}\bullet a_{j}=\nu_{i}a_{i}+\nu_{j}a_{j}$,
we take $\varphi=\widetilde{\varphi}=\text{id}_{B}$ and $\varphi'\left(x\right)=\begin{cases}
\sigma x & \text{if }x\in B\\
0 & \text{if }x\in\widetilde{B}
\end{cases}$, $\widetilde{\varphi'}\left(x\right)=\begin{cases}
0 & \text{if }x\in B\\
\left(1-\sigma\right)x & \text{if }x\in\widetilde{B}
\end{cases}$ where $\sigma$ is the proportion of females in the population,
which can be expressed in terms of the sex ratio $\rho$ by $\sigma=\frac{1}{1+\rho}$.
With this we get
\begin{align*}
\left(a_{i}\otimes a_{j}\right)\left(a_{p}\otimes\widetilde{a}_{q}\right) & =\left(\nu_{i}a_{i}+\nu_{j}a_{j}\right)\otimes\sigma\widetilde{\nu}_{p}a_{p}+\left(\nu_{i}a_{i}+\nu_{j}a_{j}\right)\otimes\left(1-\sigma\right)\widetilde{\nu}_{q}\widetilde{a}_{q}\\
 & =\sigma\left(\nu_{i}\widetilde{\nu}_{p}a_{i}\otimes a_{p}+\nu_{j}\widetilde{\nu}_{p}a_{j}\otimes a_{p}\right)+\left(1-\sigma\right)\left(\nu_{i}\widetilde{\nu}_{q}a_{i}\otimes\widetilde{a}_{q}+\nu_{j}\widetilde{\nu}_{q}a_{j}\otimes\widetilde{a}_{q}\right)
\end{align*}
we get the distribution of genotypes according to sex in the
offspring after crossing between a female of genotype $a_{i}a_{j}$
and a male $a_{p}a_{q}$.
\end{example}

\medskip{}

\section{Gonosomic evolution operators.}

From now, unless otherwise stated, we assume that all gonosomic
$\mathbb{K}$-algebras considered are finite dimensional of type
$\left(n,m\right)$ with $\left(e_{i}\right)_{1\leq i\leq n}\cup\left(\widetilde{e}_{j}\right)_{1\leq j\leq m}$
as gonosomic basis. \medskip{}

To each gonosomic $\mathbb{K}$-algebra we can associate two
evolution operators. The first evolution operator $W$ is quadratic
and generates a discrete-time dynamical system linking the genetic
states of two successive generations. 

\medskip{}

Given a gonosomic $\mathbb{K}$-algebra $A$, we define the quadratic
operator $W$ called \textit{gonosomic evolution operator} by
\begin{equation}
\begin{array}{cccc}
W: & A & \rightarrow & A\\
 & z & \mapsto & \frac{1}{2}z^{2}.
\end{array}\label{eq:Op_W_def}
\end{equation}

\medskip{}

We call generation a biological cycle going from reproduction
to reproduction. In a bisexual panmictic population with discrete
non overlapping generations we consider a sex-linked gene whose
genetic female (resp. male) types are noted $\left(e_{i}\right)_{1\leq i\leq n}$
(resp. $\bigl(\widetilde{e}_{p}\bigr)_{1\leq p\leq m}$). For
a given $z\in A$ and for all $t\geq0$, $W^{t}\left(z\right)$
gives the genetic state of the population at the generation $t$.
The dynamical system generated by $W$ is defined by the sequence
$z$, $W\left(z\right)$, $W^{2}\left(z\right)$, $W^{3}\left(z\right)$
…\medskip{}

An element $z^{*}\in A$ is an \emph{equilibrium point} of the
dynamical system generated by $W$ if for all $t\geq1$ we have
$W^{t}\left(z^{*}\right)=z^{*}$.

It follows from the equivalence $W^{t}\left(z^{*}\right)=z^{*},\forall t\geq1\Leftrightarrow W\left(z^{*}\right)=z^{*}$
that $z^{*}$ is an equilibrium point if and only if $z^{*}$
is a fixed point of $W$.

From the definition of $W$ we deduce the following result. 
\begin{prop}
Let $A$ be gonosomic $\mathbb{K}$-algebra. There is one-to-one
correspondence between the idempotents of the algebra $A$ and
the fixed points of the gonosomic evolution operator $W$defined
on $A$. 
\end{prop}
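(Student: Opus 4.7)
The plan is to exhibit the bijection explicitly and check it is well-defined. Observe first that $z^{\ast}$ is a fixed point of $W$ precisely when $\tfrac{1}{2}(z^{\ast})^{2} = z^{\ast}$, equivalently $(z^{\ast})^{2} = 2z^{\ast}$, while $e\in A$ is an idempotent precisely when $e^{2}=e$. These two equations differ only by a factor of $2$, so the natural candidate is the pair of mutually inverse scalar maps
\[
\Phi \colon z^{\ast} \longmapsto \tfrac{1}{2} z^{\ast}, \qquad \Psi \colon e \longmapsto 2e.
\]
Since the paper works over a commutative field of characteristic different from $2$ (here specialised to $K=\mathbb{R}$), division by $2$ is legitimate, so $\Phi$ and $\Psi$ are well-defined maps of $A$ into itself.

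The verification then consists of three routine steps that I would perform in order. First, $\Phi$ sends a fixed point to an idempotent: if $(z^{\ast})^{2}=2z^{\ast}$, then $\bigl(\tfrac{1}{2}z^{\ast}\bigr)^{2} = \tfrac{1}{4}(z^{\ast})^{2} = \tfrac{1}{4}\cdot 2z^{\ast} = \tfrac{1}{2}z^{\ast}$. Second, $\Psi$ sends an idempotent to a fixed point: $W(2e) = \tfrac{1}{2}(2e)^{2} = 2 e^{2} = 2e$. Third, $\Phi\circ\Psi = \mathrm{id}$ on idempotents and $\Psi\circ\Phi = \mathrm{id}$ on fixed points by inspection. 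Together these establish the claimed one-to-one correspondence.

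There is no substantive obstacle here; the entire content of the proposition is the observation that fixed points of $W$ and idempotents of $A$ are in bijection via the rescaling $z\leftrightarrow 2e$, and the only hypothesis actually used is $\mathrm{char}(K)\neq 2$, which ensures that this rescaling is a bijection of $A$. The specifically gonosomic structure of $A$ plays no role in the argument; the statement would hold for any $\mathbb{R}$-algebra with operator $z\mapsto \tfrac{1}{2}z^{2}$.
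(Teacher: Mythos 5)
Your proof is correct and follows essentially the same route as the paper: the paper likewise sends an idempotent $e$ to the fixed point $2e$ and a fixed point $z^{*}$ to the idempotent $\tfrac{1}{2}z^{*}$, using only $\operatorname{char}(K)\neq 2$. Your additional explicit check that the two maps are mutually inverse is a harmless elaboration of what the paper leaves implicit.
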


\begin{proof}
If $e\in A$ is an idempotent, from the definition of $W$ we
get $W\left(2e\right)=2e$, i.e. $2e$ is a fixed point of $W$.
Conversely, if $z^{*}\in A$ is a fixed point of $W$, we have
$\left(\frac{1}{2}z^{*}\right)^{2}=\frac{1}{2}W\left(z^{*}\right)=\frac{1}{2}z^{*}$,
i.e. $\frac{1}{2}z^{*}$ is an idempotent of $A$. 
\end{proof}
Given $z\in A$, we note $z^{\left(0\right)}=z$ and $z^{\left(t\right)}=W^{t}\left(z\right)$
for all integer $t\geq0$, each $z^{\left(t\right)}$ corresponds
to a state of population at the generation $t$. We call trajectory
of the state $z^{\left(0\right)}$ for the gonosomic operator
$W$, the sequence $\bigl(z^{\left(t\right)}\bigr)_{t\in\mathbb{N}}$
and the challenge is to determine the limit points of the trajectories
$\left(W^{t}\left(z\right)\right)_{t\geq0}$ for any arbitrary
initial state $z\in A$. If the trajectory of the initial state
$z^{\left(0\right)}$ converge, there is a state noted $z^{\left(\infty\right)}$
such that $z^{\left(\infty\right)}=\lim_{t\rightarrow\infty}z^{\left(t\right)}$,
and by continuity of the operator $W$, the limit state $z^{\left(\infty\right)}$
is a fixed point of $W$.

In particular, if $\left(e_{i}\right)_{1\text{\ensuremath{\le}}i\text{\ensuremath{\le}}n}\cup\left(\widetilde{e}_{p}\right)_{1\text{\ensuremath{\le}}p\text{\ensuremath{\le}}m}$
is a gonosomic basis of $A$, for 
\[
z^{\left(t\right)}=W^{t}\left(z\right)=\sum_{i=1}^{n}x_{i}^{\:\left(t\right)}e_{i}+\sum_{p=1}^{m}y_{p}^{\:\left(t\right)}\widetilde{e}_{p}
\]
we find: 
\begin{eqnarray}
z^{\left(t+1\right)}=W\bigl(z^{\left(t\right)}\bigr) & = & \sum_{k=1}^{n}\sum_{i,p=1}^{n,m}\gamma_{ipk}x_{i}^{\:\left(t\right)}y_{p}^{\:\left(t\right)}e_{k}+\sum_{r=1}^{m}\sum_{i,p=1}^{n,m}\widetilde{\gamma}_{ipr}x_{i}^{\:\left(t\right)}y_{p}^{\:\left(t\right)}\widetilde{e}_{r}.\label{eq:W(x(t))}
\end{eqnarray}
The components of the operator $W$ correspond to the number
in the generation $t+1$ of females (resp. males) type $e_{k}$
(resp. $\widetilde{e}_{r}$) offsprings born after random mating
between all possible parents in generation $t$.

The quadratic evolution operator $W$ is defined in coordinate
form by: 
\[
\begin{array}{cccc}
W: & \mathbb{K}^{n}\times\mathbb{K}^{m} & \rightarrow & \mathbb{K}^{n}\times\mathbb{K}^{m}\\
 & \left(\left(x_{1},\ldots,x_{n}\right),\left(y_{1},\ldots,y_{m}\right)\right) & \mapsto & \left(\left(x_{1}',\ldots,x_{n}'\right),\left(y'_{1},\ldots,y'_{m}\right)\right)
\end{array}
\]
\begin{equation}
W:\left\{ \begin{aligned}x_{k}' & =\sum_{i,p=1}^{n,m}\gamma_{ipk}x_{i}y_{p},\quad k=1,\ldots,n\medskip\\
y'_{r} & =\sum_{i,p=1}^{n,m}\widetilde{\gamma}_{ipr}x_{i}y_{p},\quad r=1,\ldots,m,
\end{aligned}
\right.\label{eq:Op-W}
\end{equation}

Conversely, it is clear that any operator of the form (\ref{eq:Op-W})
is associated to a gonosomic algebra.

\medskip{}

Applying the linear form $\varpi$ defined in (\ref{eq:form_lin_pi_def-1})
to (\ref{eq:W(x(t))}) we find 
\begin{equation}
\varpi\bigl(z^{\left(t+1\right)}\bigr)=\varpi\circ W\left(z^{\left(t\right)}\right)=\sum_{i,p=1}^{n,m}\varpi\left(e_{i}\widetilde{e}_{p}\right)x_{i}^{\:\left(t\right)}y_{p}^{\:\left(t\right)}\label{eq:masse de x^(t)}
\end{equation}
which corresponds to the total population size at generation
$t+1$ depending on the genetic state $\left(\left(x_{i}^{\left(t\right)}\right)_{i\in N_{n}},\left(y_{j}^{\left(t\right)}\right)_{j\in N_{m}}\right)$
at generation $t$.

\medskip{}

For applications in genetics, we will assume in the following
that the field $\mathbb{K}$ is ordered, equipped with the order
topology and we denote by $\mathbb{K}_{+}$ the set $\left\{ x\in\mathbb{K};x\geq0\right\} $.
\begin{prop}
\label{prop:W(R+,R+)} Let $\mathbb{K}$ be an ordered field
and $A$ a gonosomic $\mathbb{K}$-algebra of type $\left(n,m\right)$,
we have $W\left(\mathbb{K}_{+}^{n}\times\mathbb{K}_{+}^{m}\right)\subset\mathbb{K}_{+}^{n}\times\mathbb{K}_{+}^{m}$
if and only if $\gamma_{ipk}\geq0$ and $\widetilde{\gamma}_{ipr}\geq0$
for all $i,k\in N_{n}$ and $p,r\in N_{m}$.
\end{prop}

\begin{proof}
The necessary condition follows from $W\left(e_{i}+\widetilde{e}_{p}\right)=\sum_{k}\gamma_{ipk}e_{k}+\sum_{r}\widetilde{\gamma}_{ipr}\widetilde{e}_{r}$
for all $i\in N_{n}$ and $p\in N_{\nu}$. The sufficient condition
immediately follows from (\ref{eq:Op-W}).
\end{proof}
This result leads to the following definition:
\begin{defn}
\label{def:non_negative}Let $\mathbb{K}$ be an ordered field,
we say that a gonosomic $\mathbb{K}$-algebra $A$ of type $\left(n,m\right)$
is \textit{non negative} if it satisfies the definition \ref{def:Gonosomic-Alg}
with $\gamma_{ipk}\geq0$, $\widetilde{\gamma}_{ipr}\geq0$ for
all $i,k\in N_{n}$ and $p,r\in N_{m}$. 
\end{defn}

From now the gonosomic algebras considered are non negative equipped
with a gonosomic basis $\left(e_{i}\right)_{1\text{\ensuremath{\le}}i\text{\ensuremath{\le}}n}\cup\left(\widetilde{e}_{j}\right)_{1\text{\ensuremath{\le}}j\text{\ensuremath{\le}}m}$
. 
\begin{example}
\label{exa:W-male_infertility}Let $A$ be the gonosomic $\mathbb{K}$-algebra
defined in example \ref{exa:male_infertility-1}, the gonosomic
evolution operator $W$ associated with this algebra is $\left(x'_{1},\left(y'_{1},y'_{2}\right)\right)=W\left(x_{1},\left(y_{1},y_{2}\right)\right)$
where $\left(x_{1},\left(y_{1},y_{2}\right)\right)\in\mathbb{K}_{+}\times\mathbb{K}_{+}^{2}$
and
\[
\left\{ \begin{aligned}x'_{1} & =\tfrac{1}{2}x_{1}y_{1},\\
y'_{1} & =\tfrac{1-\mu}{2}x_{1}y_{1},\\
y'_{2} & =\tfrac{\mu}{2}x_{1}y_{1}.
\end{aligned}
\right.
\]

We show by induction that for all $t\in\mathbb{N}$,
\[
W^{t+1}\left(x_{1},\left(y_{1},y_{2}\right)\right)=\left(\tfrac{2}{1-\mu}\left(\tfrac{1-\mu}{4}x_{1}y_{1}\right)^{2^{t}},2\left(\tfrac{1-\mu}{4}x_{1}y_{1}\right)^{2^{t}},\tfrac{2\mu}{1-\mu}\left(\tfrac{1-\mu}{4}x_{1}y_{1}\right)^{2^{t}}\right)
\]
it follows that 
\[
\lim_{t\rightharpoondown+\infty}W^{t}\left(x_{1},\left(y_{1},y_{2}\right)\right)=\begin{cases}
\left(0,\left(0,0\right)\right) & \text{if }\left(1-\mu\right)x_{1}y_{1}<4,\\
\left(\frac{2}{1-\mu},\left(2,\frac{2\mu}{1-\mu}\right)\right) & \text{if }\left(1-\mu\right)x_{1}y_{1}=4,\\
\left(+\infty,\left(+\infty,+\infty\right)\right) & \text{if }\left(1-\mu\right)x_{1}y_{1}>4.
\end{cases}
\]

It follows that male infertility will disappear only if the population
disappears.
\end{example}

\medskip{}

\begin{example}
\label{exa:W-bidirectional}For example \ref{exa:bidirectional},
the gonosomic evolution operator $W$ defined on $\mathbb{K}_{+}^{2}\times\mathbb{K}_{+}^{2}$
is 
\[
\left\{ \begin{aligned}x'_{1}=y'_{1} & =\tfrac{1}{2}x_{1}y_{1},\\
x'_{2}=y'_{2} & =\tfrac{1}{2}x_{2}y_{2}.
\end{aligned}
\right.
\]

It is easy to see that $x_{1}^{\left(t\right)}=y_{1}^{\left(t\right)}=2\left(\frac{1}{4}x_{1}y_{1}\right)^{2^{t-1}}$
and $x_{2}^{\left(t\right)}=y_{2}^{\left(t\right)}=2\left(\frac{1}{4}x_{2}y_{2}\right)^{2^{t-1}}$
it follows that

\[
\lim_{t\rightharpoondown+\infty}x_{1}^{\left(t\right)}=\begin{cases}
0 & \text{if }x_{1}y_{1}<4,\\
2 & \text{if }x_{1}y_{1}=4,\\
+\infty & \text{if }x_{1}y_{1}>4,
\end{cases}\text{ and }\lim_{t\rightharpoondown+\infty}x_{2}^{\left(t\right)}=\begin{cases}
0 & \text{if }x_{2}y_{2}<4,\\
2 & \text{if }x_{2}y_{2}=4,\\
+\infty & \text{if }x_{2}y_{2}>4.
\end{cases}
\]
then the table of limits of the operator $W$ for $z=(x_{1},x_{2}),(y_{1},y_{2})$
is \medskip{}

$\underset{t\rightharpoondown+\infty}{\lim}W^{t}\left(z\right)=$%
\begin{tabular}{ccc|c}
$\text{ if }x_{1}y_{1}<4$ & $\text{ if }x_{1}y_{1}=4$ & \multicolumn{1}{c}{$\text{ if }x_{1}y_{1}>4$} & \tabularnewline
\cline{1-3}
$\left(\left(0,0\right),\left(0,0\right)\right)$ & $\left(\left(2,0\right),\left(2,0\right)\right)$ & $\left(\left(\infty,0\right),\left(\infty,0\right)\right)$ & $\text{ and }x_{2}y_{2}<4$\tabularnewline
$\left(\left(0,2\right),\left(0,2\right)\right)$ & $\left(\left(2,2\right),\left(2,2\right)\right)$ & $\left(\left(\infty,2\right),\left(\infty,2\right)\right)$ & $\text{ and }x_{2}y_{2}=4$\tabularnewline
$\left(\left(0,\infty\right),\left(0,\infty\right)\right)$ & $\left(\left(2,\infty\right),\left(2,\infty\right)\right)$ & $\left(\left(\infty,\infty\right),\left(\infty,\infty\right)\right)$ & $\text{ and }x_{2}y_{2}>4$\tabularnewline
\end{tabular}

We see that, except in the case where the population disappears,
in all other cases at least one of the two alleles is always
maintained in both sexes.
\end{example}

It is not always possible, as in the above example, to provide
an explicit expression for the trajectories of a gonosomic operator;
in such cases, suitable bounding arguments may be used. 
\begin{example}
\label{exa:W-Hybrid-dysgenesis}The gonosomic evolution operator
associated with the gonosomic algebra defined in example \ref{exa:Hybrid-dysgenesis}
is 
\[
\left\{ \begin{aligned}x'_{1}=y'_{1} & =\tfrac{1}{2}x_{1}y_{1},\\
x'_{2}=y'_{2} & =\tfrac{1}{2}\text{\ensuremath{\left(y_{1}+y_{2}\right)x_{2}}},
\end{aligned}
\right.
\]

Let be $\bigl(\bigl(x_{1}^{\left(t\right)},x_{2}^{\left(t\right)}\bigr),\bigl(y_{1}^{\left(t\right)},y_{2}^{\left(t\right)}\bigr)\bigr)=W^{t}\left(\left(x_{1},x_{2}\right),\left(y_{1},y_{2}\right)\right)$,
for all $t\geq1$ we have $x_{1}^{\left(t\right)}=y_{1}^{\left(t\right)}$
et $x_{2}^{\left(t\right)}=y_{2}^{\left(t\right)}$. From $x'_{1}=\tfrac{1}{2}x_{1}y_{1}$
it follows that 
\[
x_{1}^{\left(t\right)}=\left(\tfrac{1}{2}\right)^{2^{t}-1}\left(x_{1}y_{1}\right)^{2^{t-1}}=2\left(\tfrac{1}{4}x_{1}y_{1}\right)^{2^{t-1}}.
\]

Therefore,
\[
\lim_{t\rightharpoondown+\infty}x_{1}^{\left(t\right)}=\lim_{t\rightharpoondown+\infty}y_{1}^{\left(t\right)}=\begin{cases}
0 & \text{if }x_{1}y_{1}<4,\\
2 & \text{if }x_{1}y_{1}=4,\\
+\infty & \text{if }x_{1}y_{1}>4.
\end{cases}
\]

If $x_{2}=0$ we get $x_{2}^{\left(t\right)}=0$ for any $t\geq1$.
Now we assume that $x_{2}\neq0$.

If $\underset{t\rightharpoondown+\infty}{\lim}x_{1}^{\left(t\right)}=0$
we have $\frac{1}{4}x_{1}y_{1}<1$. For any $x$ such that $0\leq x\leq2-\frac{x_{1}y_{1}}{4}$
we have $\frac{1}{2}\left(\frac{x_{1}y_{1}}{4}+x\right)x\leq x$.
Therefore, if $x_{2}\leq2-\frac{x_{1}y_{1}}{4}$ we have $\frac{1}{2}\left(\frac{x_{1}y_{1}}{4}+x_{2}\right)x_{2}\leq x_{2}$
in other words $x_{2}^{\left(1\right)}\leq x_{2}$, from this
it follows that $x_{2}^{\left(1\right)}\leq2-\frac{x_{1}y_{1}}{4}$
so $x_{2}^{\left(2\right)}\leq x_{2}^{\left(1\right)}$ and recursively
the sequence $\left(x_{2}^{\left(t\right)}\right)_{t}$ is decreasing,
therefore we get $\underset{t\rightharpoondown+\infty}{\lim}x_{2}^{\left(t\right)}=0$.
If $x_{2}>2-\frac{x_{1}y_{1}}{4}$ we have $x_{2}^{\left(1\right)}>x_{2}$
and we deduce recursively that the sequence $\left(x_{2}^{\left(t\right)}\right)_{t}$
is increasing with $\underset{t\rightharpoondown+\infty}{\lim}x_{2}^{\left(t\right)}=+\infty$.

If $\underset{t\rightharpoondown+\infty}{\lim}x_{1}^{\left(t\right)}=2$
we have $x_{1}^{\left(t\right)}=2$ for all $t\geq1$, so $x_{2}^{\left(t+1\right)}=\frac{1}{2}\left(x_{2}^{\left(t\right)}+1\right)^{2}-\frac{1}{2}$.
Setting $w_{t}=x_{2}^{\left(t\right)}+1$ we have $w_{t+1}=w_{t}^{2}+\frac{1}{2}$
from which we conclude that $\underset{t\rightharpoondown+\infty}{\lim}w_{t}=+\infty$
therefore $\underset{t\rightharpoondown+\infty}{\lim}x_{2}^{\left(t\right)}=+\infty$. 

If $\underset{t\rightharpoondown+\infty}{\lim}x_{1}^{\left(t\right)}=+\infty$,
from $x_{2}^{\left(t+1\right)}=\tfrac{1}{2}\text{\ensuremath{\left(x_{1}^{\left(t\right)}+x_{2}^{\left(t\right)}\right)x_{2}^{\left(t\right)}}\ensuremath{\ensuremath{\geq\frac{1}{2}x_{1}^{\left(t\right)}x_{2}^{\left(t\right)}\geq}}}\left(\tfrac{1}{4}x_{1}y_{1}\right)x_{2}^{\left(t\right)}$
it comes $x_{2}^{\left(t+1\right)}\geq\left(\tfrac{1}{4}x_{1}y_{1}\right)^{t}x_{2}$
therefore$\underset{t\rightharpoondown+\infty}{\lim}x_{2}^{\left(t\right)}=+\infty$.

In conclusion, we obtained
\[
\underset{t\rightharpoondown+\infty}{\lim}x_{2}^{\left(t\right)}=\begin{cases}
0 & \text{if }x_{1}y_{1}<4\text{ and }x_{2}\leq2-\frac{x_{1}y_{1}}{4},\\
+\infty & \text{if }x_{1}y_{1}<4\text{ and }x_{2}>2-\frac{x_{1}y_{1}}{4},\\
+\infty & \text{if }x_{1}y_{1}\geq4.
\end{cases}
\]

According to this result, the persistence of gene $P$ in the
population depends on a critical population size of the group
carrying the gene $M$.
\end{example}

\begin{prop}
\label{prop:x_k and y_k}Let $A$ be a non negative gonosomic
algebra and $\left(e_{i}\right)_{1\text{\ensuremath{\le}}i\text{\ensuremath{\le}}n}\cup\left(\widetilde{e}_{j}\right)_{1\text{\ensuremath{\le}}j\text{\ensuremath{\le}}m}$
a gonosomic basis of $A$. For all $z\in\mathbb{K}_{+}^{n}\times\mathbb{K}_{+}^{m}$,
$z=\left(\left(x_{1},\ldots,x_{n}\right),\left(y_{1},\ldots,y_{m}\right)\right)$
and $t\geq1$, if we denote $W^{t}\bigl(z\bigr)=\left(x_{1}^{\:\left(t\right)},\ldots,x_{n}^{\:\left(t\right)},y_{1}^{\;\left(t\right)},\ldots,y_{m}^{\;\left(t\right)}\right)$,
then we have 
\[
0<\underset{\left(i,p\right)\in E_{k}}{\min}\left\{ \gamma_{ipk}\right\} \Bigl(\sum_{i=1}^{n}x_{i}^{\left(t-1\right)}\Bigr)\Bigl(\sum_{p=1}^{m}y_{p}^{\left(t-1\right)}\Bigr)\leq x_{k}^{\left(t\right)}\leq\underset{i,p}{\max}\left\{ \gamma_{ipk}\right\} \Bigl(\sum_{i=1}^{n}x_{i}^{\left(t-1\right)}\Bigr)\Bigl(\sum_{p=1}^{m}y_{p}^{\left(t-1\right)}\Bigr)
\]
\[
0<\underset{\left(i,p\right)\in\widetilde{E}_{r}}{\min}\left\{ \widetilde{\gamma}_{ipr}\right\} \Bigl(\sum_{i=1}^{n}x_{i}^{\left(t-1\right)}\Bigr)\Bigl(\sum_{p=1}^{m}y_{p}^{\left(t-1\right)}\Bigr)\leq y_{r}^{\left(t\right)}\leq\underset{i,p}{\max}\left\{ \widetilde{\gamma}_{ipr}\right\} \Bigl(\sum_{i=1}^{n}x_{i}^{\left(t-1\right)}\Bigr)\Bigl(\sum_{p=1}^{m}y_{p}^{\left(t-1\right)}\Bigr),
\]
where $E_{k}=\left\{ \left(i,p\right)\in N_{n}\times N_{m};\gamma_{ipk}>0\right\} $
and $\widetilde{E}_{r}=\left\{ \left(i,p\right)\in N_{n}\times N_{m};\widetilde{\gamma}_{ipr}>0\right\} $.
\end{prop}

\begin{proof}
Let for any $t\geq1$ be $W^{t-1}\left(z\right)=\sum_{i=1}^{n}x_{i}^{\left(t-1\right)}e_{i}+\sum_{p=1}^{m}y_{p}^{\left(t-1\right)}\widetilde{e}_{p}$.
With the proposition (\ref{prop:W(R+,R+)}) we prove by induction
that $x_{i}^{\left(t\right)},y_{p}^{\left(t\right)}\geq0$ for
all $t\geq0$. From (\ref{eq:Op-W}) we get 
\begin{align*}
x_{k}^{\left(t\right)} & =\sum_{i,p=1}^{n,m}\gamma_{ipk}x_{i}^{\left(t-1\right)}y_{p}^{\left(t-1\right)},\enspace\left(k\in N_{n}\right)
\end{align*}

From this we deduce for each $k\in N_{n}$ that 
\[
0<\underset{\left(i,p\right)\in E_{k}}{\min}\left\{ \gamma_{ipk}\right\} \Bigl(\sum_{i,p=1}^{n,m}x_{i}^{\left(t-1\right)}y_{p}^{\left(t-1\right)}\Bigr)\leq x_{k}^{\left(t\right)}\leq\underset{i,p}{\max}\left\{ \gamma_{ipk}\right\} \Bigl(\sum_{i,p=1}^{n,m}x_{i}^{\left(t-1\right)}y_{p}^{\left(t-1\right)}\Bigr).
\]

A similar reasoning gives the inequalities for $y_{r}^{\;\left(t\right)}$. 
\end{proof}
\medskip{}

Using the linear form $\varpi:A\rightarrow\mathbb{K}$ defined
in (\ref{eq:form_lin_pi_def-1}), for a given non negative gonosomic
algebra with a gonosomic basis $\left(e_{i}\right)_{1\text{\ensuremath{\le}}i\text{\ensuremath{\le}}n}\cup\left(\widetilde{e}_{j}\right)_{1\text{\ensuremath{\le}}j\text{\ensuremath{\le}}m}$
and $e_{i}\widetilde{e}_{j}=\sum_{k=1}^{n}\gamma_{ijk}e_{k}+\sum_{p=1}^{m}\widetilde{\gamma}_{ijp}\widetilde{e}_{p}$
for all $i\in N_{n}$ and $j\in N_{m}$, the scalar
\[
\varpi\left(e_{i}\widetilde{e}_{j}\right)=\sum_{k=1}^{n}\gamma_{ijk}+\sum_{p=1}^{m}\widetilde{\gamma}_{ijp}
\]
 gives the total frequency of offspring resulting from the mating
of a female $e_{i}$ with a male $\widetilde{e}_{j}$. For all
$i\in N_{n}$ and $j\in N_{m}$ we note
\[
\gamma_{ij}=\sum_{k=1}^{n}\gamma_{ijk}\;\left(\text{resp. }\widetilde{\gamma}_{ij}=\sum_{p=1}^{m}\widetilde{\gamma}_{ijp}\right)
\]
the total frequency of females (resp. males) resulting from the
mating between a female $e_{i}$ and a male $\widetilde{e}_{j}$.
Given an initial population size $z\in\mathbb{K}_{+}^{n}\times\mathbb{K}_{+}^{m}$,
the real $\varpi\circ W^{t}\left(z\right)$ gives the total population
size at generation $t$. 
\begin{prop}
\label{prop:max-woW} Let $\mathbb{K}$ be an ordered field and
$A$ a non negative gonosomic $\mathbb{K}$-algebra of type $\left(n,m\right)$.
For $z\in\mathbb{K}_{+}^{n}\times\mathbb{K}_{+}^{m}$, $z=\left(\left(x_{1},\ldots,x_{n}\right),\left(y_{1},\ldots,y_{m}\right)\right)$
and for all $t\geq1$ we have 
\begin{align*}
i)\quad\varpi\circ W^{t}\left(z\right) & \leq\left(\dfrac{1}{4}\underset{i,p}{\max}\left\{ \varpi\left(e_{i}\widetilde{e}_{p}\right)\right\} \right)^{2^{t}-1}\varpi\left(z\right)^{2^{t}},\\
{\small ii)\quad\varpi\circ W^{t}\bigl(z\bigr)} & \leq\left(\underset{i,j}{\max}\left\{ \varpi\left(e_{i}\widetilde{e}_{j}\right)\right\} \right)^{\frac{1}{3}\left(4^{\left\lfloor \nicefrac{\left(t+1\right)}{2}\right\rfloor }-1\right)}\left(\frac{1}{16}\max_{i,j,p,q}\left\{ \gamma_{ij}\widetilde{\gamma}_{pq}\right\} \right)^{\frac{1}{3}\left(4^{\left\lfloor \nicefrac{t}{2}\right\rfloor }-1\right)}\times\\
 & {\small }\hspace{65mm}\times\begin{cases}
\;\;\Bigl(\varpi\bigl(z\bigr)\Bigr){}^{4^{\left\lfloor \nicefrac{t}{2}\right\rfloor }} & \mbox{if }t\mbox{ is even,}\medskip\\
\Bigl(\tfrac{1}{4}\varpi\bigl(z\bigr)\Bigr){}^{4^{\left\lfloor \nicefrac{t}{2}\right\rfloor }} & \mbox{if }t\mbox{ is odd}.
\end{cases}
\end{align*}
\end{prop}

\begin{proof}
Let $\left(e_{i}\right)_{1\text{\ensuremath{\le}}i\text{\ensuremath{\le}}n}\cup\left(\widetilde{e}_{j}\right)_{1\text{\ensuremath{\le}}j\text{\ensuremath{\le}}m}$
be a gonosomic basis of $A$. For $z\in A$, $z=\sum_{i=1}^{n}x_{i}e_{i}+\sum_{j=1}^{m}y_{j}\widetilde{e}_{j}$,
using (\ref{eq:Op-W}) we have 
\[
W\left(z\right)=\sum_{k=1}^{n}\sum_{i,j=1}^{n,m}\gamma_{ijk}x_{i}y_{j}e_{k}+\sum_{r=1}^{m}\sum_{i,j=1}^{n,m}\gamma_{ijr}x_{i}y_{j}\widetilde{e}_{r}=\sum_{k=1}^{n}x'_{k}e_{k}+\sum_{r=1}^{m}y'_{r}\widetilde{e}_{r}.
\]

From this it follows 
\[
\varpi\circ W\left(z\right)=\sum_{k,r=1}^{n,m}\varpi\left(e_{k}\widetilde{e}_{r}\right)x{}_{k}y{}_{r}
\]

and 
\[
\varpi\circ W^{2}\left(z\right)=\sum_{k,r=1}^{n,m}\varpi\left(e_{k}\widetilde{e}_{r}\right)x'_{k}y'_{r}.
\]

We have $\varpi\left(e_{k}\widetilde{e}_{r}\right)\geq0$, $x'_{k}\geq0$
and $y'_{r}\geq0$ for all $k\in N_{n}$ and $r\in N_{m}$, thus
\[
\underset{k,r}{\min}\left\{ \varpi\left(e_{k}\widetilde{e}_{r}\right)\right\} \sum_{k,r=1}^{n,m}x{}_{k}y{}_{r}\leq\varpi\circ W\left(z\right)\leq\underset{k,r}{\max}\left\{ \varpi\left(e_{k}\widetilde{e}_{r}\right)\right\} \sum_{k,r=1}^{n,m}x{}_{k}y{}_{r}
\]
\begin{equation}
\underset{k,r}{\min}\left\{ \varpi\left(e_{k}\widetilde{e}_{r}\right)\right\} \sum_{k,r=1}^{n,m}x'_{k}y'_{r}\leq\varpi\circ W^{2}\left(z\right)\leq\underset{k,r}{\max}\left\{ \varpi\left(e_{k}\widetilde{e}_{r}\right)\right\} \sum_{k,r=1}^{n,m}x'_{k}y'_{r}.\label{eq:woW^2}
\end{equation}

First we have 
\[
\sum_{k,r=1}^{n,m}x{}_{k}y{}_{r}=\biggl(\sum_{k=1}^{n}x{}_{k}\biggr)\biggl(\sum_{r=1}^{m}y{}_{r}\biggr)\leq\frac{1}{4}\biggl(\sum_{k=1}^{n}x{}_{k}+\sum_{r=1}^{m}y{}_{r}\biggr)^{2}
\]

thus we get 
\begin{equation}
\varpi\circ W\left(z\right)\leq\underset{k,r}{\max}\left\{ \varpi\left(e_{k}\widetilde{e}_{r}\right)\right\} \left(\frac{1}{2}\varpi\left(z\right)\right)^{2},\label{eq:woW(z)-maj}
\end{equation}

from this we deduce recursively the inequality $i)$. Next with
(\ref{eq:Op-W}) we get 
\begin{align}
\sum_{k,r=1}^{n,m}x'_{k}y'_{r} & =\sum_{k,r=1}^{n,m}\biggl(\sum_{i,j=1}^{n,m}\gamma_{ijk}x_{i}y_{j}\biggr)\biggl(\sum_{p,q=1}^{n,m}\gamma_{pqr}x_{p}y_{q}\biggr)=\sum_{i,p=1}^{n}\sum_{j,q=1}^{m}\gamma_{ij}\widetilde{\gamma}_{pq}\:x_{i}x_{p}y_{j}y_{q}.\label{eq:xixpyjyq}
\end{align}

We have $\gamma_{ij}\geq0$ and $\widetilde{\gamma}_{pq}\geq0$
for every $i,p\in N_{n}$ and $j,q\in N_{m}$, thus 
\[
\sum_{i,p=1}^{n}\sum_{j,q=1}^{m}\gamma_{ij}\widetilde{\gamma}_{pq}\:x_{i}x_{p}y_{j}y_{q}\leq\max_{i,j,p,q}\left\{ \gamma_{ij}\widetilde{\gamma}_{pq}\right\} \Bigl(\sum_{i=1}^{n}x_{i}\Bigr)^{2}\Bigl(\sum_{j=1}^{m}y_{j}\Bigr)^{2}
\]
using $ab\leq\frac{1}{4}\left(a+b\right)^{2}$ we get $\Bigl(\sum_{i=1}^{n}x_{i}\Bigr)^{2}\Bigl(\sum_{j=1}^{m}y_{j}\Bigr)^{2}\leq\frac{1}{16}\left(\sum_{i=1}^{n}x_{i}+\sum_{j=1}^{m}y_{j}\right)^{4}$
where $\sum_{i=1}^{n}x_{i}+\sum_{j=1}^{m}y_{j}=\varpi\left(z\right)$,
finally 
\[
\varpi\circ W^{2}\left(z\right)\leq\underset{k,r}{\max}\left\{ \varpi\left(e_{k}\widetilde{e}_{r}\right)\right\} \times\max_{i,j,p,q}\left\{ \gamma_{ij}\widetilde{\gamma}_{pq}\right\} \left(\frac{1}{2}\varpi\left(z\right)\right)^{4}.
\]

It follows from this that for all integer $t\geq2$ we have 
\begin{equation}
\varpi\circ W^{t}\left(z\right)\leq\left(\tfrac{1}{2}\right)^{4}\underset{k,r}{\max}\left\{ \varpi\left(e_{k}\widetilde{e}_{r}\right)\right\} \times\max_{i,j,p,q}\left\{ \gamma_{ij}\widetilde{\gamma}_{pq}\right\} \left(\varpi\circ W^{t-2}\left(z\right)\right)^{4}\label{eq:woW^t}
\end{equation}

With (\ref{eq:woW(z)-maj}) and (\ref{eq:woW^t}) we establish
by induction the inequality given in the proposition.
\end{proof}
We introduce the sets 
\begin{align}
\mathscr{M} & =\left\{ \left(i,j\right)\in N_{n}\times N_{m};\varpi\left(e_{i}\widetilde{e}_{j}\right)=0\right\} ,\label{eq:MetN}\\
\mathscr{N} & =\left\{ \left(i,j\right)\in N_{n}\times N_{m};\varpi\left(e_{i}\widetilde{e}_{j}\right)\neq0\right\} .\nonumber 
\end{align}

In a non negative gonosomic $\mathbb{K}$-algebra with gonosomic
basis $\left(e_{i}\right)_{1\text{\ensuremath{\le}}i\text{\ensuremath{\le}}n}\cup\left(\widetilde{e}_{j}\right)_{1\text{\ensuremath{\le}}j\text{\ensuremath{\le}}m}$,
for any $\left(i,j\right)\in\mathscr{M}$ we have $e_{i}\widetilde{e}_{j}=0$,
genetically this means that the crossing between a female of
type $e_{i}$ and a male $\widetilde{e}_{p}$ is sterile. Therefore
the set $\mathscr{M}$ lists the sterile crosses. 
\begin{prop}
\label{prop:min-woW}Let $\mathbb{K}$ be a formally real field
and $A$ be a non negative gonosomic algebra of type $\left(n,m\right)$.
For all $z\in\mathbb{K}_{+}^{n}\times\mathbb{K}_{+}^{m}$, $z=\left(\left(x_{1},\ldots,x_{n}\right),\left(y_{1},\ldots,y_{m}\right)\right)$
and $t\geq1$ we have 
\[
\underset{\left(i,j\right)\in\mathscr{N}}{\min}\left\{ \varpi\left(e_{i}\widetilde{e}_{j}\right)\right\} \left(\underset{\left(i,j\right)\in\mathscr{N}}{\min}\left\{ \sqrt{\gamma_{ij}\widetilde{\gamma}_{ij}}\right\} \right)^{2^{t}-2}\biggl(\sum_{i,j=1}^{n,m}x_{i}y_{j}\biggr)^{2^{t-1}}\leq\varpi\circ W^{t}\left(z\right).
\]
\end{prop}

\begin{proof}
After exchanging roles between the couples $\left(i,j\right)$
and $\left(p,q\right)$ in (\ref{eq:xixpyjyq}) we get 
\[
\sum_{k,r=1}^{n,m}x'_{k}y'_{r}=\sum_{i,p=1}^{n}\sum_{j,q=1}^{m}\gamma_{pq}\widetilde{\gamma}_{ij}\:x_{i}x_{p}y_{j}y_{q}
\]

hence 
\[
\sum_{k,r=1}^{n,m}x'_{k}y'_{r}=\sum_{i,p=1}^{n}\sum_{j,q=1}^{m}\tfrac{1}{2}\left(\gamma_{ij}\widetilde{\gamma}_{pq}+\gamma_{pq}\widetilde{\gamma}_{ij}\right)\:x_{i}x_{p}y_{j}y_{q}
\]

using the relation $a+b\geq2\sqrt{ab}$ we get 
\begin{align*}
\sum_{k,r=1}^{n,m}x'_{k}y'_{r} & \geq\sum_{i,p=1}^{n}\sum_{j,q=1}^{m}\sqrt{\gamma_{ij}\gamma_{pq}\widetilde{\gamma}_{ij}\widetilde{\gamma}_{pq}}\:x_{i}x_{p}y_{j}y_{q}\geq\left(\underset{\left(i,j\right)\in\mathscr{N}}{\min}\left\{ \sqrt{\gamma_{ij}\widetilde{\gamma}_{ij}}\right\} \right)^{2}\biggl(\sum_{i,j=1}^{n,m}x_{i}y_{j}\biggr)^{2}
\end{align*}

It follows that for all integer $t\geq1$ 
\[
\left(\underset{\left(i,j\right)\in\mathscr{N}}{\min}\left\{ \sqrt{\gamma_{ij}\widetilde{\gamma}_{ij}}\right\} \right)^{2}\biggl(\sum_{i,j=1}^{n,m}x_{i}^{\left(t-1\right)}y_{j}^{\left(t-1\right)}\biggr)^{2}\leq\sum_{k,r=1}^{n,m}x{}_{k}^{\left(t\right)}y{}_{r}^{\left(t\right)}
\]

and by induction 
\[
\left(\underset{\left(i,j\right)\in\mathscr{N}}{\min}\left\{ \sqrt{\gamma_{ij}\widetilde{\gamma}_{ij}}\right\} \right)^{2^{t}}\biggl(\sum_{i,j=1}^{n,m}x_{i}y_{j}\biggr)^{2^{t}}\leq\sum_{k,r=1}^{n,m}x{}_{k}^{\left(t\right)}y{}_{r}^{\left(t\right)}.
\]

But according to (\ref{eq:woW^2}) we have for all $t\geq2$
\[
\underset{\left(i,j\right)\in\mathscr{N}}{\min}\left\{ \varpi\left(e_{i}\widetilde{e}_{j}\right)\right\} \left(\underset{\left(i,j\right)\in\mathscr{N}}{\min}\left\{ \sqrt{\gamma_{ij}\widetilde{\gamma}_{ij}}\right\} \right)^{2}\biggl(\sum_{i,j=1}^{n,m}x_{i}^{\left(t-2\right)}y_{j}^{\left(t-2\right)}\biggr)^{2}\leq\varpi\circ W^{t}\left(z\right).
\]
With these last two relations we get by induction the inequality
$ii)$ . 
\end{proof}
\begin{cor}
\label{cor:suite-woW^t}Let $\mathbb{K}$ be a formally real
field and $A$ a non negative gonosomic $\mathbb{K}$-algebra
of type $\left(n,m\right)$. For $z\in\mathbb{K}_{+}^{n}\times\mathbb{K}_{+}^{m}$
, $z=\left(\left(x_{1},\ldots,x_{n}\right),\left(y_{1},\ldots,y_{m}\right)\right)$.

a) If $\varpi\left(z\right)\leq\dfrac{4}{\underset{i,j}{\max}\left\{ \varpi\left(e_{i}\widetilde{e}_{j}\right)\right\} }$
then the sequence $\left(\varpi\circ W^{t}\left(z\right)\right)_{t\geq0}$
is decreasing.

b) If $\Bigl(\underset{\left(i,j\right)\in\mathscr{N}}{\min}\left\{ \sqrt{\gamma_{ij}\widetilde{\gamma}_{ij}}\right\} \Bigr)^{2}\Bigl(\sum_{i,j=1}^{n,m}x_{i}y_{j}\Bigr)>1$
then the sequence $\left(\varpi\circ W^{t}\left(z\right)\right)_{t\geq0}$
is divergent.

c) If $\sqrt[3]{\frac{1}{16}\underset{i,j}{\max}\left\{ \varpi\bigl(e_{i}\widetilde{e}_{j}\bigr),\bigl(\varpi\bigl(e_{i}\widetilde{e}_{j}\bigr)\bigr)^{4}\right\} \times\max_{i,j,p,q}\left\{ \gamma_{ij}\widetilde{\gamma}_{pq}\right\} }\times\varpi\bigl(z\bigr)<1$
then we have $\underset{t\rightarrow+\infty}{\lim}\varpi\circ W^{t}\left(z\right)=0$. 
\end{cor}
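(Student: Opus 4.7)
The plan is to deduce each of the three parts directly from the bounds established in Proposition \ref{prop:Encadr-woW^t}, together with the intermediate inequality (\ref{eq:woW(z)-maj}) appearing in its proof. For brevity set $M=\max_{i,j}\{\varpi(e_i\widetilde{e}_j)\}$, $G=\tfrac{1}{16}\max_{i,j,p,q}\{\gamma_{ij}\widetilde{\gamma}_{pq}\}$ and $m=\min_{i,j}\{\sqrt{\gamma_{ij}\widetilde{\gamma}_{ij}}\}$.

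For part (a), I would use (\ref{eq:woW(z)-maj}), which reads $\varpi\circ W(z)\leq\tfrac{M}{4}\varpi(z)^2$. The hypothesis $\varpi(z)\leq 4/M$ is equivalent to $\tfrac{M}{4}\varpi(z)\leq 1$; multiplying by $\varpi(z)$ gives $\varpi\circ W(z)\leq\varpi(z)$, and in particular $\varpi\circ W(z)\leq 4/M$ is preserved. A straightforward induction on $t$ then yields $\varpi\circ W^{t+1}(z)\leq\varpi\circ W^t(z)$ for all $t\geq 0$. For part (b), I would rearrange the lower bound of Proposition \ref{prop:Encadr-woW^t} as
\[
\varpi\circ W^t(z)\;\geq\;\min_{i,j}\{\varpi(e_i\widetilde{e}_j)\}\cdot m^{-2}\cdot\Bigl(m^2\sum_{i,j=1}^{n,\nu} x_iy_j\Bigr)^{\!2^{t-1}},
\]
upon which the hypothesis $m^2\sum x_iy_j>1$ forces the right-hand side to $+\infty$.

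For part (c), I would split on the parity of $t$ and refactor the upper bound of Proposition \ref{prop:Encadr-woW^t}. For $t=2s$ both floor functions equal $s$ and the bound rearranges as $(MG)^{-1/3}\bigl((MG)^{1/3}\varpi(z)\bigr)^{4^s}$. For $t=2s+1$ the exponent of $M$ carries an extra factor $M^{4^s}$ which, combined with the $\bigl(\tfrac14\varpi(z)\bigr)^{4^s}$ factor, produces $(MG)^{-1/3}\bigl(\tfrac14(M^4G)^{1/3}\varpi(z)\bigr)^{4^s}$. The hypothesis $\sqrt[3]{G\max\{M,M^4\}}\cdot\varpi(z)<1$ bounds both bases above by $1$: the even one via $(MG)^{1/3}\varpi(z)\leq(G\max\{M,M^4\})^{1/3}\varpi(z)$, and the odd one via $\tfrac14(M^4G)^{1/3}\varpi(z)\leq(GM^4)^{1/3}\varpi(z)\leq(G\max\{M,M^4\})^{1/3}\varpi(z)$, so both subsequences tend to $0$ and therefore so does $\varpi\circ W^t(z)$. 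The main bookkeeping is precisely this reconciliation step: matching the asymmetric $M$-exponents that Proposition \ref{prop:Encadr-woW^t} produces in the two parities against the single, uniform hypothesis of the corollary, and verifying that the factor $\max\{M,M^4\}$ cleanly handles the two regimes $M<1$ and $M\geq 1$ at once.
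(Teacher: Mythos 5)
Your proposal is correct and follows essentially the same route as the paper: part (a) from inequality (\ref{eq:woW(z)-maj}) by induction, part (b) by refactoring the lower bound of Proposition \ref{prop:Encadr-woW^t} as a constant times $\bigl(m^{2}\sum_{i,j}x_{i}y_{j}\bigr)^{2^{t-1}}$, and part (c) by rewriting the upper bound as $(MG)^{-1/3}(\text{base})^{4^{\lfloor t/2\rfloor}}$ and using $\max\{M,M^{4}\}$ to absorb the parity-dependent exponent of $M$. The only cosmetic difference is that you split part (c) into explicit even/odd cases where the paper packages the same dichotomy into the factor $4^{\left(\lfloor(t+1)/2\rfloor-\lfloor t/2\rfloor\right)}\in\{1,4\}$.
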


\begin{proof}
\textit{a}) If $z\in\mathbb{K}_{+}^{n}\times\mathbb{K}_{+}^{m}$
we have $\varpi\left(z\right)\geq0$, from $\varpi\left(z\right)\leq\tfrac{4}{\underset{i,j}{\max}\left\{ \varpi\left(e_{i}\widetilde{e}_{j}\right)\right\} }$
we get $\varpi\left(z\right)^{2}\leq\tfrac{4}{\underset{i,j}{\max}\left\{ \varpi\left(e_{i}\widetilde{e}_{j}\right)\right\} }\varpi\left(z\right)$
and according to (\ref{eq:woW(z)-maj}) with this we get $\varpi\circ W\left(z\right)\leq\varpi\left(z\right)$
and by induction $\varpi\circ W^{t+1}\left(z\right)\leq\varpi\circ W^{t}\left(z\right)$.\medskip{}

\textit{b}) In proposition \ref{prop:min-woW}, the left-hand
term of the lower bound of $\varpi\circ W^{t}\left(z\right)$
can be put in the form: 
\[
\underset{\left(i,j\right)\in\mathscr{N}}{\min}\left\{ \varpi\left(e_{i}\widetilde{e}_{j}\right)\right\} \left(\underset{\left(i,j\right)\in\mathscr{N}}{\min}\left\{ \sqrt{\gamma_{ij}\widetilde{\gamma}_{ij}}\right\} \right)^{-2}\Biggl(\left(\underset{\left(i,j\right)\in\mathscr{N}}{\min}\left\{ \sqrt{\gamma_{ij}\widetilde{\gamma}_{ij}}\right\} \right)^{2}\biggl(\sum_{i=1}^{n}x_{i}\sum_{j=1}^{m}y_{j}\biggr)\Biggr)^{2^{t-1}}.
\]

\textit{c}) For all $t\geq1$, we have $0\leq\Bigl(\tfrac{1}{4}\varpi\bigl(z\bigr)\Bigr){}^{4^{\left\lfloor \nicefrac{t}{2}\right\rfloor }}\leq\varpi\bigl(z\bigr){}^{4^{\left\lfloor \nicefrac{t}{2}\right\rfloor }}$,
so in proposition \ref{prop:max-woW}, the right-hand term of
the upper bound of $\varpi\circ W^{t}\left(z\right)$ is bounded
by 
\begin{align*}
\left(\underset{i,j}{\max}\left\{ \varpi\left(e_{i}\widetilde{e}_{j}\right)\right\} \right)^{\frac{1}{3}\left(4^{\left\lfloor \nicefrac{\left(t+1\right)}{2}\right\rfloor }-1\right)}\left(\frac{1}{16}\max_{i,j,p,q}\left\{ \gamma_{ij}\widetilde{\gamma}_{pq}\right\} \right)^{\frac{1}{3}\left(4^{\left\lfloor \nicefrac{t}{2}\right\rfloor }-1\right)}\varpi\bigl(z\bigr){}^{4^{\left\lfloor \nicefrac{t}{2}\right\rfloor }}
\end{align*}

that can be written 
\begin{align*}
\left(\frac{1}{16}\underset{i,j}{\max}\left\{ \varpi\left(e_{i}\widetilde{e}_{j}\right)\right\} \max_{i,j,p,q}\left\{ \gamma_{ij}\widetilde{\gamma}_{pq}\right\} \right)^{-\frac{1}{3}}\times\hspace*{6cm}\\
\Biggl(\left(\underset{i,j}{\max}\left\{ \varpi\left(e_{i}\widetilde{e}_{j}\right)\right\} \right)^{\frac{1}{3}\times4^{\left(\left\lfloor \nicefrac{\left(t+1\right)}{2}\right\rfloor -\left\lfloor \nicefrac{t}{2}\right\rfloor \right)}}\left(\frac{1}{16}\max_{i,j,p,q}\left\{ \gamma_{ij}\widetilde{\gamma}_{pq}\right\} \right)^{\frac{1}{3}}\varpi\bigl(z\bigr)\Biggr)^{4^{\left\lfloor \nicefrac{t}{2}\right\rfloor }}
\end{align*}
but $4^{\left(\left\lfloor \nicefrac{\left(t+1\right)}{2}\right\rfloor -\left\lfloor \nicefrac{t}{2}\right\rfloor \right)}=1$
or $4$, so we have 
\[
\left(\underset{i,j}{\max}\left\{ \varpi\left(e_{i}\widetilde{e}_{j}\right)\right\} \right)^{4^{\left(\left\lfloor \nicefrac{\left(t+1\right)}{2}\right\rfloor -\left\lfloor \nicefrac{t}{2}\right\rfloor \right)}}\leq\max\left\{ \underset{i,j}{\max}\left\{ \varpi\left(e_{i}\widetilde{e}_{j}\right)\right\} ,\Bigl(\underset{i,j}{\max}\left\{ \varpi\left(e_{i}\widetilde{e}_{j}\right)\right\} \Bigr)^{4}\right\} 
\]
we also have 
\[
\max\left\{ \underset{i,j}{\max}\left\{ \varpi\left(e_{i}\widetilde{e}_{j}\right)\right\} ,\underset{i,j}{\max}\left\{ \varpi\left(e_{i}\widetilde{e}_{j}\right)\right\} ^{4}\right\} =\underset{i,j}{\max}\left\{ \varpi\left(e_{i}\widetilde{e}_{j}\right),\bigl(\varpi\bigl(e_{i}\widetilde{e}_{j}\bigr)\bigr)^{4}\right\} 
\]
which gives the result. 
\end{proof}
\medskip{}

\section{Normalized gonosomic evolution operators.}

From a gonosomic evolution operator $W$ we define an operator
$V$ which gives the relative frequency distribution of genetic
types.\medskip{}

In this section, we still assume that $\mathbb{K}$ is a commutative
ordered field. For applications in genetics we restrict to the
following simplex of $\mathbb{K}^{n}\times\mathbb{K}^{m}$: 
\[
S^{\:n+m-1}=\left\{ \left(\left(x_{1},\ldots,x_{n}\right),\left(y_{1},\ldots,y_{m}\right)\right)\in\mathbb{K}^{n}\times\mathbb{K}^{m}:x_{i}\geq0,y_{i}\geq0,\sum_{i=1}^{n}x_{i}+\sum_{i=1}^{m}y_{i}=1\right\} 
\]
this simplex is associated with frequency distributions of the
genetic types $e_{i}$ and $\widetilde{e}_{j}$. But as can be
seen from the following example, in general the gonosomic operator
$W$ does not preserve the simplex $S^{\:n+m-1}$.
\begin{example}
In the example \ref{exa:Hybrid-dysgenesis}, with$z=\frac{1}{2}e_{1}+\frac{1}{2}\widetilde{e}_{1}$
and $\mu=0.001$ we get $W\left(\frac{1}{2},\left(\frac{1}{2},0\right)\right)=\left(\frac{1}{8},\left(\frac{999}{8000},\frac{1}{8000}\right)\right)$
and $\varpi\circ W\left(\frac{1}{2},\left(\frac{1}{2},0\right)\right)=\frac{1}{4}$.
\end{example}

For this reason we associate an another operator to $W$. To
define this operator, it is necessary to exclude from the simplex
$S^{\:n+m-1}$ the elements $z\in\mathbb{K}^{n}\times\mathbb{K}^{m}$
such that $\varpi\circ W\left(z\right)=0$. \medskip{}

Using the sets $\mathscr{M}$ and $\mathscr{N}$ introduced in
(\ref{eq:MetN}) we get the following result. 
\begin{prop}
\label{prop:woW(z)=00003D0}Given a non negative gonosomic $\mathbb{K}$-algebra
with $\left(e_{i}\right)_{1\text{\ensuremath{\le}}i\text{\ensuremath{\le}}n}\cup\left(\widetilde{e}_{j}\right)_{1\text{\ensuremath{\le}}j\text{\ensuremath{\le}}m}$
as gonosomic basis. For $z\in\mathbb{K}_{+}^{n}\times\mathbb{K}_{+}^{m}$,
$z=\left(\left(x_{1},\ldots,x_{n}\right),\left(y_{1},\ldots,y_{m}\right)\right)$
we have $\varpi\circ W\left(z\right)=0$ if and only if each
$\left(i,p\right)\in N_{n}\times N_{m}$ verify one of the following
conditions: $\left(i,p\right)\in\mathscr{M}$ or $\left(i,p\right)\in\mathscr{N}$
and $x_{i}y_{p}=0$,.
\end{prop}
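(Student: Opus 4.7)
The plan is to read off the result directly from the explicit formula for $\varpi\circ W$ already recorded in equation (\ref{eq:masse de x^(t)}) and exploit the non-negativity assumption.

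First I would specialise equation (\ref{eq:masse de x^(t)}) to $t=0$, which (after unpacking $z^{(0)}=z$) gives
\[
\varpi\circ W(z)=\sum_{i=1}^{n}\sum_{p=1}^{\nu}\varpi(e_{i}\widetilde{e}_{p})\,x_{i}y_{p}.
\]
Next I would invoke Definition \ref{def:non_negative}: since the algebra is non-negative, $\gamma_{ipk}\geq 0$ and $\widetilde{\gamma}_{ipr}\geq 0$ for all indices, hence by (\ref{eq:omega(ei,ep)-1}) each coefficient $\varpi(e_{i}\widetilde{e}_{p})=\sum_{k}\gamma_{ipk}+\sum_{r}\widetilde{\gamma}_{ipr}$ is $\geq 0$. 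Combined with $x_{i},y_{p}\geq 0$ coming from $z\in\mathbb{R}_{+}^{n}\times\mathbb{R}_{+}^{\nu}$, the displayed double sum is a sum of non-negative reals.

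A finite sum of non-negative reals vanishes iff every summand vanishes, so $\varpi\circ W(z)=0$ is equivalent to $\varpi(e_{i}\widetilde{e}_{p})\,x_{i}y_{p}=0$ for every pair $(i,p)\in N_{n}\times N_{\nu}$. Since we are working over $\mathbb{R}$, a product of reals is zero iff one of the factors is zero, which yields the two alternatives in the statement: either $\varpi(e_{i}\widetilde{e}_{p})=0$, or $\varpi(e_{i}\widetilde{e}_{p})\neq 0$ and $x_{i}y_{p}=0$. The converse direction is then obtained by running the same chain of implications backwards.

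There is no real obstacle here; the only point to be careful about is to invoke the non-negativity hypothesis at the right moment, so that the terms in $\sum_{i,p}\varpi(e_{i}\widetilde{e}_{p})x_{i}y_{p}$ cannot cancel one another. Without non-negativity the equivalence would fail, which is precisely why the statement is placed after Definition \ref{def:non_negative}.
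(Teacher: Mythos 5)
Your proposal is correct and follows essentially the same route as the paper: specialise the formula $\varpi\circ W(z)=\sum_{i,p}\varpi(e_{i}\widetilde{e}_{p})x_{i}y_{p}$ from (\ref{eq:masse de x^(t)}), observe that non-negativity of the structure constants together with $z\in\mathbb{R}_{+}^{n}\times\mathbb{R}_{+}^{\nu}$ makes every summand non-negative, and conclude that the sum vanishes iff each term does. (You are in fact slightly more careful than the paper, which cites the formula ``with $t=1$'' where $t=0$ is the correct specialisation, and which leaves the backward implication implicit.)
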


\begin{proof}
Using (\ref{eq:masse de x^(t)}) with $t=1$ we get $\varpi\circ W\left(z\right)=\sum_{i,p=1}^{n,m}\varpi\left(e_{i}\widetilde{e}_{p}\right)x_{i}y_{p}$,
we have $\varpi\left(e_{i}\widetilde{e}_{p}\right)x_{i}y_{p}\geq0$
for all $i\in N_{n}$ and $p\in N_{m}$, therefore we get $\varpi\left(e_{i}\widetilde{e}_{p}\right)x_{i}y_{p}=0$
for any $\left(i,p\right)\in N_{n}\times N_{m}$ from which the
result follows.
\end{proof}
This result leads to the definition of the following set
\begin{align*}
\mathcal{O}^{\:n,m} & =\left\{ \left(\left(x_{1},\ldots,x_{n}\right),\left(y_{1},\ldots,y_{m}\right)\right)\in\mathbb{K}_{+}^{n}\times\mathbb{K}_{+}^{m}:x_{i}y_{p}=0,\left(i,p\right)\in\mathscr{N}\right\} .
\end{align*}

For $z=\left(\left(x_{1},\ldots,x_{n}\right),\left(y_{1},\ldots,y_{m}\right)\right)$
such that $x_{i}=0$ for all $i\in N_{n}$ or $y_{p}=0$ for
all $p\in N_{m}$ we have $z\in\mathcal{O}^{\:n,m}$, therefore
the set $\mathcal{O}^{\:n,m}$ is not empty.
\begin{rem}
If $\mathscr{M}=\textrm{Ø}$, we get
\[
\mathcal{O}^{\:n,m}=\left\{ \left(\left(x_{i}\right)_{i\in N_{n}},\left(y_{p}\right)_{p\in N_{m}}\right),x_{1}=\ldots=x_{m}=0\text{ or }y_{1}=\ldots=y_{m}=0\right\} .
\]
Indeed, in this case we have $\mathscr{N}=N_{n}\times N_{m}$,
thus if there is $p\in N_{m}$ such that $y_{p}\neq0$ from $x_{i}y_{p}=0$
we get $x_{i}=0$ for all $i\in N_{n}$. And we get $y_{p}=0$
for all $p\in N_{m}$ if there is $i\in N_{n}$ such that $x_{i}\neq0$.
\end{rem}

\medskip{}

In what follows, we examine the links between the set $\mathcal{O}^{\:n,m}$,
the nilpotency of $W$ and the annihilation of $\varpi\circ W^{t}$.
\begin{prop}
\label{prop:W^(t)}In a non negative gonosomic algebra of type
$\left(n,m\right)$:

a) If there is $t_{0}\geq1$ such that $W^{t_{0}}\bigl(z\bigr)=0$
then $W^{t}\bigl(z\bigr)=0$ for all $t\geq t_{0}$.

b) If there is $t_{0}\geq0$ such that $W^{t_{0}}\left(z\right)\in\mathcal{O}^{\:n,m}$
then $W^{t_{0}+1}\left(z\right)=0$.

c) For $z\in\mathbb{K}_{+}^{n}\times\mathbb{K}_{+}^{m}$ and
$t_{0}\geq0$ we have $W^{t_{0}}\left(z\right)\in\mathcal{O}^{\:n,m}\;\Leftrightarrow\;\varpi\circ W^{t_{0}+1}\left(z\right)=0$.

d) For $z\in\mathbb{K}_{+}^{n}\times\mathbb{K}_{+}^{m}$, $z\neq0$,
if $W^{t}\left(z\right)=0$ then there is $0\leq t_{0}<t$ such
that $W^{t_{0}}\left(z\right)\neq0$ and $W^{t_{0}}\left(z\right)\in\mathcal{O}^{\:n,m}$. 
\end{prop}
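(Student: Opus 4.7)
The plan is to handle the four parts sequentially, each building on its predecessor; the whole argument reduces to the explicit componentwise formula (\ref{eq:Op-W}) for $W$ combined with the non-negativity of the structure constants. Part (a) is immediate: since $W(0) = \tfrac{1}{2}\cdot 0^{2} = 0$, a trivial induction from the hypothesis $W^{t_{0}}(z) = 0$ propagates $W^{t}(z) = 0$ for every $t \geq t_{0}$.

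The substantive core is (b). Writing $w = W^{t_{0}}(z) = \sum_{i} x_{i} e_{i} + \sum_{p} y_{p} \widetilde{e}_{p} \in \mathcal{O}^{n,\nu}$, I would split the double sum in (\ref{eq:Op-W}) over the partition $N_{n}\times N_{\nu} = \mathscr{N} \cup \mathscr{M}$. Indices $(i,p) \in \mathscr{N}$ contribute nothing because $x_{i}y_{p} = 0$ by the definition of $\mathcal{O}^{n,\nu}$. For $(i,p) \in \mathscr{M}$ we have $\varpi(e_{i}\widetilde{e}_{p}) = \sum_{k} \gamma_{ipk} + \sum_{r} \widetilde{\gamma}_{ipr} = 0$, and since all coefficients are non-negative each $\gamma_{ipk}$ and each $\widetilde{\gamma}_{ipr}$ must vanish individually, killing those terms as well. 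Hence $W(w) = W^{t_{0}+1}(z) = 0$.

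Parts (c) and (d) then follow by recycling (b) together with Proposition \ref{prop:woW(z)=00003D0}. For the forward implication of (c), if $W^{t_{0}}(z) \in \mathcal{O}^{n,\nu}$ then (b) gives $W^{t_{0}+1}(z) = 0$, so $\varpi\circ W^{t_{0}+1}(z) = 0$. For the converse, Proposition \ref{prop:W(R+,R+)}(a) guarantees $W^{t_{0}}(z) \in \mathbb{R}_{+}^{n}\times\mathbb{R}_{+}^{\nu}$, so Proposition \ref{prop:woW(z)=00003D0} applied to $u = W^{t_{0}}(z)$ (read through the equivalent reformulation as membership in $\mathcal{O}^{n,\nu}$) yields $W^{t_{0}}(z) \in \mathcal{O}^{n,\nu}$. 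For (d), I would set $t_{0} = \max\{s : 0 \leq s < t, \; W^{s}(z) \neq 0\}$, a well-defined maximum because $z = W^{0}(z) \neq 0$ places $0$ in the set while $t-1$ is an upper bound; by maximality $W^{t_{0}+1}(z) = 0$, hence $\varpi\circ W^{t_{0}+1}(z) = 0$, and (c) returns $W^{t_{0}}(z) \in \mathcal{O}^{n,\nu}$.

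The only delicate step is (b), where one must juggle the coordinate constraint defining $\mathcal{O}^{n,\nu}$ against the structure-constant constraint defining $\mathscr{M}$; the $\mathscr{N}/\mathscr{M}$ partition of index pairs makes this transparent, and everything else reduces to iteration or to direct invocation of the earlier results.
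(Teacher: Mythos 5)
Your proof is correct and follows essentially the same route as the paper's: part (b) rests on the same $\mathscr{N}/\mathscr{M}$ split with non-negativity forcing the structure constants of sterile crosses to vanish, part (c) is the same equivalence $\varpi\circ W(u)=0\Leftrightarrow u\in\mathcal{O}^{\:n,\nu}$ applied to $u=W^{t_{0}}(z)$, and part (d) is the same extremal-index argument (you take the largest $s<t$ with $W^{s}(z)\neq0$ where the paper takes the smallest $t_{0}$ with $W^{t_{0}+1}(z)=0$, which is the same index). No gaps.
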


\begin{proof}
\textit{a}) Let $W^{t_{0}}\bigl(z\bigr)=\left(\left(x_{1},\ldots,x_{n}\right),\left(y_{1},\ldots,y_{m}\right)\right)$,
from $W^{t_{0}}\bigl(z\bigr)=0$ we deduce that $x_{i}=0$ and
$y_{p}=0$ for all $i$ and $p$ what implies according to (\ref{eq:Op-W})
: $x'_{i}=0$ and $y'_{p}=0$, thus $W^{t_{0}+1}\bigl(z\bigr)=0$
and the result follows by induction.\medskip{}

\textit{b}) With $W^{t_{0}}\left(z\right)=\left(\left(x_{1},\ldots,x_{n}\right),\left(y_{1},\ldots,y_{m}\right)\right)$,
if $W^{t_{0}}\left(z\right)\in\mathcal{O}^{\:n,m}$ we get $x_{i}y_{p}=0$
for all $\left(i,p\right)\in\mathscr{N}$ thus $W^{t_{0}+1}\left(z\right)=\sum_{\left(i,p\right)\in\mathscr{M}}x_{i}y_{p}e_{i}\widetilde{e}_{p}$
but for any $\left(i,p\right)\in\mathscr{M}$ we have $\varpi\left(e_{i}\widetilde{e}_{p}\right)=0$
what implies $e_{i}\widetilde{e}_{p}=0$ and therefore $W^{t_{0}+1}\left(z\right)=0$.\medskip{}

\textit{c}) The necessary condition follows immediately from
\textit{b}). For the sufficiency, let $A$ be a non negative
gonosomic algebra of type $\left(n,m\right)$ and $\left(e_{i}\right)\cup\left(\widetilde{e}_{p}\right)$
a gonosomic basis of $A$. From $z\in\mathbb{K}_{+}^{n}\times\mathbb{K}_{+}^{m}$
and the proposition \ref{prop:W(R+,R+)} we recursively deduce
that $W^{t_{0}+1}\left(z\right)\in\mathbb{K}_{+}^{n}\times\mathbb{K}_{+}^{m}$.
If $W^{t_{0}}\left(z\right)=\left(\left(x_{1},\ldots,x_{n}\right),\left(y_{1},\ldots,y_{m}\right)\right)$
we have with (\ref{eq:masse de x^(t)}) and (\ref{eq:MetN})
$W^{t_{0}+1}\left(z\right)=\sum_{\left(i,p\right)\in\mathscr{M}}\varpi\left(e_{i}\widetilde{e}_{p}\right)x_{i}y_{p}+\sum_{\left(i,p\right)\in\mathscr{N}}\varpi\left(e_{i}\widetilde{e}_{p}\right)x_{i}y_{p}=\sum_{\left(i,p\right)\in\mathscr{N}}\varpi\left(e_{i}\widetilde{e}_{p}\right)x_{i}y_{p}$
therefore if $\varpi\circ W^{t_{0}+1}\left(z\right)=0$ since
$\varpi\left(e_{i}\widetilde{e}_{p}\right)\neq0$ for all $\left(i,p\right)\in\mathscr{N}$,
we get $x_{i}y_{p}=0$ for all $\left(i,p\right)\in\mathscr{N}$,
thus according to the proposition (\ref{prop:woW(z)=00003D0}),
we get $W^{t_{0}}\left(z\right)\in\mathcal{O}^{\:n,m}$.\medskip{}

\textit{d}) Let $z\in\mathbb{K}_{+}^{n}\times\mathbb{K}_{+}^{m}$,
$z\neq0$ such that $W^{t}\left(z\right)=0$, then $t>0$. Let
$t_{0}\geq0$ the smallest integer such that $W^{t_{0}+1}\left(z\right)=0$,
thus $t_{0}+1\leq t$ and $W^{t_{0}}\left(z\right)\neq0$, moreover
according to \textit{c}) we get $W^{t_{0}}\left(z\right)\in\mathcal{O}^{\:n,m}$. 
\end{proof}
\begin{rem}
Genetically, in a bisexual population, concerning a sex-linked
gene, the nilpotency of the operator $W$ means that all genetic
types disappear. According to the result \emph{a}) if all sex-linked
genes disappear from the population they do not reappear. Results
\emph{b}) and \emph{c}) means that for each genetically non-sterile
cross, if the frequency of one of the sex-linked types is zero,
then all types disappear from the population in the next generation.
Finally, result \textit{d}) means that if in a given generation
all the sex-linked types have disappeared, it is because in a
previous generation, for each genetically non-sterile cross,
one of the types had disappeared. 
\end{rem}

Given an gonosomic basis $\left(e_{i}\right)_{1\text{\ensuremath{\le}}i\text{\ensuremath{\le}}n}\cup\left(\widetilde{e}_{p}\right)_{1\text{\ensuremath{\le}}p\text{\ensuremath{\le}}m}$
such that $\gamma_{ipn}=0$ for any $\left(i,p\right)\in N_{n}\times N_{m}$,
according to (\ref{eq:Op-W}) we get $x'_{n}=0$, $x'_{k}=\sum_{i,p=1}^{n-1,m}\gamma_{ipk}x_{i}y_{p}$
and $y'_{r}=\sum_{i,p=1}^{n-1,m}\widetilde{\gamma}_{ipr}x_{i}y_{p}$
for all $k\in N_{n-1}$ and $r\in N_{m}$. From this we conclude
that from the second generation the female type $e_{n}$ disappears
definitively of the population. Furthermore in this case the
evolution operator $W^{2}$ is associated with the gonosomic
algebra of type $\left(n-1,m\right)$ with the gonosomic basis
$\left(e_{i}\right)_{1\text{\ensuremath{\le}}i\text{\ensuremath{\le}}n-1}\cup\left(\widetilde{e}_{p}\right)_{1\text{\ensuremath{\le}}p\text{\ensuremath{\le}}m}$.
We have an analogous conclusion concerning the male type $\widetilde{e}_{m}$
if we assume that $\widetilde{\gamma}_{jqm}=0$ for any $\left(j,q\right)\in N_{n}\times N_{m}$.
\medskip{}

This leads us to give the following definition: 
\begin{defn}
A gonosomic basis $\left(e_{i}\right)_{1\text{\ensuremath{\le}}i\text{\ensuremath{\le}}n}\cup\left(\widetilde{e}_{p}\right)_{1\text{\ensuremath{\le}}p\text{\ensuremath{\le}}m}$
is said to be \textit{irreducible} if it verifies at least one
of the following two conditions:
\begin{align*}
i)\quad\forall k\in N_{n},\exists\left(i,p\right)\in N_{n}\times N_{m};\gamma_{ipk}\neq0,\\
ii)\quad\forall r\in N_{m},\exists\left(j,q\right)\in N_{n}\times N_{m};\widetilde{\gamma}_{jqr}\neq0.
\end{align*}

Otherwise it is said to be \textit{reducible}. And it is said
that a gonosomic algebra is irreducible (resp. reducible) if
its gonosomic base is irreducible (resp. reducible). 
\end{defn}

\begin{example}
The gonosomic algebras given in the examples \ref{exa:male_infertility-1},
\ref{exa:bidirectional}, \ref{exa:Hybrid-dysgenesis} and \ref{exa:partial infertility}
are irreducible. 
\end{example}

From a genetic point of view, as the following proposition shows,
when a gonosomic basis is reducible then at least one of the
female or male genetic types disappears from the population in
the first generation.
\begin{prop}
Let $A$ be a gonosomic algebra of type $\left(n,m\right)$ then
the derived subalgebra $A^{2}$ is gonosomic irreducible of type
$\left(n',m'\right)$ where $n'\leq n$ and $m'\leq m$. 
\end{prop}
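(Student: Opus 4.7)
The plan is to cut the given gonosomic basis of $A$ down to those indices that actually support a nonzero structure constant, and show that what remains carries the required gonosomic structure. Let $(e_i)_{1\le i\le n}\cup(\widetilde{e}_p)_{1\le p\le\nu}$ be a gonosomic basis of $A$ with constants $\gamma_{ipk},\widetilde{\gamma}_{ipr}$. Since $e_ie_j=\widetilde{e}_p\widetilde{e}_q=0$, the subalgebra $A^2$ is spanned by the mixed products $e_i\widetilde{e}_p=\sum_k\gamma_{ipk}e_k+\sum_r\widetilde{\gamma}_{ipr}\widetilde{e}_r$, so only basis vectors whose indices lie in
\[
I'=\{k\in N_n:\gamma_{ipk}\ne 0\text{ for some }(i,p)\},\qquad J'=\{r\in N_\nu:\widetilde{\gamma}_{ipr}\ne 0\text{ for some }(i,p)\}
\]
can contribute. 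Setting $n'=|I'|$ and $\nu'=|J'|$ gives $n'\le n$ and $\nu'\le\nu$ automatically.

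The next step is to exhibit $A^2$ inside the subspace $B=\mathrm{span}(e_k)_{k\in I'}\oplus\mathrm{span}(\widetilde{e}_r)_{r\in J'}$ and verify that the restricted family $(e_k)_{k\in I'}\cup(\widetilde{e}_r)_{r\in J'}$ is a gonosomic basis of $B$. The inclusion $A^2\subseteq B$ is immediate from the display above and the very definition of $I',J'$. The relations $e_ke_{k'}=0$ and $\widetilde{e}_r\widetilde{e}_{r'}=0$ for $k,k'\in I'$ and $r,r'\in J'$ are inherited from $A$, and for $(k,r)\in I'\times J'$ the product
\[
e_k\widetilde{e}_r=\sum_{k'\in I'}\gamma_{krk'}e_{k'}+\sum_{r'\in J'}\widetilde{\gamma}_{krr'}\widetilde{e}_{r'}
\]
stays in $B$ because any nonzero $\gamma_{krk'}$ forces $k'\in I'$ by the definition of $I'$, and analogously for the $\widetilde{e}$-component. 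Thus $B$ is a gonosomic subalgebra of $A$ of type $(n',\nu')$ containing $A^2$.

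To secure irreducibility, observe that each $k\in I'$ is by construction hit by some $(i,p)$ with $\gamma_{ipk}\ne 0$, but that witness $(i,p)$ need not lie in $I'\times J'$. I would iterate the pruning, replacing $I'$ by $\{k\in I':\gamma_{ipk}\ne 0\text{ for some }(i,p)\in I'\times J'\}$ and similarly for $J'$, until both sets stabilize; since the index sets are finite this terminates in finitely many steps, and the resulting basis satisfies the irreducibility condition of the preceding definition, still with $n'\le n$ and $\nu'\le\nu$. The main obstacle I anticipate is reconciling the pruning with the containment $A^2\subseteq B$: each iteration may remove a basis vector that nevertheless contributes to the expression of some element of $A^2$, so the heart of the argument is verifying that the vectors discarded at each stage have no genuine role in $A^2$, and hence that the terminal irreducible subalgebra indeed coincides with $A^2$ rather than a proper sub-object.
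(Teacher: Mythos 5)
Your first two paragraphs reproduce the paper's argument: the paper introduces the complementary sets $L=\left\{ k\in N_{n};\gamma_{ipk}=0,\forall\left(i,p\right)\right\} $ and $M=\left\{ r\in N_{\nu};\widetilde{\gamma}_{ipr}=0,\forall\left(i,p\right)\right\} $, so your $I'$ and $J'$ are exactly $N_{n}\setminus L$ and $N_{\nu}\setminus M$, and your verification that the surviving vectors span a gonosomic subalgebra of type $\left(n',\nu'\right)$ containing $A^{2}$ is the same computation (the paper in fact asserts the equality $A^{2}=\text{span}\left\{ \left(e_{i}\right)_{i\in N_{n}\setminus L}\cup\left(\widetilde{e}_{p}\right)_{p\in N_{\nu}\setminus M}\right\} $ and then reads irreducibility directly off the definition of $L$ and $M$: every retained index admits a witness $\left(i,p\right)$ somewhere in $N_{n}\times N_{\nu}$). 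The paper performs this pruning exactly once; it does not iterate.

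The gap is in your last paragraph, and it sits exactly where you suspect: the iterated pruning does not preserve $A^{2}$, so the deferred verification that ``discarded vectors have no genuine role in $A^{2}$'' is precisely the step that fails. Take the type $\left(2,1\right)$ algebra with $e_{1}\widetilde{e}_{1}=e_{2}$, $e_{2}\widetilde{e}_{1}=\widetilde{e}_{1}$ and all other products zero. The first round gives $I'=\left\{ 2\right\} $ and $J'=\left\{ 1\right\} $; in the second round the only admissible witness pair is $\left(2,1\right)$ and $\gamma_{212}=0$, so $e_{2}$ is discarded, yet $e_{2}=e_{1}\widetilde{e}_{1}$ lies in $A^{2}$. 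Hence the stabilised index sets can fail to carry $A^{2}$, and your construction does not terminate at the object the proposition is about. To land where the paper lands you must stop after one round, take $\left(e_{k}\right)_{k\in I'}\cup\left(\widetilde{e}_{r}\right)_{r\in J'}$ as the gonosomic basis of $A^{2}$, and obtain irreducibility from the defining property of $I'$ and $J'$, with the witness $\left(i,p\right)$ allowed to range over all of $N_{n}\times N_{\nu}$; whether that suffices for the irreducibility condition as stated (which quantifies over the index sets of the new basis) is exactly the issue your instinct was probing, and it is a point the paper passes over without comment rather than one your iteration resolves.
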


\begin{proof}
It is immediate that if $A$ is irreducible then $A^{2}$ is
also gonosomic irreducible. If $A$ is reducible, let the sets
\begin{align*}
L & =\left\{ k\in N_{n};\gamma_{ipk}=0,\forall\left(i,p\right)\in N_{n}\times N_{m}\right\} \\
M & =\left\{ r\in N_{m};\widetilde{\gamma}_{ipr}=0,\forall\left(i,p\right)\in N_{n}\times N_{m}\right\} ,
\end{align*}
according to (\ref{eq:Op-W}) we have $x'_{k}=y'_{r}=0$ for
all $k\in L$ and $r\in L$, it follows that $A^{2}=\text{span}\left\{ \left(e_{i}\right)_{i\in N_{n}\setminus L}\cup\left(\widetilde{e}_{p}\right)_{p\in N_{m}\setminus M}\right\} $
because for all $i\in N_{n}\setminus L$ and $p\in N_{m}\setminus M$
we have $e_{i}\widetilde{e}_{p}=\sum_{k\in N_{n}\setminus L}\gamma_{ipk}e_{k}+\sum_{r\in N_{m}\setminus M}\widetilde{\gamma}_{ipr}\widetilde{e}_{r}$. 
\end{proof}
\begin{prop}
\label{prop:For_def_of_V}Let $A$ be an irreducible non negative
gonosomic $\mathbb{K}$-algebra of type $\left(n,m\right)$ and
$z\in A$, if $W\left(z\right)\in\mathcal{O}^{\:n,m}$ then $z\in\mathcal{O}^{\:n,m}$
. 
\end{prop}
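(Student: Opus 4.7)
The plan is to argue by contrapositive, assuming $z \notin \mathcal{O}^{n,\nu}$ (with $z \in \mathbb{R}_{+}^{n}\times\mathbb{R}_{+}^{\nu}$, since $\mathcal{O}^{n,\nu}$ sits inside the non-negative orthant) and deriving $W(z) \notin \mathcal{O}^{n,\nu}$. The first reduction is through Proposition \ref{prop:W^(t)} part (c), which gives $W(z) \in \mathcal{O}^{n,\nu} \Leftrightarrow \varpi\circ W^{2}(z)=0$. Combined with non-negativity and the fact that $\varpi(e_{k}\widetilde{e}_{r})>0$ precisely when $(k,r)\in\mathscr{N}$, this is equivalent to the simultaneous vanishing $x'_{k}y'_{r}=0$ for all $(k,r)\in\mathscr{N}$, where $W(z)=\sum_{k}x'_{k}e_{k}+\sum_{r}y'_{r}\widetilde{e}_{r}$. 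So the target is to exhibit at least one $(k,r)\in\mathscr{N}$ with $x'_{k}>0$ and $y'_{r}>0$.

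Given a witness $(i_{0},p_{0})\in\mathscr{N}$ with $x_{i_{0}}y_{p_{0}}>0$, the next step uses the pointwise lower bounds $x'_{k}\geq\gamma_{i_{0}p_{0}k}\,x_{i_{0}}y_{p_{0}}$ and $y'_{r}\geq\widetilde{\gamma}_{i_{0}p_{0}r}\,x_{i_{0}}y_{p_{0}}$, which are valid because every structure constant is non-negative. Since $\varpi(e_{i_{0}}\widetilde{e}_{p_{0}})=\sum_{k}\gamma_{i_{0}p_{0}k}+\sum_{r}\widetilde{\gamma}_{i_{0}p_{0}r}>0$, at least one index $k_{0}$ with $\gamma_{i_{0}p_{0}k_{0}}>0$ or one index $r_{0}$ with $\widetilde{\gamma}_{i_{0}p_{0}r_{0}}>0$ exists, yielding $x'_{k_{0}}>0$ or $y'_{r_{0}}>0$. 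A case analysis on whether the fertile cross $(i_{0},p_{0})$ produces both sexes, only females, or only males would then either deliver a pair of positive coordinates directly, or leave a ``missing sex'' to be supplied elsewhere in $W(z)$.

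Irreducibility is what allows the missing sex to be recovered, since by definition every basis element $e_{k}$ and every $\widetilde{e}_{r}$ arises in the offspring of some cross. The hard part of the proof is to convert this global surjectivity of the product map into a concrete strictly positive contribution to $x'_{k}$ or $y'_{r}$, coming from a cross $(i,p)$ that actually lies in the support of $z$, and moreover to land on a pair $(k,r)$ that belongs to $\mathscr{N}$ rather than to $\mathscr{M}$. Concretely, setting $K_{0}=\{k:\gamma_{i_{0}p_{0}k}>0\}$ and $R_{0}=\{r:\widetilde{\gamma}_{i_{0}p_{0}r}>0\}$, the argument must rule out the degenerate possibility $K_{0}\times R_{0}\subseteq\mathscr{M}$, i.e.\ that every cross between offspring of $(i_{0},p_{0})$ is sterile. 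This is the delicate point where irreducibility must be invoked in combination with the structure of $\mathscr{N}$ to force the existence of a fertile offspring pair, thereby producing the contradiction and closing the proof.
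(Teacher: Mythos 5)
Your reduction is sound up to the decisive step: by Proposition \ref{prop:W^(t)} (c), $W\left(z\right)\in\mathcal{O}^{\:n,\nu}$ is equivalent to $\varpi\circ W^{2}\left(z\right)=0$, and by non-negativity this holds if and only if $x'_{k}y'_{r}=0$ for every $\left(k,r\right)\in\mathscr{N}$; so you must exhibit one $\left(k,r\right)\in\mathscr{N}$ with $x'_{k}>0$ and $y'_{r}>0$. But the proposal stops exactly where the proof has to happen: you name the ``delicate point'' (ruling out $K_{0}\times R_{0}\subseteq\mathscr{M}$, or supplying the missing sex from elsewhere in $W\left(z\right)$) and assert that irreducibility will close it, without an argument. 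That step cannot be carried out, because the statement as written is false. Take $n=\nu=2$ with $e_{1}\widetilde{e}_{1}=e_{2}$, $e_{2}\widetilde{e}_{2}=e_{1}+\widetilde{e}_{1}+\widetilde{e}_{2}$ and $e_{1}\widetilde{e}_{2}=e_{2}\widetilde{e}_{1}=0$. This algebra is non negative and irreducible (each of $e_{1},e_{2},\widetilde{e}_{1},\widetilde{e}_{2}$ occurs in some product), and $\mathscr{N}=\left\{ \left(1,1\right),\left(2,2\right)\right\} $. For $z=e_{1}+\widetilde{e}_{1}$ we have $x_{1}y_{1}=1$, so $z\notin\mathcal{O}^{\:2,2}$, yet $W\left(z\right)=e_{2}$ satisfies $x'_{1}y'_{1}=x'_{2}y'_{2}=0$, i.e. $W\left(z\right)\in\mathcal{O}^{\:2,2}$ and $W^{2}\left(z\right)=0$. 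The fertile cross $\left(1,1\right)$ produces only females of type $e_{2}$, and these can only breed with the absent males $\widetilde{e}_{2}$: irreducibility guarantees that every type is producible by \emph{some} cross, not that it is produced by a cross present in the support of $z$, which is the property your argument would need.

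For comparison, the paper's own proof founders at the same spot: from $x'_{k}=0$ it deduces $\gamma_{ipk}x_{i}y_{p}=0$ for all $\left(i,p\right)$ and uses irreducibility to find one $\left(j,q\right)\in\mathscr{N}$ with $x_{j}y_{q}=0$, but this only reaches those pairs of $\mathscr{N}$ that feed a vanishing coordinate of $W\left(z\right)$; it never establishes $x_{i}y_{p}=0$ for \emph{every} $\left(i,p\right)\in\mathscr{N}$ (in the example above the pair $\left(1,1\right)$ is never reached). So your instinct that the unresolved step is the whole difficulty was correct, and the honest conclusion is that the proposition requires a stronger hypothesis --- for instance that every fertile cross produces at least one fertile pair of offspring, i.e. for each $\left(i,p\right)\in\mathscr{N}$ there exist $k,r$ with $\gamma_{ipk}>0$, $\widetilde{\gamma}_{ipr}>0$ and $\left(k,r\right)\in\mathscr{N}$ --- under which your lower bounds $x'_{k}\geq\gamma_{i_{0}p_{0}k}x_{i_{0}}y_{p_{0}}$ and $y'_{r}\geq\widetilde{\gamma}_{i_{0}p_{0}r}x_{i_{0}}y_{p_{0}}$ immediately finish the proof; irreducibility alone does not suffice.
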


\begin{proof}
Let be $z=\Bigl(\left(x_{i}\right)_{i\in N_{n}}\left(y_{p}\right)_{p\in N_{m}}\Bigr)$
and $W\left(z\right)=\Bigl(\left(x'_{i}\right)_{i\in N_{n}}\left(y'_{p}\right)_{p\in N_{m}}\Bigr)$.
If $W\left(z\right)\in\mathcal{O}^{\:n,m}$ then for all $\left(k,r\right)\in\mathscr{N}$
we have $x'_{k}y'_{r}=0$ and thus $x'_{k}=0$ or $y'_{r}=0$.
According to (\ref{eq:Op-W}), $x'_{k}=\sum_{i,p=1}^{n,m}\gamma_{ipk}x_{i}y_{p}$
and $y'_{r}=\sum_{i,p=1}^{n,m}\widetilde{\gamma}_{ipr}x_{i}y_{p}$.
If $x'_{k}=0$ we have $\gamma_{ipk}x_{i}y_{p}=0$ for all $\left(i,p\right)\in N_{n}\times N_{m}$,
but as $A$ is irreducible there is $\left(j,q\right)\in N_{n}\times N_{m}$
such that $\gamma_{jqk}\neq0$ and thus $\left(j,q\right)\in\mathscr{N}$.
By a similar reasoning we show that $\mathscr{N}\neq\textrm{Ø}$
and therefore that $z\in\mathcal{O}^{\:n,m}$ when we assume
$y'_{r}=0$. 
\end{proof}
\medskip{}

In the following, for any irreducible non negative gonosomic
$\mathbb{K}$-algebra of type $\left(n,m\right)$ we define the
set 
\[
S^{\:n,m}=S^{\:n+m-1}\setminus\mathcal{O}^{\:n,m}
\]
and the operator $V$ called the \textit{normalized gonosomic
operator} of $W$ 
\[
V:S^{\:n,m}\rightarrow S^{\:n,m},\quad z\mapsto\frac{1}{\varpi\circ W\left(z\right)}W\left(z\right).
\]

Using the relations (\ref{eq:Op-W}) we can express the operator
$V$ in coordinate form: 
\begin{equation}
V:\left\{ \begin{aligned}x_{k}' & =\dfrac{\sum_{i,p=1}^{n,m}\gamma_{ipk}x_{i}y_{p}}{\sum_{i,p=1}^{n,m}\varpi\left(e_{i}\widetilde{e}_{p}\right)x_{i}y_{p}},\quad k=1,\ldots,n\medskip\\
y'_{r} & =\dfrac{\sum_{i,p=1}^{n,m}\widetilde{\gamma}_{ipr}x_{i}y_{p}}{\sum_{i,p=1}^{n,m}\varpi\left(e_{i}\widetilde{e}_{p}\right)x_{i}y_{p}},\quad r=1,\ldots,m.
\end{aligned}
\right.\label{eq:Op-V}
\end{equation}
The coordinates of the operator $V$ correspond to the relative
frequencies of genetic types. 
\begin{prop}
\label{prop:V_well_def}The operator $V$ is well defined. 
\end{prop}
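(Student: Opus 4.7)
The plan is to verify three things for any $z\in S^{\,n,\nu}$: that the scalar $\varpi\circ W(z)$ is nonzero (so the division makes sense), that the resulting vector $V(z)$ lies in the simplex $S^{\,n+\nu-1}$, and finally that $V(z)$ avoids the exceptional set $\mathcal{O}^{\,n,\nu}$, so that $V$ really maps $S^{\,n,\nu}$ into itself.

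First I would handle the denominator. Since $z\in S^{\,n,\nu}=S^{\,n+\nu-1}\setminus\mathcal{O}^{\,n,\nu}$, we have in particular $z\in\mathbb{R}_+^{n}\times\mathbb{R}_+^{\nu}$ with $z\notin\mathcal{O}^{\,n,\nu}$. The earlier proposition characterizing the vanishing of $\varpi\circ W$ gives $\varpi\circ W(z)=0\Leftrightarrow z\in\mathcal{O}^{\,n,\nu}$, so $\varpi\circ W(z)\neq 0$ and the expression $\frac{1}{\varpi\circ W(z)}W(z)$ is meaningful.

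Next I would verify that $V(z)$ lies in the simplex. By Proposition \ref{prop:W(R+,R+)}(a), the non negativity hypothesis on $A$ yields $W(z)\in\mathbb{R}_+^{n}\times\mathbb{R}_+^{\nu}$; dividing by the positive scalar $\varpi\circ W(z)$ preserves non negativity of each coordinate. Linearity of $\varpi$ gives $\varpi(V(z))=\frac{\varpi(W(z))}{\varpi\circ W(z)}=1$, so the coordinates of $V(z)$ are non negative and sum to $1$, i.e.\ $V(z)\in S^{\,n+\nu-1}$.

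Finally I would show $V(z)\notin\mathcal{O}^{\,n,\nu}$, which is the only real content of the statement. Here I use Proposition \ref{prop:For_def_of_V} in contrapositive form: since $z\notin\mathcal{O}^{\,n,\nu}$, that proposition (applicable because $A$ is irreducible and non negative) forces $W(z)\notin\mathcal{O}^{\,n,\nu}$. Because membership in $\mathcal{O}^{\,n,\nu}$ is defined by the vanishing of products of the form $x_{i}y_{p}$ for $(i,p)\in\mathscr{N}$, it is invariant under multiplication by a positive scalar; hence $V(z)$ also lies outside $\mathcal{O}^{\,n,\nu}$, and therefore $V(z)\in S^{\,n,\nu}$. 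The only step that requires more than bookkeeping is this last one, and the whole point of introducing the irreducibility hypothesis and of establishing Proposition \ref{prop:For_def_of_V} beforehand was precisely to make it immediate.
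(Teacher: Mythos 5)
Your proof is correct and follows essentially the same route as the paper: the nonvanishing of the denominator comes from the characterisation $\varpi\circ W(z)=0\Leftrightarrow z\in\mathcal{O}^{\:n,\nu}$, and the stability of $S^{\:n,\nu}$ under $V$ comes from Proposition \ref{prop:For_def_of_V} together with the scale-invariance of membership in $\mathcal{O}^{\:n,\nu}$. The only difference is that you also check explicitly that $V(z)$ lands in the simplex $S^{\:n+\nu-1}$, a verification the paper leaves implicit; this is a harmless and indeed welcome addition.
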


\begin{proof}
Indeed, if in the result \textit{c}) of the proposition \ref{prop:W^(t)}
we take $t_{0}=0$, for $z\in A$ we get that $z\notin\mathcal{O}^{\:n,m}\;\Leftrightarrow\;\varpi\circ W\left(z\right)\neq0$. 
\end{proof}
As the following examples show, on the long term, operator $V$
gives results on the evolution of the relative proportions of
the different types which complement the results on the evolution
in population size obtained with operator $W$, particularly
when, under certain conditions, the number of individuals tends
to infinity.
\begin{example}
From the example \ref{exa:W-male_infertility} we get 
\[
V^{t}\left(x_{1},\left(y_{1},y_{2}\right)\right)=\left(\frac{1}{2},\left(\frac{1-\mu}{2},\frac{\mu}{2}\right)\right)
\]
 for all $t\geq1$ and $\left(x_{1},\left(y_{1},y_{2}\right)\right)\in\mathbb{K}\times\mathbb{K}^{2}$
such that $x_{1}y_{1}\neq0$.\medskip{}

Based on the results obtained in the example \ref{exa:W-bidirectional}
we get
\[
\underset{t\rightharpoondown+\infty}{\lim}V^{t}\left(\left(x_{1},x_{2}\right),\left(y_{1},y_{2}\right)\right)=\begin{cases}
\left(\left(\tfrac{1}{2},0\right),\left(\tfrac{1}{2},0\right)\right) & \text{if }0<x_{2}y_{2}<x_{1}y_{1},\\
\left(\left(\tfrac{1}{4},\tfrac{1}{4}\right),\left(\tfrac{1}{4},\tfrac{1}{4}\right)\right) & \text{if }x_{1}y_{1}=x_{2}y_{2}\neq0,\\
\left(\left(0,\tfrac{1}{2}\right),\left(0,\tfrac{1}{2}\right)\right) & \text{if }0<x_{1}y_{1}<x_{2}y_{2}.
\end{cases}
\]

Using the results from the example \ref{exa:W-Hybrid-dysgenesis},
let be
\[
F_{1}\left(t\right)=\frac{x_{1}^{\left(t\right)}}{2x_{1}^{\left(t\right)}+2x_{2}^{\left(t\right)}}=\frac{1}{2+2\frac{x_{2}^{\left(t\right)}}{x_{1}^{\left(t\right)}}}\text{ and }F_{2}\left(t\right)=\frac{x_{2}^{\left(t\right)}}{2x_{1}^{\left(t\right)}+2x_{2}^{\left(t\right)}}=\frac{1}{2\frac{x_{1}^{\left(t\right)}}{x_{2}^{\left(t\right)}}+2}.
\]
With this, for any $t\geq1$ we get $V^{t}\left(\left(x_{1},x_{2}\right),\left(y_{1},y_{2}\right)\right)=\left(\left(F_{1}\left(t\right),F_{2}\left(t\right)\right),\left(F_{1}\left(t\right),F_{2}\left(t\right)\right)\right)$.
When $x_{2}=0$ we have $x_{2}^{\left(t\right)}=y_{2}^{\left(t\right)}$
for all $t\geq1$ and if $x_{1}y_{1}\neq0$ we get $F_{1}\left(t\right)=\frac{1}{2}$
and $F_{2}\left(t\right)=0$. Next, it is clear that if $\underset{t\rightharpoondown+\infty}{\lim}x_{1}^{\left(t\right)}\in\left\{ 0,2\right\} $
and $\underset{t\rightharpoondown+\infty}{\lim}x_{2}^{\left(t\right)}=+\infty$
we get $\underset{t\rightharpoondown+\infty}{\lim}F_{1}\left(t\right)=0$
and $\underset{t\rightharpoondown+\infty}{\lim}F_{2}\left(t\right)=\frac{1}{2}$.
When $\underset{t\rightharpoondown+\infty}{\lim}x_{1}^{\left(t\right)}=\underset{t\rightharpoondown+\infty}{\lim}x_{2}^{\left(t\right)}=+\infty$,
noting that for $t\geq1$ we have $\tfrac{x_{2}^{\left(t+1\right)}}{x_{1}^{\left(t+1\right)}}=\tfrac{\left(x_{1}^{\left(t\right)}+x_{2}^{\left(t\right)}\right)x_{2}^{\left(t\right)}}{\left(x_{1}^{\left(t\right)}\right)^{2}}=\left(\tfrac{x_{2}^{\left(t\right)}}{x_{1}^{\left(t\right)}}\right)^{2}+\tfrac{x_{2}^{\left(t\right)}}{x_{1}^{\left(t\right)}}$,
setting $z^{\left(t\right)}=\frac{x_{2}^{\left(t\right)}}{x_{1}^{\left(t\right)}}$
we get $z^{\left(t+1\right)}=\left(z^{\left(t\right)}\right)^{2}+z^{\left(t\right)}$,
it follows that $\underset{t\rightharpoondown+\infty}{\lim}z^{\left(t\right)}=+\infty$
thus $\underset{t\rightharpoondown+\infty}{\lim}\frac{x_{2}^{\left(t\right)}}{x_{1}^{\left(t\right)}}=+\infty$
and therefore $\underset{t\rightharpoondown+\infty}{\lim}F_{1}\left(t\right)=0$
and $\underset{t\rightharpoondown+\infty}{\lim}F_{2}\left(t\right)=\frac{1}{2}$.
Finally we got\ref{prop:W^(t)}
\[
\underset{t\rightharpoondown+\infty}{\lim}V^{t}\left(\left(x_{1},x_{2}\right),\left(y_{1},y_{2}\right)\right)=\begin{cases}
\left(\left(\tfrac{1}{2},0\right),\left(\tfrac{1}{2},0\right)\right) & \text{if }x_{1}y_{1}\neq0\text{ and }x_{2}=0,\\
\left(\left(0,\tfrac{1}{2}\right),\left(0,\tfrac{1}{2}\right)\right) & \text{if }x_{1}y_{1}<4\text{ and }x_{1}y_{1}+4x_{2}>8\text{ or }4\leq x_{1}y_{1}.
\end{cases}
\]

It is observed that if the population contains females of type
$P$, then in the event of a rise in the temperature of the environment,
type $M$ disappears from the population and is replaced by type
$P$, which ensures the survival of the species.

\medskip{}
\end{example}

\begin{lem}
\label{lem:woW<>0}In an irreducible non negative gonosomic algebra,
for all $z\in S^{\:n,m}$ and $t\geq1$ we have $\varpi\circ W^{t}\left(z\right)\neq0$.
\end{lem}

\begin{proof}
By induction on $t\geq1$. According to the proposition (\ref{prop:V_well_def})
we have $\varpi\circ W\left(z\right)\neq0$. Suppose that $\varpi\circ W^{k}\left(z\right)\neq0$
for all $1\leq k\leq t$, then we have $\varpi\circ W^{t+1}\left(z\right)\neq0$,
indeed if we suppose that $\varpi\circ W^{t+1}\left(z\right)=0$,
according to the result \emph{c}) of the proposition (\ref{prop:W^(t)})
we get $W^{t}\left(z\right)\in\mathcal{O}^{\:n,m}$, but from
the proposition (\ref{prop:For_def_of_V}) it follows that $W^{t-1}\left(z\right)\in\mathcal{O}^{\:n,m}$,
which implies, according to the result \emph{c}) of the proposition
(\ref{prop:W^(t)}) that $\varpi\circ W^{t}\left(z\right)=0$.,
contradiction
\end{proof}
There is a relation between the dynamics of the operators $V$
and $W$. 
\begin{prop}
\label{prop:V^t}In an irreducible non negative gonosomic algebra,
for all $z\in S^{\:n,m}$ and $t\geq0$ we have 
\begin{align*}
a)\quad & V^{t}\left(z\right)=\frac{1}{\varpi\circ W^{t}\left(z\right)}W^{t}\left(z\right),\\
b)\quad & V^{t}\left(\lambda z\right)=V^{t}\left(z\right),\qquad\left(\forall\lambda\in\mathbb{K},\lambda\neq0\right),\\
c)\quad & V^{t}\left(z\right)\neq0.
\end{align*}
\end{prop}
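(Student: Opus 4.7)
All three statements admit a clean inductive proof on $t$, driven by the quadratic homogeneity of $W$: since $W\left(z\right)=\frac{1}{2}z^{2}$ one has $W\left(\lambda z\right)=\lambda^{2}W\left(z\right)$ and hence, by linearity of $\varpi$, $\varpi\circ W\left(\lambda z\right)=\lambda^{2}\varpi\circ W\left(z\right)$ for every $\lambda\in\mathbb{R}$. Before iterating, I would first establish the well-posedness fact that $V^{t}\left(z\right)\in S^{\:n,\nu}$ for every $t\geq0$ and every $z\in S^{\:n,\nu}$; this makes $\left(c\right)$ immediate and ensures that the scalars $\varpi\circ W^{t}\left(z\right)$ appearing in $\left(a\right)$ are all nonzero.

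The well-posedness is itself an induction using Proposition \ref{prop:For_def_of_V}. Its contrapositive reads: if $z\notin\mathcal{O}^{\:n,\nu}$ then $W\left(z\right)\notin\mathcal{O}^{\:n,\nu}$. Combined with the equivalence $\varpi\circ W\left(z\right)=0\Leftrightarrow z\in\mathcal{O}^{\:n,\nu}$ established earlier, this forces $\varpi\circ W\left(z\right)\neq0$, so $V\left(z\right)$ is defined; since rescaling an element of $\mathbb{R}_{+}^{n}\times\mathbb{R}_{+}^{\nu}\setminus\mathcal{O}^{\:n,\nu}$ stays outside $\mathcal{O}^{\:n,\nu}$, we conclude $V\left(z\right)\in S^{\:n,\nu}$. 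Iterating yields $V^{t}\left(z\right)\in S^{\:n,\nu}$ for every $t$; in particular $\varpi\left(V^{t}\left(z\right)\right)=1$, which settles $\left(c\right)$.

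For $\left(a\right)$, the case $t=0$ uses $\varpi\left(z\right)=1$. For the inductive step, set $c_{t}=\varpi\circ W^{t}\left(z\right)\neq0$ and assume $V^{t}\left(z\right)=c_{t}^{-1}W^{t}\left(z\right)$. Applying homogeneity once gives $W\bigl(V^{t}\left(z\right)\bigr)=c_{t}^{-2}W^{t+1}\left(z\right)$, whence $\varpi\circ W\bigl(V^{t}\left(z\right)\bigr)=c_{t}^{-2}\varpi\circ W^{t+1}\left(z\right)$; the defining ratio for $V^{t+1}\left(z\right)=V\bigl(V^{t}\left(z\right)\bigr)$ then simplifies exactly to $W^{t+1}\left(z\right)/\varpi\circ W^{t+1}\left(z\right)$, closing the induction.

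For $\left(b\right)$, the same homogeneity yields $V\left(\lambda z\right)=\lambda^{2}W\left(z\right)/\bigl(\lambda^{2}\varpi\circ W\left(z\right)\bigr)=V\left(z\right)$ whenever both sides make sense, so iterating delivers $V^{t}\left(\lambda z\right)=V^{t}\left(z\right)$ for $t\geq1$. I expect the only real obstacle to be a minor reconciliation of domains: since $V$ is formally defined on $S^{\:n,\nu}$ while $\lambda z$ generally leaves the simplex, one must tacitly extend $V$ to the cone $\left(\mathbb{R}_{+}^{n}\times\mathbb{R}_{+}^{\nu}\right)\setminus\mathcal{O}^{\:n,\nu}$ via the same formula (\ref{eq:Op-V}); this extension is consistent because its image always lies in $S^{\:n,\nu}$, after which the stability argument of the second paragraph takes over and the homogeneity cancellation finishes $\left(b\right)$.
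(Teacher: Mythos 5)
Your proof is correct and follows essentially the same route as the paper's: induction on $t$ driven by the homogeneity $W(\lambda z)=\lambda^{2}W(z)$, with part (a) closed exactly as in the paper and part (b) obtained by cancelling the $\lambda^{2^{t}}$ factors. Your explicit well-posedness step (that $V^{t}(z)$ stays in $S^{\:n,\nu}$, via Proposition \ref{prop:For_def_of_V}) is a slightly more careful packaging of what the paper leaves implicit in its ``$V$ is well defined'' proposition, and your remark that (b) really only holds for $t\geq1$ under the convention $V^{0}=\mathrm{id}$ is a fair observation.
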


\begin{proof}
\textit{a}) By induction on $t\geq1$. Suppose that $V^{t}\left(z\right)=\frac{1}{\varpi\circ W^{t}\left(z\right)}W^{t}\left(z\right)$
for a $t\geq1$, according to the lemma \ref{lem:woW<>0} we
have $\varpi\circ W^{t+1}\left(z\right)\neq0$. Next we have
\[
W\left(V^{t}\left(z\right)\right)=\frac{1}{\left(\varpi\circ W^{t}\left(z\right)\right)^{2}}W^{t+1}\left(z\right),\quad\left(*\right)
\]
from this we get
\[
\varpi\circ W\left(V^{t}\left(z\right)\right)=\frac{1}{\left(\varpi\circ W^{t}\left(z\right)\right)^{2}}\varpi\circ W^{t+1}\left(z\right)\quad\left(**\right)
\]
 thus $\varpi\circ W\left(V^{t}\left(z\right)\right)\neq0$.
By the definition of the operator $V$ we have $V^{t+1}\left(z\right)=V\left(V^{t}\left(z\right)\right)=\frac{1}{\varpi\circ W\left(V^{t}\left(z\right)\right)}W\left(V^{t}\left(z\right)\right)$
and using the results $\left(*\right)$ and $\left(**\right)$
we get the relation to the order $t+1$.

\textit{b}) For all $\lambda\in\mathbb{K},\lambda\neq0$ and
$t\geq0$ we have $W^{t}\left(\lambda z\right)=\left(\frac{1}{2}\right)^{2^{t}-1}\lambda^{2^{t}}W^{t}\left(z\right)$
thus $\varpi\circ W^{t}\left(\lambda z\right)=\left(\frac{1}{2}\right)^{2^{t}-1}\lambda^{2^{t}}\varpi\circ W^{t}\left(z\right)$,
therefore if $\varpi\circ W^{t}\left(z\right)\neq0$ we have
also $\varpi\circ W^{t}\left(\lambda z\right)\neq0$ and with
the above result we get $V^{t}\left(\lambda z\right)=V^{t}\left(z\right)$.

\textit{c}) This results from the facts that $V^{t}\left(z\right)\notin\mathcal{O}^{\:n,m}$
and $\left(\left(0\right)_{n},\left(0\right)_{m}\right)\in\mathcal{O}^{\:n,m}$. 
\end{proof}
\begin{prop}
Let $A$ be an irreducible non negative gonosomic algebra and
$\left(e_{i}\right)_{1\text{\ensuremath{\le}}i\text{\ensuremath{\le}}n}\cup\left(\widetilde{e}_{j}\right)_{1\text{\ensuremath{\le}}j\text{\ensuremath{\le}}m}$
a gonosomic basis of $A$.

For all $z\in S^{\:n,m}$ and $t\geq1$ we note $V^{t}\bigl(z\bigr)=\left(v_{1}^{\:\left(t\right)},\ldots,v_{n}^{\:\left(t\right)},w_{1}^{\;\left(t\right)},\ldots,w_{m}^{\;\left(t\right)}\right)$.
Let the sets be $E_{k}=\left\{ \left(i,p\right)\in N_{n}\times N_{m};\gamma_{ipk}>0\right\} $
and $\widetilde{E}_{r}=\left\{ \left(i,p\right)\in N_{n}\times N_{m};\widetilde{\gamma}_{ipr}>0\right\} $,
we have
\[
\dfrac{\underset{\left(i,p\right)\in E_{k}}{\min}\left\{ \gamma_{ipk}\right\} }{\max_{i,p}\left\{ \varpi\left(e_{i}\widetilde{e}_{p}\right)\right\} }\leq v_{k}^{\:\left(t\right)}\leq\dfrac{\max_{i,p}\left\{ \gamma_{ipk}\right\} }{\underset{\left(i,p\right)\in\mathscr{N}}{\min}\left\{ \varpi\left(e_{i}\widetilde{e}_{p}\right)\right\} },
\]
and 
\[
\dfrac{\underset{\left(i,p\right)\in\widetilde{E}_{k}}{\min}\left\{ \widetilde{\gamma}_{ipr}\right\} }{\max_{i,p}\left\{ \varpi\left(e_{i}\widetilde{e}_{p}\right)\right\} }\leq w_{r}^{\;\left(t\right)}\leq\dfrac{\max_{i,p}\left\{ \widetilde{\gamma}_{ipr}\right\} }{\underset{\left(i,p\right)\in\mathscr{N}}{\min}\left\{ \varpi\left(e_{i}\widetilde{e}_{p}\right)\right\} }.
\]
\end{prop}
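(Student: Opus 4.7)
The plan is to reduce the bounds for arbitrary $t\geq 1$ to a single application of the coordinate formula \eqref{eq:Op-V} for $V$, then bound the structure constants by their extreme values on the appropriate index sets.

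First, by Proposition~\ref{prop:V^t}(a), one has $V^{t}(z)=\frac{1}{\varpi\circ W^{t}(z)}W^{t}(z)$, and by \ref{prop:V^t}(c) together with Proposition~\ref{prop:For_def_of_V}, $V^{t-1}(z)\notin\mathcal{O}^{n,\nu}$, so $\varpi\circ W(V^{t-1}(z))>0$. Writing $V^{t-1}(z)=\bigl((u_i),(v_p)\bigr)\in S^{n,\nu}$ and applying $V$ gives
\[
x_k^{(t)}=\frac{\sum_{i,p=1}^{n,\nu}\gamma_{ipk}u_{i}v_{p}}{\sum_{i,p=1}^{n,\nu}\varpi(e_{i}\widetilde{e}_{p})u_{i}v_{p}},\qquad y_r^{(t)}=\frac{\sum_{i,p=1}^{n,\nu}\widetilde{\gamma}_{ipr}u_{i}v_{p}}{\sum_{i,p=1}^{n,\nu}\varpi(e_{i}\widetilde{e}_{p})u_{i}v_{p}}.
\]
Since the algebra is non negative and $\varpi(e_{i}\widetilde{e}_{p})=\sum_{k}\gamma_{ipk}+\sum_{r}\widetilde{\gamma}_{ipr}$, the indices $(i,p)\in\mathscr{M}$ contribute zero in both numerator and denominator, so only indices in $\mathscr{N}$ (resp.\ in $E_k\subseteq\mathscr{N}$ for the numerator of $x_k^{(t)}$) need to be considered.

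For the upper bound on $x_k^{(t)}$, replace each $\gamma_{ipk}$ in the numerator by $\max_{i,p}\{\gamma_{ipk}\}$ and each $\varpi(e_{i}\widetilde{e}_{p})$ in the denominator by $\min_{i,p}\{\varpi(e_{i}\widetilde{e}_{p})\}$; the common factor $\sum_{(i,p)\in\mathscr{N}}u_{i}v_{p}$ cancels and yields the stated bound. For the lower bound, restrict the numerator to $E_k$ (nonempty by irreducibility) and replace each $\gamma_{ipk}$ by $\min_{(i,p)\in E_k}\{\gamma_{ipk}\}$, while in the denominator replace $\varpi(e_{i}\widetilde{e}_{p})$ by $\max_{i,p}\{\varpi(e_{i}\widetilde{e}_{p})\}$, and cancel the common positive factor. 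The bounds on $y_r^{(t)}$ are obtained by the verbatim analogue with $\widetilde{\gamma}_{ipr}$ and $\widetilde{E}_r$.

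The main obstacle is the lower bound: after the cancellations sketched above, the numerator sum runs over $E_k$ while the denominator sum runs over $\mathscr{N}$, and since $E_k\subseteq\mathscr{N}$ the ratio $\sum_{E_k}u_{i}v_{p}/\sum_{\mathscr{N}}u_{i}v_{p}$ pushes in the wrong direction. To close this gap one must exploit the irreducibility of the basis together with the fact that $V^{t-1}(z)\in S^{n,\nu}\setminus\mathcal{O}^{n,\nu}$ to guarantee that a positive share of the mass of $(u,v)$ is carried by pairs in $E_k$; this concentration argument is the delicate step. The upper bound, in contrast, is a direct consequence of the cancellation and requires no support analysis.
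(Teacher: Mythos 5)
Your reduction to a single application of $V$ and your cancellation argument for the upper bounds follow the paper's own proof and are fine (with the caveat you already note: the $\min$ in the denominator must be taken over $\mathscr{N}$, otherwise a sterile cross makes the right-hand side undefined). The genuine gap is the lower bound, which you leave open by appealing to a ``concentration argument'' that would show a positive share of the mass of $(u,v)$ sits on $E_k$. No such argument exists: irreducibility only guarantees $E_k\neq\emptyset$, and it puts no lower bound on the ratio $\sum_{(i,p)\in E_k}u_iv_p\big/\sum_{(i,p)\in\mathscr{N}}u_iv_p$. In fact the stated lower bound is false. Take $n=2$, $\nu=1$, $e_1\widetilde{e}_1=e_1+\widetilde{e}_1$, $e_2\widetilde{e}_1=e_2+\widetilde{e}_1$: this algebra is irreducible and non negative, $E_1=\{(1,1)\}$, $\min_{(i,p)\in E_1}\{\gamma_{ip1}\}=1$ and $\max_{i,p}\{\varpi(e_i\widetilde{e}_1)\}=2$, so the claimed bound is $x_1^{(t)}\geq\tfrac12$; but for $z=(x_1,x_2,y_1)\in S^{2,1}$ one computes $V(z)=\bigl(\tfrac{x_1}{2(x_1+x_2)},\tfrac{x_2}{2(x_1+x_2)},\tfrac12\bigr)$, whose first coordinate is arbitrarily small when $x_1\ll x_2$. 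The lower bounds only hold under the extra hypothesis $E_k=N_n\times N_\nu$ (resp.\ $\widetilde{E}_r=N_n\times N_\nu$), i.e.\ when every cross can produce the type in question, in which case your cancellation goes through verbatim.

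For comparison, the paper's proof commits exactly the step you refused to make: its first displayed inequality asserts $\min_{(i,p)\in E_k}\{\gamma_{ipk}\}\sum_{i,p=1}^{n,\nu}x_i^{(t-1)}y_p^{(t-1)}\leq\sum_{i,p=1}^{n,\nu}\gamma_{ipk}\,x_i^{(t-1)}y_p^{(t-1)}$, which silently replaces the sum over $E_k$ on the right by the sum over all of $N_n\times N_\nu$ on the left, in the wrong direction whenever $E_k\subsetneq N_n\times N_\nu$. So your diagnosis of where the difficulty lies is exactly right; the proposal is nonetheless incomplete as a proof, because the missing step is not a technicality to be filled in but the point at which the statement itself breaks down.
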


\begin{proof}
For any $t\geq1$ we note $W^{t-1}\left(z\right)=\sum_{i=1}^{n}x_{i}^{\left(t-1\right)}e_{i}+\sum_{p=1}^{m}y_{p}^{\left(t-1\right)}\widetilde{e}_{p}$.
For all $k\in N_{n}$ we have $v_{k}^{\left(t\right)}=\frac{x_{k}^{\left(t\right)}}{\varpi\circ W^{t}\left(z\right)}$
According to \ref{eq:masse de x^(t)} we have $\varpi\circ W^{t}\left(z\right)=\sum_{i,p=1}^{n,m}\varpi\left(e_{i}\widetilde{e}_{p}\right)x_{i}^{\left(t-1\right)}y_{p}^{\left(t-1\right)}$,
it follows that 
\begin{align*}
0<\underset{\left(i,p\right)\in\mathscr{N}}{\min}\left\{ \varpi\left(e_{i}\widetilde{e}_{p}\right)\right\} \Bigl(\sum_{i,p=1}^{n,m}x_{i}^{\left(t-1\right)}y_{p}^{\left(t-1\right)}\Bigr) & \leq\varpi\circ W^{t}\left(z\right)\leq\underset{i,p}{\max}\left\{ \varpi\left(e_{i}\widetilde{e}_{p}\right)\right\} \Bigl(\sum_{i,p=1}^{n,m}x_{i}^{\left(t-1\right)}y_{p}^{\left(t-1\right)}\Bigr),
\end{align*}
and by using the inequalities given in the proposition \ref{prop:x_k and y_k}
we get the inequalities for $v_{r}^{\;\left(t\right)}$ . A similar
reasoning gives the inequalities for $w_{r}^{\;\left(t\right)}$. 
\end{proof}
\medskip{}

There is a relation between the fixed points of the operators
$W$ and $V$, for this we introduce the following definition. 
\begin{defn}
A point $z=\left(\left(x_{1},\ldots,x_{n}\right),\left(y_{1},\ldots,y_{m}\right)\right)\in\mathbb{K}^{n}\times\mathbb{K}^{m}$
of a gonosomic algebra of type $\left(n,m\right)$ is non-negative
and \emph{normalizable} if it satisfies the following conditions
$x_{i},y_{j}\geq0$ and $\sum_{i=1}^{n}x_{i}+\sum_{j=1}^{m}y_{j}\neq0$. 
\end{defn}

A consequence of this definition is that for any non-negative
and normalizable point $z$ we have $\varpi\left(z\right)\neq0$. 
\begin{prop}
In an irreducible non negative gonosomic $\mathbb{K}$-algebra,
the map $z^{*}\mapsto\frac{1}{\varpi\left(z^{*}\right)}z^{*}$
is an one-to-one correspondence between the set of non-negative
and normalizable fixed point of $W$ and the set of fixed points
of the operator $V$. 
\end{prop}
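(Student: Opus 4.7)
The plan is to exhibit an explicit inverse to $\Phi:z^{*}\mapsto z^{*}/\varpi(z^{*})$ rather than verifying injectivity and surjectivity separately. Define $\Psi(u)=u/\bigl(\varpi\circ W(u)\bigr)$ on fixed points $u$ of $V$; I will check that $\Phi$ and $\Psi$ take values in the prescribed sets and that $\Phi\circ\Psi$ and $\Psi\circ\Phi$ are the identities.

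For the forward direction, let $z^{*}$ be a non-negative normalizable fixed point of $W$. Then $\varpi(z^{*})>0$, so $\Phi(z^{*})$ has non-negative coordinates summing to one and lies in $S^{\,n+\nu-1}$. From $W(z^{*})=z^{*}$ we get $\varpi\circ W(z^{*})=\varpi(z^{*})\neq 0$, so Proposition \ref{prop:W^(t)}(c) with $t_{0}=0$ gives $z^{*}\notin\mathcal{O}^{\,n,\nu}$; scale invariance of $\mathcal{O}^{\,n,\nu}$ then yields $\Phi(z^{*})\in S^{\,n,\nu}$. A direct computation using $W(z^{*})=z^{*}$ and the quadratic homogeneity of $W$ shows
\[
W\bigl(\Phi(z^{*})\bigr)=\frac{W(z^{*})}{\varpi(z^{*})^{2}}=\frac{z^{*}}{\varpi(z^{*})^{2}},\qquad \varpi\circ W\bigl(\Phi(z^{*})\bigr)=\frac{1}{\varpi(z^{*})},
\]
whence $V\bigl(\Phi(z^{*})\bigr)=z^{*}/\varpi(z^{*})=\Phi(z^{*})$.

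For the reverse direction, suppose $V(u)=u$, which rearranges to $W(u)=c\,u$ with $c:=\varpi\circ W(u)$. Since $u\in S^{\,n,\nu}$ forces $u\notin\mathcal{O}^{\,n,\nu}$, Proposition \ref{prop:woW(z)=00003D0} yields $c\neq 0$, and non-negativity of $u$ together with that of $W$ forces $c>0$. Setting $z^{*}=\Psi(u)=u/c$,
\[
W(z^{*})=\tfrac{1}{2}(u/c)^{2}=\tfrac{1}{c^{2}}W(u)=\tfrac{1}{c^{2}}\cdot cu=\tfrac{u}{c}=z^{*},
\]
so $z^{*}$ is a fixed point of $W$, non-negative, and normalizable because $\varpi(z^{*})=\varpi(u)/c=1/c>0$.

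The mutual inversion relations $\Phi\circ\Psi(u)=(u/c)/(1/c)=u$ and $\Psi\circ\Phi(z^{*})=\Phi(z^{*})\cdot\varpi(z^{*})=z^{*}$ (the latter using the computation $\varpi\circ W(\Phi(z^{*}))=1/\varpi(z^{*})$ established above) are then immediate, which establishes the claimed bijection. No step presents a genuine obstacle; the only subtle point is the appeal to the characterisation of $\mathcal{O}^{\,n,\nu}$ to ensure $c>0$, which is precisely where the non-negativity hypothesis (and, via membership in $S^{\,n,\nu}$, the irreducibility hypothesis) is used.
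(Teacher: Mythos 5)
Your proof is correct and follows essentially the same route as the paper's: both directions rest on the same homogeneity computation $W\left(\lambda z\right)=\lambda^{2}W\left(z\right)$ applied to a fixed point, together with the characterisation of $\mathcal{O}^{\:n,\nu}$ as the vanishing locus of $\varpi\circ W$. You go slightly further than the paper by explicitly verifying that the two maps are mutually inverse (the paper only exhibits the two directions and leaves the bijectivity check implicit), which is a welcome completion of the argument.
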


\begin{proof}
Let $A$ be an irreducible non negative gonosomic algebra. If
$z^{*}\in A$ verifies $W\left(z^{*}\right)=z^{*}$ then first
$\varpi\circ W\left(z^{*}\right)=\varpi\left(z^{*}\right)\neq0$,
next $W\left(\frac{1}{\varpi\left(z^{*}\right)}z^{*}\right)=\frac{1}{2\varpi\left(z^{*}\right)^{2}}z^{*}$
thus $\varpi\circ W\left(\frac{1}{\varpi\left(z^{*}\right)}z^{*}\right)=\frac{1}{2\varpi\left(x^{*}\right)}$
therefore $V\left(\frac{1}{\varpi\left(z^{*}\right)}z^{*}\right)=\frac{1}{\varpi\left(z^{*}\right)}z^{*}$
which proves that $\frac{1}{\varpi\left(z^{*}\right)}z^{*}$
is a fixed point of $V$. Therefore the map $\Psi:S^{n,m}\rightarrow S^{n,m}$,
$z\mapsto\frac{z}{\varpi\left(z\right)}$ maps the set of fixed
points for $W$ and $V$. Let $x^{*}$ and $y^{*}$ two fixed
points of $W$ such that $\Psi\left(x^{*}\right)=\Psi\left(y^{*}\right)$
by applying $\varpi\circ W$ to this equality we get $\frac{1}{\varpi\left(x^{*}\right)}=\frac{1}{\varpi\left(y^{*}\right)}$,
with this we deduce from $\Psi\left(x^{*}\right)=\Psi\left(y^{*}\right)$
that $x^{*}=y^{*}$, therefore $\Psi$ is one-to-one. Next, given
$z^{*}\in A$ such that $V\left(z^{*}\right)=z^{*}$ then we
have $\varpi\circ W\left(z^{*}\right)\neq0$ and dividing the
two members of this equality by $\varpi\circ W\left(z^{*}\right)$
we get $\frac{1}{\varpi\circ W\left(z^{*}\right)^{2}}W\left(z^{*}\right)=\frac{1}{\varpi\circ W\left(z^{*}\right)}z^{*}$
in other words $\frac{1}{\varpi\circ W\left(z^{*}\right)}z^{*}$
is a fixed point of $W$, therefore $\Psi$ is surjective.
\end{proof}
\begin{example}
In example \ref{exa:W-male_infertility}, the fixed points of
$W$are $\left(0,\left(0,0\right)\right)$ and $\left(\frac{2}{1-\mu},\left(2,\frac{2\mu}{1-\mu}\right)\right)$,
therefore the fixed point of $V$ is $\left(\frac{1}{2},\left(\frac{1-\mu}{2},\frac{\mu}{2}\right)\right)$.\medskip{}

In example \ref{exa:W-bidirectional}, the fixed points of $W$are
$\left(\left(0,0\right),\left(0,0\right)\right)$, $\left(\left(2,0\right),\left(2,0\right)\right)$,
$\left(\left(0,2\right),\left(0,2\right)\right)$ and $\left(\left(2,2\right),\left(2,2\right)\right)$
it follows that the fixed points of $V$ are $\left(\left(\frac{1}{2},0\right),\left(\frac{1}{2},0\right)\right)$,
$\left(\left(0,\frac{1}{2}\right),\left(0,\frac{1}{2}\right)\right)$
and $\left(\left(\frac{1}{4},\frac{1}{4}\right),\left(\frac{1}{4},\frac{1}{4}\right)\right)$.\medskip{}

For example \ref{exa:W-Hybrid-dysgenesis}, the fixed points
of $W$are $\left(\left(0,0\right),\left(0,0\right)\right)$,
$\left(\left(2,0\right),\left(2,0\right)\right)$ and $\left(\left(0,2\right),\left(0,2\right)\right)$,
therefore the fixed point of $V$ are $\left(\left(\frac{1}{2},0\right),\left(\frac{1}{2},0\right)\right)$
and $\left(\left(0,\frac{1}{2}\right),\left(0,\frac{1}{2}\right)\right)$.

\medskip{}

In example \ref{exa:partial infertility}, we find that $W$
has the following six fixed points: \smallskip{}

\begin{tabular}{ccccc}
$\left(\left(0,0\right),\left(0,0\right)\right)$, &  & $\left(\left(2,0\right),\left(2,0\right)\right)$,\medskip &  & \tabularnewline
$\left(\left(\frac{2}{1-\mu},0\right),\left(0,\frac{2}{1-\mu}\right)\right)$ &  & $\left(\left(0,\frac{2}{1-\nu}\right),\left(\frac{2}{1-\nu},0\right)\right)$ &  & $\left(\left(0,\frac{2}{1-\tau}\right),\left(0,\frac{2}{1-\tau}\right)\right),$\medskip\tabularnewline
\multicolumn{5}{c}{$\left(\left(\frac{2\left(\nu-\tau\right)}{\left(1-\tau\right)-\left(1-\mu\right)\left(1-\nu\right)},\frac{2\mu}{\left(1-\tau\right)-\left(1-\mu\right)\left(1-\nu\right)}\right),\left(\frac{2\left(\mu-\tau\right)}{\left(1-\tau\right)-\left(1-\mu\right)\left(1-\nu\right)},\frac{2\nu}{\left(1-\tau\right)-\left(1-\mu\right)\left(1-\nu\right)}\right)\right).$}\tabularnewline
\end{tabular}\medskip{}

Therefore the fixed points of $V$ are \smallskip{}

\begin{tabular}{ccccccc}
$\left(\left(\frac{1}{2},0\right),\left(\frac{1}{2},0\right)\right)$, &  & $\left(\left(\frac{1}{2},0\right),\left(0,\frac{1}{2}\right)\right)$, &  & $\left(\left(0,\frac{1}{2}\right),\left(\frac{1}{2},0\right)\right)$ &  & $\left(\left(0,\frac{1}{2}\right),\left(0,\frac{1}{2}\right)\right),$\medskip\tabularnewline
\multicolumn{7}{c}{$\left(\left(\frac{\nu-\tau}{2\left(\mu+\nu-\tau\right)},\frac{\mu}{2\left(\mu+\nu-\tau\right)}\right),\left(\frac{\mu-\tau}{2\left(\mu+\nu-\tau\right)},\frac{\nu}{2\left(\mu+\nu-\tau\right)}\right)\right).$}\tabularnewline
\end{tabular}

\medskip{}
\end{example}

The various stability notions of the equilibrium points for $W$
are preserved for $V$. 
\begin{prop}
\label{prop:Notions_Equilibrium_pts}Let $z^{*}$ be a non-negative
and normalizable fixed point of $W$.

a) If $z^{*}$ is periodic with least period $p$ then $\frac{1}{\varpi\left(z^{*}\right)}z^{*}$
is a periodic equilibrium point of the operator $V$ with least
period $p$ .

b) If $z^{*}$ is attracting then $\frac{1}{\varpi\left(z^{*}\right)}z^{*}$
is an attracting equilibrium point of $V$.

c) If $z^{*}$ is stable (resp. uniformly stable) then $\frac{1}{\varpi\left(z^{*}\right)}z^{*}$
is a stable (resp. uniformly stable) equilibrium point of $V$.

d) If $z^{*}$ is asymptotically stable then the fixed point
$\frac{1}{\varpi\left(z^{*}\right)}z^{*}$ of $V$ is asymptotically
stable.

e) If $z^{*}$ is exponentially stable then the fixed point $\frac{1}{\varpi\left(z^{*}\right)}z^{*}$
of $V$ is exponentially stable. 
\end{prop}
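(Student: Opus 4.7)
The plan is to exploit Proposition~\ref{prop:V^t}, which expresses $V^{t}$ as the normalization of $W^{t}$, together with the scale-invariance $V^{t}(\lambda z)=V^{t}(z)$ for every nonzero $\lambda$. Writing $u^{*}=\varpi(z^{*})^{-1}z^{*}$ and $\pi(w)=w/\varpi(w)$, the key identity I will use repeatedly is
\[
V^{t}(u)=V^{t}\bigl(\varpi(z^{*})u\bigr)=\pi\bigl(W^{t}(\varpi(z^{*})u)\bigr),\qquad u\in S^{\,n,\nu}.
\]
The map $\pi$ is well defined and smooth on the open set $\{w:\varpi(w)\neq 0\}$, which contains $z^{*}$, so it is locally Lipschitz on some neighbourhood $U$ of $z^{*}$ with constant $L$. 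This identity, plus the homogeneity $W^{t}(\lambda v)=\lambda^{2^{t}}W^{t}(v)$ coming from $W$ being quadratic, will be essentially the only technical tools.

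For part (a), I would first note that $V^{p}(u^{*})=\pi(W^{p}(z^{*}))=\pi(z^{*})=u^{*}$, so the least $V$-period $m$ of $u^{*}$ divides $p$; write $p=qm$. From $V^{m}(u^{*})=u^{*}$ and the identity above, one gets $W^{m}(z^{*})=\beta z^{*}$ for some $\beta>0$. Iterating and using homogeneity yields $W^{qm}(z^{*})=\beta^{1+2^{m}+2^{2m}+\cdots+2^{(q-1)m}}z^{*}$, and since $W^{p}(z^{*})=z^{*}$ this forces $\beta=1$. Hence $W^{m}(z^{*})=z^{*}$, and minimality of $p$ as the period of $z^{*}$ under $W$ forces $m=p$.

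For parts (b)--(e) the strategy is uniform: any property controlling $\|W^{t}(z)-z^{*}\|$ for $z$ near $z^{*}$ transfers, via the local Lipschitz constant $L$ of $\pi$ near $z^{*}$, to a corresponding control of $\|V^{t}(u)-u^{*}\|$ for $u$ near $u^{*}$ in $S^{\,n,\nu}$, using $\|\varpi(z^{*})u-z^{*}\|=\varpi(z^{*})\|u-u^{*}\|$. For (b) I combine convergence $W^{t}(\varpi(z^{*})u)\to z^{*}$ with continuity of $\pi$ to obtain $V^{t}(u)\to u^{*}$. For (c), the inequality $\|V^{t}(u)-u^{*}\|\le L\|W^{t}(\varpi(z^{*})u)-z^{*}\|$ transfers any $\delta$--$\varepsilon$ estimate, and uniform stability is obtained identically since the system is autonomous. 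Part (d) is just (b)+(c). Finally for (e), the same inequality combined with the assumed exponential decay $\|W^{t}(z)-z^{*}\|\le Me^{-\alpha t}\|z-z^{*}\|$ yields $\|V^{t}(u)-u^{*}\|\le LM\,\varpi(z^{*})\,e^{-\alpha t}\|u-u^{*}\|$.

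The main obstacle I anticipate is the minimality statement in (a): the scaling factor $\beta$ that arises when passing from $V^{m}(u^{*})=u^{*}$ back to an equation for $W^{m}(z^{*})$ must be shown to equal $1$, and doing so cleanly requires the explicit degree-$2^{t}$ homogeneity of $W^{t}$ together with the geometric-sum identity in the exponent. Once this algebraic point is settled, parts (b)--(e) are essentially routine transfers of analytic estimates along the locally Lipschitz projection $\pi$.
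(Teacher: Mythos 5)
Your proposal is correct and follows essentially the same route as the paper: part (a) is the identical homogeneity-plus-geometric-sum argument forcing the scaling factor $\beta$ to equal $1$, and parts (b)--(e) transfer the $W$-estimates to $V$ through the normalization map, for which the paper's explicit two-term triangle-inequality bound on $\left\Vert V^{t}(z)-V(z^{*})\right\Vert$ is exactly the local Lipschitz estimate for $\pi$ near $z^{*}$ that you invoke. No substantive difference or gap.
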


\begin{proof}
a) For any integer $t\geq0$, from proposition \ref{prop:W(R+,R+)}
we deduce that $W^{t}\left(z^{*}\right)$ is non negative. If
$z^{*}$ is periodic there is a smaller integer $p$ such that
$W^{p}\left(z^{*}\right)=z^{*}$ it follows that $\varpi\circ W^{p}\left(z^{*}\right)=\varpi\left(z^{*}\right)\neq0$
. Therefore $W^{p}\left(z^{*}\right)$ is non-negative and normalizable.
Using proposition \ref{prop:V^t} we have $V^{p}\left(\frac{1}{\varpi\left(z^{*}\right)}z^{*}\right)=V^{p}\left(z^{*}\right)=\frac{1}{\varpi\circ W^{p}\left(z^{*}\right)}W^{p}\left(z^{*}\right)=\frac{1}{\varpi\left(z^{*}\right)}z^{*}$.
Now let us show that $p$ is the smallest integer verifying this
relation. If it exists $1\leq m<p$ such that $V^{m}\left(\frac{1}{\varpi\left(z^{*}\right)}z^{*}\right)=\frac{1}{\varpi\left(z^{*}\right)}z^{*}$,
we know that $m$ divides $p$, let $p=mq$ with $q\geq2$, according
to proposition \ref{prop:V^t} we have $V^{m}\left(\frac{1}{\varpi\left(z^{*}\right)}z^{*}\right)=V^{m}\left(z^{*}\right)=\frac{W^{m}\left(z^{*}\right)}{\varpi\circ W^{m}\left(z^{*}\right)}$,
thus $W^{m}\left(z^{*}\right)=\frac{\varpi\circ W^{m}\left(z^{*}\right)}{\varpi\left(z^{*}\right)}z^{*}$.
We get 
\begin{align*}
z^{*} & =W^{p}\left(z^{*}\right)=W^{mq}\left(z^{*}\right)=W^{m\left(q-1\right)}\left(\frac{\varpi\circ W^{m}\left(z^{*}\right)}{\varpi\left(z^{*}\right)}z^{*}\right)=\left(\frac{\varpi\circ W^{m}\left(z^{*}\right)}{\varpi\left(z^{*}\right)}\right)^{2^{m\left(q-1\right)}}z^{*}
\end{align*}
but $z^{*}\neq0$ and $\frac{\varpi\circ W^{m}\left(z^{*}\right)}{\varpi\left(z^{*}\right)}\in\mathbb{K}_{+}$
therefore $\frac{\varpi\circ W^{m}\left(z^{*}\right)}{\varpi\left(z^{*}\right)}=1$
and thus we get $W^{m}\left(z^{*}\right)=z^{*}$ with $m<p$,
contradiction.

\medskip{}

From now the space $\mathbb{K}^{n}\times\mathbb{K}^{m}$ is equipped
with the norm $\left\Vert \left(\left(x_{1},\ldots,x_{n}\right),\left(y_{1},\ldots,y_{m}\right)\right)\right\Vert =\sum_{i=1}^{n}\left|x_{i}\right|+\sum_{i=1}^{m}\left|y_{i}\right|$,
for this norm we have $\left\Vert z\right\Vert =\varpi\left(z\right)$
for any $z\in\mathbb{K}_{+}^{n}\times\mathbb{K}_{+}^{m}$.\medskip{}

\textit{b}) If $z^{*}$ is an attractive point of $W$, there
is $\rho>0$ such that for all $z\in\mathbb{K}^{n}\times\mathbb{K}^{m}$
verifying $\left\Vert z-z^{*}\right\Vert <\rho$ we have $\lim_{t\rightarrow\infty}W^{t}\left(z\right)=z^{*}$.
First, as $z^{*}\neq0$ is normalizable we have $\varpi\left(z^{*}\right)\neq0$
and by continuity of $\varpi$ we get $\lim_{t\rightarrow\infty}\varpi\circ W^{t}\left(z\right)=\varpi\left(z^{*}\right)$.
Next, for all $z\in\mathbb{K}^{n}\times\mathbb{K}^{m}$ such
that $\lim_{t\rightarrow\infty}W^{t}\left(z\right)=z^{*}$ we
have $W^{t}\left(z\right)\neq0$ for every $t\geq0$, otherwise
according to the result \emph{a}) of proposition \ref{prop:W^(t)},
it would exist $t_{0}\geq1$ such that $W^{t}\left(z\right)=0$
for $t\geq t_{0}$ and consequently $\lim_{t\rightarrow\infty}W^{t}\left(z\right)=0$,
we deduce that, in particular if $z\in S^{\:n,m}$ we get $\varpi\circ W^{t}\left(z\right)\neq0$.
Finally, for any $z\in S^{\:n,m}$ such that $\left\Vert z-z^{*}\right\Vert <\rho$
we get $\lim_{t\rightarrow\infty}V^{t}\left(z\right)=\lim_{t\rightarrow\infty}\frac{1}{\varpi\circ W^{t}\left(z\right)}W^{t}\left(z\right)=\frac{1}{\varpi\left(z^{*}\right)}z^{*}$.\medskip{}

\emph{c}) By definition, the equilibrium point $z^{*}$ for $W$
is stable if for all $t_{0}\geq0$ and $\epsilon>0$, there exists
$\delta>0$ such that the condition $\left\Vert z-z^{*}\right\Vert <\delta$
implies $\left\Vert W^{t}\left(z\right)-z^{*}\right\Vert <\varepsilon\;\left(t\geq t_{0}\right)$,
and $z^{*}$ is uniformly stable if the existence of $\delta>0$
does not depend on $t_{0}$.

We deduce from proposition \ref{prop:max-woW} that $\varpi\left(z^{*}\right)\geq\frac{4}{\max_{i,j}\left\{ \varpi\left(e_{i}\widetilde{e}_{j}\right)\right\} }$,
in what follows we take $0<\varepsilon<\varpi\left(z^{*}\right)-\frac{2}{\max_{i,j}\left\{ \varpi\left(e_{i}\widetilde{e}_{j}\right)\right\} }$.
For all $z\in S^{\:n,m}$ we get: 
\[
\left\Vert V^{t}\left(z\right)-V\left(z^{*}\right)\right\Vert \leq\left\Vert \tfrac{1}{\varpi\circ W^{t}\left(z\right)}W^{t}\left(z\right)-\tfrac{1}{\varpi\circ W^{t}\left(z\right)}z^{*}\right\Vert +\left\Vert \tfrac{1}{\varpi\circ W^{t}\left(z\right)}z^{*}-\tfrac{1}{\varpi\left(z^{*}\right)}z^{*}\right\Vert 
\]
or 
\begin{equation}
\left\Vert V^{t}\left(z\right)-V\left(z^{*}\right)\right\Vert \leq\tfrac{1}{\varpi\circ W^{t}\left(z\right)}\left\Vert W^{t}\left(z\right)-z^{*}\right\Vert +\left|\tfrac{\varpi\circ W^{t}\left(z\right)-\varpi\left(z^{*}\right)}{\varpi\circ W^{t}\left(z\right)\times\varpi\left(z^{*}\right)}\right|\left\Vert z^{*}\right\Vert .\label{eq:inegal1}
\end{equation}
If we denote $W^{t}\left(z\right)=\bigl(z_{i}^{\:\left(t\right)}\bigr)_{1\leq i\leq n+m}$
and $z^{*}=\left(z_{i}^{*}\right)_{1\leq i\leq n+m}$, we note
that 
\[
\bigl|\varpi\circ W^{t}\left(z\right)-\varpi\left(z^{*}\right)\bigr|\leq\sum_{i=1}^{n+m}\bigl|z_{i}^{\:\left(t\right)}-z_{i}^{*}\bigr|=\left\Vert W^{t}\left(z\right)-z^{*}\right\Vert ,
\]
from this and the hypothesis, we deduce that for all $z\in S^{\:n,m}$
such that $\left\Vert z-z^{*}\right\Vert <\delta$ we have $0<\varpi\left(z^{*}\right)-\varepsilon\leq\varpi\circ W^{t}\left(z\right)$,
with this and $\left\Vert z^{*}\right\Vert =\varpi\left(z^{*}\right)$
the inequality (\ref{eq:inegal1}) becomes 
\[
\left\Vert V^{t}\left(z\right)-V\left(z^{*}\right)\right\Vert \leq\tfrac{2\varepsilon}{\varpi\left(z^{*}\right)-\varepsilon}<\varepsilon\times\max_{i,j}\left\{ \varpi\left(e_{i}\widetilde{e}_{j}\right)\right\} 
\]
the result follows.

\medskip{}

\emph{d}) If $z^{*}$ is asymptotically stable for $W$, by definition
$z^{*}$ is attractive and stable for $W$ but from \emph{b})
and \emph{c}) it follows that $z^{*}$ is attractive and stable
for $V$, thus $z^{*}$ is asymptotically stable for $V$.

\medskip{}

\emph{e}) By definition, the equilibrium point $z^{*}$ of $W$
is exponentially stable if for all $t_{0}\geq0$ there exists
$\delta>0$, $M>0$ and $\eta\in\left]0,1\right[$ such that
for $z\in\mathbb{K}_{+}^{n}\times\mathbb{K}_{+}^{m}$ : 
\[
\left\Vert z-z^{*}\right\Vert \leq\delta\Rightarrow\left\Vert W^{t}\left(z\right)-z^{*}\right\Vert \leq M\eta^{t}\left\Vert z-z^{*}\right\Vert ,\;\mbox{for all }t\geq t_{0}.
\]
Analogously to what was done in \emph{c}), for all $x\in S^{\:n,m}$
we have the inequality: 
\begin{equation}
\left\Vert V^{t}\left(z\right)-V\left(z^{*}\right)\right\Vert \leq\tfrac{1}{\varpi\circ W^{t}\left(z\right)}\left\Vert W^{t}\left(z\right)-z^{*}\right\Vert +\left|\tfrac{\varpi\circ W^{t}\left(z\right)-\varpi\left(z^{*}\right)}{\varpi\circ W^{t}\left(z\right)\times\varpi\left(z^{*}\right)}\right|\left\Vert z^{*}\right\Vert .\label{eq:inegal2}
\end{equation}

As in \emph{c}) we have $\bigl|\varpi\circ W^{t}\left(z\right)-\varpi\left(z^{*}\right)\bigr|\leq\left\Vert W^{t}\left(z\right)-z^{*}\right\Vert $,
we deduce that for all $z\in S^{\:n,m}$ verifying $\left\Vert z-z^{*}\right\Vert \leq\delta$
we get $\varpi\left(z^{*}\right)-M\eta^{t}\left\Vert z-z^{*}\right\Vert \leq\varpi\circ W^{t}\left(z\right)$.
But $\eta\in\left]0,1\right[$, thus there exists $t_{1}\geq t_{0}$
such that $\frac{4}{\max_{i,j}\left\{ \varpi\left(e_{i}\widetilde{e}_{j}\right)\right\} }-M\eta^{t}\left\Vert z-z^{*}\right\Vert \geq\frac{2}{\max_{i,j}\left\{ \varpi\left(e_{i}\widetilde{e}_{j}\right)\right\} }$
for all $t\geq t_{1}$, but according to the result \emph{a})
of proposition \ref{prop:max-woW} we have $\varpi\left(z^{*}\right)\geq\frac{4}{\max_{i,j}\left\{ \varpi\left(e_{i}\widetilde{e}_{j}\right)\right\} }$,
thus for all $z\in S^{\:n,m}$ such that $\left\Vert z-z^{*}\right\Vert \leq\delta$
and for every $t\geq t_{1}$ we have 
\[
0<\frac{2}{\max_{i,j}\left\{ \varpi\left(e_{i}\widetilde{e}_{j}\right)\right\} }\leq\varpi\left(z^{*}\right)-M\eta^{t}\left\Vert z-z^{*}\right\Vert \leq\varpi\circ W^{t}\left(z\right)
\]
with this and $\left\Vert z^{*}\right\Vert =\varpi\left(z^{*}\right)$,
inequality (\ref{eq:inegal2}) becomes 
\[
\left\Vert V^{t}\left(z\right)-V\left(z^{*}\right)\right\Vert \leq\frac{2M\eta^{t}\left\Vert z-z^{*}\right\Vert }{\varpi\left(z^{*}\right)-M\eta^{t}\left\Vert z-z^{*}\right\Vert }\leq M\max_{i,j}\left\{ \varpi\left(e_{i}\widetilde{e}_{j}\right)\right\} \times\eta^{t}\left\Vert z-z^{*}\right\Vert ,\mbox{ for all }t\geq t_{1},
\]
which proves that $z^{*}$ is an exponentially stable point for
$V$. 
\end{proof}
\medskip{}

{\footnotesize\textsc{{}R.\ Varro\smallskip{}
 }}{\footnotesize\par}

{\footnotesize\textsc{{}Institut Montpelliérain Alexander Grothendieck,
Université de Montpellier, }}\\
 {\footnotesize\textsc{35095 Montpellier Cedex 5, France.}}{\footnotesize\par}

{\footnotesize\textsc{richard.varro@umontpellier.fr\smallskip{}
 }}{\footnotesize\par}

{\footnotesize\textsc{Université Montpellier Paul Valéry , Route
de Mende }}\\
 {\footnotesize\textsc{34199 Montpellier cedex 5, France}}{\footnotesize\par}

{\footnotesize\textsc{{}richard.varro@univ-montp3.fr}}{\footnotesize\par}


\begin{thebibliography}{10}
\bibitem{Bla)25} K. Black, S. Ølgaard, AA Khoei, C. Glazer,
DA Ohl, CFS Jensen. \emph{The Genetic Landscape of Male Factor
Infertility and Implications for Men’s Health and Future Generations}.
Uro. \textbf{5(1)}:2 (2025).

\bibitem{B-J-C-al} V. Boraska, A. Jeron\v{c}ic', V. Colonna,
L. Southam, D.R. Nyholt, N.W. Rayner, and al. \textit{Genome-wide
meta-analysis of common variant differences between men and women}.
Hum Mol Genet. \textbf{21}: 4805--4815 (2012).

\bibitem{Dai-26} Z. Dai, G. Costain, \emph{A systematic review
and critical analysis of the evidence for transmission ratio
distortion in humans}, Genetics, 2026

\bibitem{Ether-40} I.M.H. Etherington. \textit{Genetic algebras}.
Proc. Roy. Soc. Edinburgh. \textbf{59} : 242--258 (1940).

\bibitem{Ether-41a} I.M.H. Etherington, \textit{Duplication
of linear algebras}, Proc. Edinburgh Math. Soc. \textbf{6} :
222-230 (1941).

\bibitem{Ether-41}I.M.H. Etherington. \textsl{Non associative
algebra}\emph{ and the symbolism of genetics}. Proc. Roy. Soc.
Edinburgh. \textbf{61} : 24--42 (1941).

\bibitem{Gonsh-60}H. Gonshor. \emph{Special train algebra arising
in genetics}. Proc. Edinburgh Math. Soc. (1) \textbf{12} : 41--53
(1960).

\bibitem{Gonsh-65}H. Gonshor. \emph{Special train algebra arising
in genetics II}. Proc. Edinburgh Math. Soc. \textbf{14} (4) :
333--338 (1965).

\bibitem{Gonsh-73} H. Gonshor. \emph{Contributions to genetic
algebra II}. Proc. Edinburgh Math. Soc. \textbf{18} (4) : 273--279
(1973).

\bibitem{Holg-70}P. Holgate. \textsl{Genetic algebra associated
with sex linkage}. Proc. Edinburgh Math. Soc. \textbf{17} : 113--120
(1970).

\bibitem{Juv-25} Y. Juvé, C. Lutrat, A. Ha et al. \textit{One
mother for two species via obligate cross-species cloning in
ants}. Nature \textbf{646} : 372--377 (2025).

\bibitem{LR-10}M. Ladra and U. A. Rozikov. \emph{Evolution algebra
of a bisexual population}. J. Algebra \textbf{378} : 153--172
(2013).

\bibitem{Luk-20} VA. Lukhtanov, V. Dinc\u{a}, M. Friberg, R.
Vila and C. Wiklund. \emph{Incomplete Sterility of Chromosomal
Hybrids: Implications for Karyotype Evolution and Homoploid Hybrid
Speciation}. Front. Genet. \textbf{11}:583827. (2020)

\bibitem{Lyub-92} Y.I. Lyubich, Mathematical structures in population
genetics, Springer-Verlag, Berlin, 1992.

\bibitem{Roz-Va} U.A. Rozikov, R. Varro. \emph{Dynamical systems
generated by a gonosomal algebra}. Discontinuity, Nonlinearity,
and Complexity \textbf{5} (2): 175--187 (2016);

\bibitem{Roz-Sho-Va} U.A. Rozikov, S.K. Shoyimardonov, R. Varro.
Gonosomal algebras and associated discrete-time dynamical systems.
Journal of Algebra \textbf{638}: 153--188 (2024).

\bibitem{Tur-22} M. Turelli, A. Katznelson, \& P.S. Ginsberg,
\emph{Why Wolbachia-induced cytoplasmic incompatibility is so
common}, Proc. Natl. Acad. Sci. U.S.A. 119 (47) e2211637119 (2022).

\bibitem{RV-15} R. Varro, \emph{Gonosomal algebra}. Journal
of Algebra \textbf{447}: 1--30 (2015).

\bibitem{Schafer} R. D. Schafer. An introduction to nonassociative
algebras. Corrected reprint of the 1966 original. Dover Publications,
Inc., New York, 1995.

\bibitem{WB-74}A. Wörz-Busekros. \emph{The zygotic algebra for
sex linkage}. J. Math. Biol. \textbf{1 }: 37--46 (1974).

\bibitem{WB-75}A. Wörz-Busekros. \emph{The zygotic algebra for
sex linkage} \emph{II}. J. Math. Biol. \textbf{2 }: 359--371
(1975).

\bibitem{WB-80}A. Wörz-Busekros. ``Algebras in Genetics''.
Lecture Notes in Biomathematics, \textbf{36}. Springer-Verlag,
New York, 1980.

\end{thebibliography}
\end{document}